\newcommand{\consts}{\ensuremath{\mathbf{C}}}
\newcommand{\nulls}{\ensuremath{\mathbf{N}}}
\newcommand{\vars}{\ensuremath{\mathbf{V}}}
\newcommand{\inst}[1]{\ensuremath{\fk{#1}}}
\newcommand{\class}[1]{\ensuremath{\mathsf{#1}}}
\newcommand{\ont}[1]{\ensuremath{\mathcal{#1}}}
\newcommand{\sig}[1]{\ensuremath{\mathsf{sig}(#1)}}
\newcommand{\var}[1]{\ensuremath{\mathsf{var}(#1)}}
\newcommand{\forew}[2]{\ensuremath{\mathsf{FORew(#1,#2)}}}
\newcommand{\width}[1]{\ensuremath{\mathsf{wd}(#1)}}
\newcommand{\threeexp}{\textsc{3ExpTime}}
\newcommand{\dec}[1]{\ensuremath{\llbracket #1 \rrbracket}}
\newcommand{\height}[1]{\ensuremath{\mathrm{hgt}(#1)}}
\newcommand{\tw}[1]{\ensuremath{\mathsf{tw}(#1)}}
\newcommand{\dist}{\ensuremath{\mathrm{dist}}}
\newcommand{\parity}{\ensuremath{\mathrm{parity}}}
\newcommand{\names}[1]{\ensuremath{\mathrm{names}(#1)}}
\newcommand{\dir}{\ensuremath{\mathrm{Dir}}}
\newcommand{\ndia}[1]{\ensuremath{\langle #1 \rangle}}
\newcommand{\nbox}[1]{\ensuremath{[#1]}}
\newcommand{\translation}[2]{\ensuremath{\eta_{#2}(#1)}}
\newcommand{\limpl}{\ensuremath{\rightarrow}}
\newcommand{\adom}[1]{\ensuremath{\mathsf{adom}(#1)}}
\newcommand{\dom}[1]{\ensuremath{\mathsf{dom}(#1)}}
\newcommand{\free}[1]{\ensuremath{\mathrm{free}(#1)}}
\newcommand{\val}[1]{\ensuremath{\llbracket #1 \rrbracket}}
\newcommand{\sche}[1]{\ensuremath{\mathbf{#1}}}
\newcommand{\tup}[1]{\ensuremath{{(#1)}}}
\newcommand{\set}[1]{\ensuremath{\{#1\}}}
\newcommand{\db}[1]{\ensuremath{\fk{#1}}}
\newcommand{\ca}[1]{\ensuremath{\mathcal{#1}}}
\newcommand{\fk}[1]{\ensuremath{\mathfrak{#1}}}
\newcommand{\ve}[1]{\ensuremath{\bar{#1}}}
\renewcommand{\land}{\ensuremath{\wedge}}
\newcommand{\mbb}[1]{\ensuremath{\mathbb{#1}}}
\newcommand{\discnode}{\ensuremath{\sharp}}
\newcommand{\twoexp}{\textsc{2ExpTime}}
\newcommand{\expo}{\textsc{ExpTime}}
\newcommand{\size}[1]{\ensuremath{\lVert #1 \rVert}}
\newcommand{\cost}[2]{\ensuremath{\mathrm{cost}(#1,#2)}}
\newcommand{\costq}[1]{\ensuremath{\mathrm{cost}(#1)}}
\newcommand{\act}{\ensuremath{\mathrm{Act}}}
\newcommand{\obj}{\ensuremath{\mathrm{Obj}}}
\newcommand{\goal}{\ensuremath{\mathrm{goal}}}
\newcommand{\ata}[1]{\ensuremath{\ca{#1}}}
\newcommand{\ac}[1]{\ensuremath{\mathtt{#1}}}
\def\markfull{\vrule height 4pt width 4pt depth 0pt}
\title{First-Order Rewritability of Frontier-Guarded Ontology-Mediated Queries}
\author{
Pablo Barcel\'o$^1$,
Gerald Berger$^2$,
Carsten Lutz$^3$ and
Andreas Pieris$^4$\\
$^1$ Millennium Institute for Foundational Research on Data \& DCC, University of Chile\\
$^2$ Institute of Logic and Computation, TU Wien\\
$^3$ Department of Mathematics and Computer Science, University of Bremen\\
$^4$ School of Informatics, University of Edinburgh
}
\declaretheoremstyle[
  headfont=\bfseries,
  notefont=\bfseries, notebraces={\bfseries (}{\bfseries )},
  bodyfont=\itshape,
  postheadspace=.5em,
  spaceabove=6pt,
  spacebelow=6pt,
  headpunct={.}
]{thstyle}
\declaretheoremstyle[
  headfont=\bfseries,
  notefont=\normalfont, notebraces={\bfseries (}{\bfseries )},
  bodyfont=\normalfont,
  spaceabove=6pt,
  spacebelow=6pt,
  postheadspace=.5em,
  headpunct={.}
]{defstyle}
\declaretheoremstyle[
  headfont=\itshape,
  notefont=\normalfont, notebraces={(}{)},
  bodyfont=\normalfont,
  spaceabove=6pt,
  spacebelow=9pt,
  postheadspace=.5em,
  headpunct={.}
]{remarkstyle}
\declaretheoremstyle[
   preheadhook=\renewcommand\@upn{},
   name=Example,
   spaceabove=6pt, spacebelow=6pt,
   headindent=0pt, postheadspace=1ex,
   headfont=\itshape,notefont=\itshape, headpunct=.,
   bodyfont=\normalfont
]{example}
\declaretheorem[style=example,refname={example,examples},Refname={Example,Examples}]{example}
\declaretheoremstyle[
  headfont=\itshape,
  notefont=\itshape, notebraces={(}{)},
  bodyfont=\normalfont,
  spaceabove=6pt,
  spacebelow=9pt,
  postheadspace=\labelsep,
  headpunct={.}
]{proofstyle}
\declaretheorem[style=proofstyle,unnumbered,qed=$\square$]{proof}
\declaretheorem[style=remarkstyle,numbered=no]{remark}
\declaretheorem[style=thstyle,refname={theorem,theorems},Refname={Theorem,Theorems}]{theorem}
\declaretheorem[name=Lemma,style=thstyle,sibling=theorem,refname={lemma,lemmas},Refname={Lemma,Lemmas}]{lemma}
\declaretheorem[style=thstyle,sibling=theorem,refname={corollary,corollaries},Refname={Corollary,Corollaries}]{corollary}
\declaretheorem[style=thstyle,numbered=no,refname={fact,facts},Refname={Claim,Claims}]{claim}
\newif\iflong
\begin{document}
\maketitle

\begin{abstract}
We focus on ontology-mediated queries (OMQs) based on (frontier-)guarded existential rules and (unions of) conjunctive queries, and we investigate the problem of FO-rewritability, i.e., whether an OMQ can be rewritten as a first-order query. We adopt two different approaches.
The first approach employs standard two-way alternating parity tree automata. Although it does not lead to a tight complexity bound, it provides a transparent solution based on widely known tools.
The second approach relies on a sophisticated automata model, known as cost automata. This allows us to show that our problem is \twoexp-complete.
In both approaches, we provide semantic characterizations of FO-rewritability that are of independent interest.
\end{abstract}

\section{Introduction}\label{sec:introduction}

{\em Ontology-based data access} (OBDA) is a successful application of KRR technologies in information management systems~\cite{PLCG+08}. One premier goal is to facilitate access to data that is heterogeneous and incomplete. This is achieved via an ontology that enriches the user query, typically a union of conjunctive queries, with domain knowledge.
It turned out that the ontology and the user query can be seen as two components of one composite query, called {\em ontology-mediated query} (OMQ)~\cite{BCLW14}. The problem of answering OMQs is thus central to OBDA.

Building ontology-aware database systems from scratch, with sophisticated optimization techniques, is a non-trivial task that requires a great effort. An important route towards practical implementation of OMQ answering is thus to use conventional database management systems. The problem that such systems are unaware of ontologies can be addressed by query rewriting: the ontology $\ont{O}$ and the database query $q$ are combined into a new query $q_{\ont{O}}$, the so-called rewriting, which gives the same answer as the OMQ consisting of $\ont{O}$ and $q$ over all input databases. It is of course essential that the rewriting $q_{\ont{O}}$ is expressed in a language that can be handled by standard database systems. The typical language that is considered in this setting is first-order (FO) queries.

Although in the OMQ setting description logics (DLs) are often used for modeling ontologies, it is widely accepted that for handling arbitrary arity relations in relational databases it is convenient to use {\em tuple-generating dependencies} (TGDs), a.k.a.~{\em existential rules} or {\em Datalog$^{\pm}$ rules}.
It is known, however, that evaluation of rule-based OMQs is undecidable~\cite{CaGK13}. This has led to a flurry of activity for identifying restrictions on TGDs that lead to decidability.
The main decidable classes are (i) {\em (frontier-)guarded} TGDs~\cite{BLMS11,CaGK13}, which includes {\em linear} TGDs~\cite{CaGL12}, (ii) {\em acyclic} sets of TGDs~\cite{FKMP05}, and (iii) {\em sticky} sets of TGDs~\cite{CaGP12}.
There are also extensions that capture Datalog; see
the same references.

For OMQs based on linearity, acyclicity, and stickiness, FO-rewritings are always guaranteed to exist~\cite{GoOP14}. In contrast, there are (frontier-)guarded OMQs that are inherently recursive, and thus not expressible as a first-order query. This brings us to our main question: {\em Can we check whether a (frontier-)guarded OMQ is FO-rewritable?} Notice that for OMQs based on more expressive classes of TGDs that capture Datalog, the answer to the above question is negative, since checking whether a Datalog query is FO-rewritable is an undecidable problem. Actually, we know that a Datalog query is FO-rewritable iff it is bounded~\cite{AjGu94}, while the boundedness problem for Datalog is undecidable~\cite{GMSV93}.

The above question has been studied for OMQ languages based on Horn DLs, including $\mathcal{EL}$ and $\mathcal{ELI}$, which (up to a
certain normal form) are a special case of guarded TGDs~\cite{BiLW13,BiHLW16,LuSa17}.
More precisely, FO-rewritability is semantically characterized in terms of the existence of certain tree-shaped ABoxes, which in turn allows the authors to pinpoint the complexity of the problem by employing automata-based procedures.
As usual in the DL context, schemas consist only of unary and binary relations. However, in our setting we have to deal with relations of higher arity. This indicates that the techniques devised for checking the FO-rewritbility of DL-based OMQs cannot be directly applied to rule-based OMQs; this is further explained in Section~\ref{sec:semanticcharacterization}. Therefore, we develop new semantic characterizations and procedures that are significantly different from those for OMQs based on description logics.

Our analysis aims to develop specially tailored techniques that allow us to understand the problem of checking whether a (frontier-)guarded OMQ is FO-rewritable, and also to pinpoint its computational complexity. 
Our plan of attack and results can be summarized as follows:

$\blacktriangleright$ We first focus on the simpler OMQ language based on guarded TGDs and atomic queries, and, in Section~\ref{sec:semanticcharacterization}, we provide a characterization of FO-rewritability that forms the basis for applying
tree automata techniques.

$\blacktriangleright$ We then exploit, in Section~\ref{sec:firstapproach}, standard two-way alternating parity tree automata. In particular, we reduce our problem to the problem of checking the finiteness of the language of an automaton. The reduction relies on a refined version of the characterization of FO-rewritability established in Section~\ref{sec:semanticcharacterization}. This provides a transparent solution to our problem based on standard tools, but it does not lead to an optimal result.

$\blacktriangleright$ Towards an optimal result, we use, in Section~\ref{sec:secondapproach}, a more sophisticated automata model, known as cost automata. This allows us to show that FO-rewritability for OMQs based on guarded TGDs and atomic queries is in \twoexp, and in \expo~for predicates of bounded arity.
Our application of cost automata is quite transparent, which, as above, relies on a refined version of the characterization of FO-rewritability established in Section~\ref{sec:semanticcharacterization}. However, the complexity analysis relies on an intricate result on the boundedness problem for a certain class of cost automata from~\cite{BeCCB15}.

$\blacktriangleright$ Finally, in Section~\ref{sec:treeification}, by using the results of Section~\ref{sec:secondapproach}, we obtain our main results.
We show that FO-rewritability is 2{\sc ExpTime}-complete for OMQs based on guarded TGDs and on frontier-guarded TGDs, no matter whether the actual queries are conjunctive queries, unions thereof, or the simple atomic queries. This remains true when the arity of the predicates is bounded by a constant, with the exception of guarded TGDs and atomic queries, for which the complexity
then drops to {\sc ExpTime}-complete.

In principle, the procedure based on tree automata also provides concrete FO-rewritings when they exist, but it is not tailored towards doing this in an efficient way. Efficiently constructing rewritings is beyond the scope of this work.

\section{Preliminaries}\label{sec:preliminaries}

\noindent
\textbf{Basics.} Let $\consts$, $\nulls$, and
$\vars$ be disjoint, countably infinite sets of \emph{constants},
\emph{(labeled) nulls}, and (regular) \emph{variables}, respectively. A \emph{schema} $\sche{S}$ is a finite set of relation symbols. The \emph{width} of $\sche{S}$, denoted $\width{\sche{S}}$, is the maximum arity among all relation symbols of $\sche{S}$. We write $R/n$ to denote that the relation symbol $R$ has arity $n \geq 0$.  A \emph{term} is either a constant, null, or variable. An \emph{atom} over $\sche{S}$ is an expression of the form $R(\ve{v})$, where $R \in \sche{S}$ is of arity $n \geq 0$ and $\ve{v}$ is an $n$-tuple of terms. A \emph{fact} is an atom whose arguments are constants.

\medskip
\noindent
\textbf{Databases.} An \emph{$\sche{S}$-instance} is a (possibly infinite) set of atoms over the schema $\sche{S}$ that contain only constants and nulls, while an \emph{$\sche{S}$-database} is a finite set of facts over $\sche{S}$. The \emph{active domain} of an instance $\inst{J}$, denoted $\adom{\inst{J}}$, consists of all terms occurring in $\inst{J}$. For $X \subseteq \adom{\inst{J}}$, we denote by $\inst{J}[X]$ the \emph{subinstance of $\inst{J}$ induced by $X$}, i.e., the set of all facts $R(\ve{a})$ with $\ve{a} \subseteq X$.
A \emph{tree decomposition} of an instance $\inst{J}$ is a tuple $\delta = \tup{\ca{T}, \tup{X_t}_{t \in T}}$, where $\ca{T} = \tup{T, E}$ is a (directed) tree with nodes $T$ and edges $E$, and $\tup{X_t}_{t \in T}$ is a collection of subsets of $\adom{\inst{J}}$, called \emph{bags}, such that
\begin{enumerate*}[label={(\roman*)}]
\item if $R(\ve{a}) \in \inst{J}$, then there is $v \in T$ such that $\ve{a} \subseteq X_v$, and

\item for all $a \in \adom{\inst{J}}$, the set
  $\set{v \in T \mid a \in X_v}$ induces a connected subtree of
  $\ca{T}$.
\end{enumerate*}
The \emph{width} of $\delta$ is the maximum size among all bags $X_v$ ($v \in T$) minus one. The \emph{tree-width} of $\inst{J}$, denoted $\tw{\inst{J}}$, is $\min\set{n \mid \text{there is a tree decomposition of width $n$ of \inst{J}}}$.

\medskip
\noindent
\textbf{Conjunctive queries.}
A \emph{conjunctive query} (CQ) over $\sche{S}$ is a first-order formula
of the form $q(\ve{x}) = \exists \ve{y}\, \varphi(\ve{x},\ve{y})$, where $\ve{x}$ and $\ve{y}$ are tuples of variables, and $\varphi$ is a conjunction of atoms $R_1(\ve{v}_1) \land \cdots \land R_m(\ve{v}_m)$ over $\sche{S}$ that mention variables from $\ve{x} \cup \ve{y}$ only. The variables $\ve{x}$ are the \emph{answer variables} of $q(\ve{x})$. If $\ve{x}$ is empty then $q$ is a \emph{Boolean CQ}. Let $\var{q}$ be the set of variables occurring in $q$.  As usual, the evaluation of CQs over instances is defined in terms of homomorphisms. A \emph{homomorphism} from $q$ to $\inst{J}$ is a mapping $h \colon \var{q} \rightarrow \adom{\inst{J}}$ such that $R_i(h(\ve{v}_i)) \in \inst{J}$ for each $1 \leq i \leq m$. We write $\inst{J} \models q(\ve{a})$ to indicate that there is such a homomorphism $h$ such that $h(\ve{x}) = \ve{a}$. The \emph{evaluation of $q(\ve{x})$ over $\inst{J}$}, denoted $q(\inst{J})$, is the set of all tuples $\ve{a}$ such that $\inst{J} \models q(\ve{a})$. A \emph{union of conjunctive queries} (UCQ) $q(\ve{x})$ over $\sche{S}$ is a disjunction $\bigvee_{i=1}^n q_i(\ve{x})$ of CQs over $\sche{S}$. The \emph{evaluation of $q(\ve{x})$ over $\inst{J}$}, denoted $q(\inst{J})$, is the set of tuples $\bigcup_{1 \leq i \leq n} q_i(\inst{J})$. We write $\inst{J} \models q(\ve{a})$ to indicate that $\inst{J} \models q_i(\ve{a})$ for some $1 \leq i \leq n$.
Let $\class{CQ}$ be the class of conjunctive queries, and $\class{UCQ}$ the class of UCQs. We also write $\class{AQ}_0$ for the class of {\em atomic queries} of the form $P()$, where $P$ is a $0$-ary predicate.

\medskip
\noindent
\textbf{Tuple-generating dependencies.} A \emph{tuple-generating
  dependency} (TGD) (a.k.a.~\emph{existential rule}) is a first-order
sentence of the form $\tau \colon \forall \ve{x},\ve{y}\,(\varphi(\ve{x},\ve{y}) \limpl \exists\ve{z}\,\psi(\ve{x},\ve{z}))$,
where $\varphi$ and $\psi$ are conjunctions of atoms that mention only
variables. For brevity, we write $\varphi(\ve{x},\ve{y}) \limpl \exists\ve{z}\,\psi(\ve{x},\ve{z})$, and use comma instead of $\land$ for conjoining atoms. We assume that each variable of $\ve{x}$ is mentioned in $\psi$. We call $\varphi$ and $\psi$ the \emph{body} and \emph{head}
of the TGD, respectively. The TGD $\tau$ is logically equivalent to the sentence $\forall \ve{x}\,(q_\varphi(\ve{x}) \limpl q_\psi(\ve{x}))$, where
$q_\varphi(\ve{x})$ and $q_\psi(\ve{x})$ are the CQs
$\exists \ve{y}\,\varphi(\ve{x},\ve{y})$ and $\exists\ve{z}\,\psi(\ve{x},\ve{z})$, respectively. Thus, an instance
$\inst{J}$ \emph{satisfies} $\tau$ if $q_\varphi(\inst{J}) \subseteq q_\psi(\inst{J})$. Also, $\inst{J}$ satisfies a set of TGDs $\ont{O}$, denoted $\inst{J} \models \ont{O}$, if $\inst{J}$ satisfies every $\tau \in \ont{O}$. Let $\class{TGD}$ be the class of finite sets of TGDs.

\medskip
\noindent
\textbf{Ontology-mediated queries.}
An \emph{ontology-mediated query} (OMQ) is a triple
$Q = \tup{\sche{S},\ont{O},q(\ve{x})}$, where $\sche{S}$ is a (non-empty) schema (the \emph{data schema}), $\ont{O}$ is a set of TGDs (the \emph{ontology}), and $q(\ve{x})$ is a UCQ over $\sche{S} \cup \sig{\ont{O}}$, where $\sig{\ont{O}}$ is the set of relation symbols in $\ont{O}$. Notice that the ontology $\ont{O}$ can introduce relations that are not in $\sche{S}$; this allows us to enrich the schema of $q(\ve{x})$. We include $\sche{S}$ in the specification of $Q$ to emphasize that $Q$ will be evaluated over $\sche{S}$-databases, even though $\ont{O}$ and $q(\ve{x})$ may use additional relation symbols.

The semantics of $Q$ is given in terms of certain answers. The
\emph{certain answers} to a UCQ $q(\ve{x})$ w.r.t.~an
$\sche{S}$-database $\db{D}$, and a set $\ont{O}$ of TGDs, is the set of
all tuples $\ve{a}$ of constants, where $|\ve{a}| = |\ve{x}|$, such that
$(\db{D}, \ont{O}) \models q(\ve{a})$, i.e., $\inst{J} \models
q(\ve{a})$ for every instance $\inst{J} \supseteq \db{D}$ that satisfies
$\ont{O}$. We write $\db{D} \models Q(\ve{a})$ if $\ve{a}$ is a certain
answer to $q$ w.r.t.~$\db{D}$ and $\ont{O}$. Moreover, we set $Q(\db{D})
= \set{\ve{a} \in \adom{\db{D}}^{|\ve{x}|} \mid \db{D} \models Q(\ve{a})}$.

\medskip
\noindent
\textbf{Ontology-mediated query languages.} We write $\tup{\class{C},\class{Q}}$ for the class of OMQs $\tup{\sche{S},\ont{O},q}$, where $\ont{O}$ falls in the class of TGDs $\class{C}$, and $q$ in the query language $\class{Q}$.
The evaluation problem for $\tup{\class{TGD},\class{UCQ}}$, i.e., given a query $Q \in \tup{\class{TGD},\class{UCQ}}$ with data schema $\sche{S}$, an $\sche{S}$-database $\db{D}$, and $\ve{a} \in \adom{\db{D}}^{|\ve{x}|}$, to decide whether $\db{D} \models Q(\ve{a})$, is undecidable; this holds even for $\tup{\class{TGD},\class{AQ}_0}$~\cite{CaGK13}.
Here we deal with one of the most paradigmatic decidable restrictions, i.e., \emph{guardedness}. A TGD is \emph{guarded} if it has a body atom, called \emph{guard}, that contains all the body variables. Let $\class{G}$ be the class of all finite sets of guarded TGDs.
A TGD $\tau$ is called \emph{frontier-guarded} if its body contains an
atom, called \emph{frontier-guard}, that contains the frontier of
$\tau$, i.e., the body variables that appear also in the head. We write
$\class{FG}$ for the class of all finite sets of frontier-guarded
TGDs. Roughly, the evaluation problem for $\tup{\class{G},\class{UCQ}}$ and
$\tup{\class{FG},\class{UCQ}}$ is decidable since $\class{G}$ and
$\class{FG}$ admit tree-like universal models~\cite{CaGK13}.

\medskip
\noindent
\textbf{First-order rewritability.}  A \emph{first-order (FO) query} over a
schema $\sche{S}$ is a (function-free) FO formula $\varphi(\ve{x})$, with $\ve{x}$ being its free variables, that uses only relations from $\sche{S}$.
The \emph{evaluation} of $\varphi$ over an $\sche{S}$-database $\db{D}$,
denoted $\varphi(\db{D})$, is the set of tuples
$\{\ve{a} \in \adom{\db{D}}^{|\ve{x}|} \mid \db{D} \models
\varphi(\ve{a})\}$; $\models$ denotes the standard notion of satisfaction for FO.
An OMQ $Q = \tup{\sche{S},\ont{O},q(\ve{x})}$ is \emph{FO-rewritable} if
there exists a (finite) FO query $\varphi_Q(\ve{x})$ over $\sche{S}$
that is \emph{equivalent} to $Q$, i.e., for every $\sche{S}$-database
$\db{D}$ it is the case that $Q(\db{D}) = \varphi_Q(\db{D})$. We call
$\varphi_Q(\ve{x})$ an \emph{FO-rewriting of $Q$}.
A fundamental task for an OMQ language $(\class{C},\class{Q})$, where
$\class{C}$ is a class of TGDs and $\class{Q}$ is a class of queries, is
deciding first-order rewritability:

\smallskip

\begin{center}
\fbox{\begin{tabular}{ll}
{\small PROBLEM} : & $\forew{\class{C}}{\class{Q}}$
\\
{\small INPUT} : & An OMQ $Q \in \tup{\class{C},\class{Q}}$.
\\
{\small QUESTION} : &  Is it the case that $Q$ is FO-rewritable?
\end{tabular}}
\end{center}

\medskip
\noindent
\textbf{First-order rewritability of $\boldsymbol{(\class{FG},\class{UCQ})}$-queries.} As shown by the following example, there exist $(\class{G},\class{CQ})$ queries (and thus, $(\class{FG},\class{UCQ})$ queries) that are not FO-rewritable.

\begin{example}
  \label{ex:cartwheel}
  Consider the OMQ $Q = \tup{\sche{S},\ont{O},q} \in \tup{\class{G},\class{CQ}}$, where $\sche{S} = \set{S/3,A/1,B/1}$, $\ont{O}$
  consists of
  \[
  \begin{array}{rcl}
    S(x,y,z), A(z) &\rightarrow&  R(x,z),\\
    S(x,y,z), R(x,z) &\rightarrow&  R(x,y),
  \end{array}
  \]
  and $q = \exists x,y,z\,(S(x,y,z) \land R(x,z) \land B(y))$.
  Intuitively, an FO-rewriting of $Q$ should check for the existence of a set of atoms $\{S(c,a_i,a_{i-1})\}_{1 \leq i \leq k}$, among others, for $k \geq 0$. However, since there is no upper bound for $k$, this cannot be done via a finite FO-query, and thus, $Q$ is not FO-rewritable. A proof that $Q$ is not FO-rewritable is given below. \hfill\markfull
\end{example}

On the other hand, there are (frontier-)guarded OMQs that are FO-rewritable; e.g., the OMQ obtained from the query $Q$ in Example~\ref{ex:cartwheel} by adding $A(z)$ to $q$ is FO-rewritable with $\exists x,y,z\,(S(x,y,z) \land B(y) \land A(z))$ being an FO-rewriting.

\section{Semantic Characterization}
\label{sec:semanticcharacterization}

We proceed to give a characterization of FO-rewritability of OMQs from
$\tup{\class{G},\class{AQ}_0}$ in terms of the existence of certain
tree-like databases. Our characterization is related to, but different
from characterizations used for OMQs based on DLs such as $\mathcal{EL}$ and $\mathcal{ELI}$~\cite{BiLW13,BiHLW16}.

The characterizations in~\cite{BiLW13,BiHLW16} essentially state that a unary OMQ $Q$ is FO-rewritable iff there is a bound $k$ such that, whenever the root of a tree-shaped database $\db{D}$ is returned as an answer to $Q$, then this is already true for the restriction of $\db{D}$ up to depth $k$.  The proof of the (contrapositive of the) ``only if'' direction uses a locality argument: if there is no such bound $k$, then this is witnessed by an infinite sequence of deeper and deeper tree databases that establish non-locality of $Q$. For guarded TGDs, we would have to replace tree-shaped databases with databases of bounded tree-width. However, increasing depth of tree decompositions does not correspond to increasing distance in the Gaifman graph, and thus, does not establish non-locality. We therefore depart from imposing a bound on the depth, and instead we impose a bound on the number of facts, as detailed below.

It is also interesting to note that, while it is implicit in~\cite{BiHLW16} that an OMQ based on $\mathcal{ELI}$ and CQs is FO-rewritable iff it is Gaifman local, there exists an OMQ from $\tup{\class{G},\class{CQ}}$ that is Gaifman local, but not FO-rewritable. Such an OMQ is the one obtained from the query $Q$ given in Example~\ref{ex:cartwheel}, by removing the existential quantification on the variable $x$ in the CQ $q$, i.e., converting $q$ into a unary CQ.

\begin{theorem}
  \label{pro:semanticmain}
  Consider an OMQ $Q \in \tup{\class{G},\class{AQ}_0}$ with data schema $\sche{S}$. The following are equivalent:
  \begin{enumerate}
  \item $Q$ is FO-rewritable.
  \item There is a $k \geq 0$ such that, for every
    $\sche{S}$-database $\db{D}$ of tree-width at most 
    $\width{\sche{S}} - 1$, if $\db{D} \models Q$, then
    there is a $\db{D}' \subseteq \db{D}$ with at most $k$ facts such
    that $\db{D}' \models Q$.
  \end{enumerate}
\end{theorem}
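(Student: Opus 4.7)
The plan is to establish the two directions of the equivalence via separate arguments: the forward direction rests on a preservation-under-homomorphisms result, while the converse exploits the tree-like model property of guarded TGDs.

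For the direction $(1) \Rightarrow (2)$, the starting observation is that any $Q \in \tup{\class{G},\class{AQ}_0}$ is preserved under homomorphisms of $\sche{S}$-databases: if $\db{D} \models Q$ and $h \colon \db{D} \to \db{D}''$ is a homomorphism, then $\db{D}'' \models Q$, since $h$ lifts to a homomorphism between the respective chases and atomic goals are preserved by homomorphisms. Assuming $Q$ is FO-rewritable by some FO sentence $\varphi_Q$, this sentence is itself preserved under homomorphisms on finite $\sche{S}$-databases. Rossman's homomorphism-preservation theorem over finite structures then yields a UCQ $\bigvee_{i=1}^n q_i$ over $\sche{S}$ equivalent to $\varphi_Q$. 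Let $k$ be the maximum number of atoms in any $q_i$. For every $\db{D}$ with $\db{D} \models Q$ (in particular, every such database of tree-width at most $\width{\sche{S}}-1$), some $q_i$ homomorphically maps into $\db{D}$ via some $g$, and the image $\db{D}' = g(q_i)$ is a sub-database of $\db{D}$ with at most $k$ facts such that $\db{D}' \models q_i$, hence $\db{D}' \models \varphi_Q$, and thus $\db{D}' \models Q$.

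For the direction $(2) \Rightarrow (1)$, the key ingredient is the tree-like model property of guarded TGDs: for every $\sche{S}$-database $\db{D}$ with $\db{D} \models Q$ there exists an $\sche{S}$-database $\db{D}^*$ of tree-width at most $\width{\sche{S}} - 1$, obtained as the guarded unraveling of $\db{D}$ restricted to the data schema, such that $\db{D}^* \models Q$ and there is a homomorphism $h \colon \db{D}^* \to \db{D}$. Assuming (2) with constant $k$, let $\ca{F}_k$ be the (finite) collection of isomorphism types of $\sche{S}$-databases with at most $k$ facts that satisfy $Q$, and define the UCQ
\[
  \Phi \;=\; \bigvee_{\db{D}' \in \ca{F}_k} q_{\db{D}'},
\]
where $q_{\db{D}'}$ is the canonical Boolean CQ of $\db{D}'$ (with each element replaced by an existentially quantified variable). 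If $\db{D} \models \Phi$, some $\db{D}' \in \ca{F}_k$ homomorphically maps into $\db{D}$, and hom-preservation gives $\db{D} \models Q$. Conversely, if $\db{D} \models Q$, apply the unraveling to obtain $\db{D}^*$; by (2), some $\db{D}'_* \subseteq \db{D}^*$ of at most $k$ facts satisfies $Q$, and composing the inclusion $\db{D}'_* \subseteq \db{D}^*$ with $h$ yields a homomorphism from (the isomorphism type of) $\db{D}'_*$ into $\db{D}$, whence $\db{D} \models \Phi$.

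The main obstacle lies in the unraveling step: one needs $\db{D}^*$ to simultaneously be of bounded tree-width, satisfy $Q$, and admit a homomorphism back to $\db{D}$. In the $\ca{ELI}$ setting, tree-shape of databases corresponds to small Gaifman depth and bounding the depth of an unraveling suffices by Gaifman locality of FO. In the guarded setting, however, a tree-like $\db{D}^*$ may have arbitrarily large Gaifman depth, so locality-based reasoning fails; this is exactly the phenomenon anticipated in the paragraph preceding the theorem, and it forces condition (2) to bound the \emph{number of facts} in the small witness rather than its depth. The construction itself relies on the standard fact that atomic OMQ entailment under guarded TGDs is faithfully witnessed on the guarded unraveling.
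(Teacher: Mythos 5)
Your proposal is correct and follows essentially the same route as the paper: Rossman's homomorphism-preservation theorem turns the FO-rewriting into a UCQ whose largest disjunct gives $k$ for $(1)\Rightarrow(2)$, and the tree-like model property of guarded TGDs (guarded unraveling with a homomorphism back to $\db{D}$) combined with the disjunction of canonical CQs of the at-most-$k$-fact databases satisfying $Q$ gives the rewriting for $(2)\Rightarrow(1)$. The paper packages the unraveling step as a cited lemma (proved via guarded unraveling and compactness), exactly the ``standard fact'' you invoke, so there is no substantive difference.
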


For $\text{(1)} \Rightarrow \text{(2)}$ we exploit the fact that, if
$Q \in \tup{\class{G},\class{AQ}_0}$ is FO-rewritable, then it can be
expressed as a UCQ $q_Q$. This follows from the fact that OMQs from $\tup{\class{G},\class{AQ}_0}$ are preserved under homomorphisms~\cite{BCLW14}, and Rossman's Theorem stating that an FO query is preserved under homomorphisms over finite instances iff it is equivalent to a UCQ~\cite{Ro08}. It is then easy to show that (2) holds with $k$ being the size of the largest disjunct of the UCQ $q_Q$. For $\text{(2)} \Rightarrow \text{(1)}$, we use the fact that, if there is an $\sche{S}$-database $\db{D}$ that entails $Q$, then there exists one of tree-width at most 
$\width{\sche{S}} - 1$ that entails $Q$, and can be mapped to $\db{D}$. The next example illustrates Theorem~\ref{pro:semanticmain}.

\begin{example}
 \label{ex:cartwheel2}
 Consider the OMQ $Q = \tup{\sche{S},\ont{O},P} \in \tup{\class{G},\class{AQ}_0}$, where $\sche{S} = \set{S/3,A/1,B/1}$, and $\ont{O}$ consists of the TGDs given in Example~\ref{ex:cartwheel} plus the guarded TGD
  \[
  S(x,y,z), R(x,z), B(y)\ \rightarrow\  P,
  \]
  which is essentially the CQ $q$ from Example~\ref{ex:cartwheel}. It is easy to verify that, for an arbitrary $k \geq 0$, the $\sche{S}$-database
  \[
  \db{D}_k = \{A(a_0), S(c,a_1,a_0),\ldots,S(c,a_{k-1},a_{k-2}),B(a_{k-1})\}
  \]
  of tree-width 
  $\width{\sche{S}} - 1 = 2$ is such that
  $\db{D}_k \models Q$, but for every $\db{D}' \subset \db{D}_k$ with
  at most $k$ facts, $\db{D}' \not\models Q$. Thus, by Theorem~\ref{pro:semanticmain}, $Q$ is not FO-rewritable. \hfill\markfull
\end{example}

\section{Alternating Tree Automata Approach}
\label{sec:firstapproach}

In this section, we exploit the well-known algorithmic tool of {\em
two-way alternating parity tree automata} (2ATA) over finite trees of bounded degree (see, e.g.,~\cite{CGKV88}), and prove that $\forew{\class{G}}{\class{AQ}_0}$ can be solved in elementary time. Although this result is not optimal, our construction provides a transparent solution to $\forew{\class{G}}{\class{AQ}_0}$ based on standard tools.  This is in contrast with previous studies on closely related problems for guarded logics, in which all elementary bounds heavily rely on the use of intricate results on cost automata \cite{BOW14,BeCCB15}. We also apply such results later, but only in order to pinpoint the exact complexity of $\forew{\class{G}}{\class{AQ}_0}$.

The idea behind our solution to $\forew{\class{G}}{\class{AQ}_0}$ is,
given a query $Q \in (\class{G},\class{AQ}_0)$, to devise a 2ATA $\ata{B}_Q$ such that $Q$ is FO-rewritable iff the language accepted by $\ata{B}_Q$ is finite. This is a standard idea with roots in the study of the boundedness problem for {\em monadic Datalog} (see e.g., \cite{Va92}). In particular, our main result establishes the following:

\begin{theorem}
  \label{theorem:atafin} Let $Q \in (\class{G},\class{AQ}_0)$ with data schema $\sche{S}$. There is a 2ATA $\ata{B}_Q$ on trees of degree at most $2^{\width{\sche{S}}}$ such that $Q$ is FO-rewritable iff the language of $\ata{B}_Q$ is finite. The state set of $\ata{B}_Q$ is of double exponential size in $\width{\sche{S}}$, and of exponential size in $|\sche{S} \cup \sig{\ont{O}}|$. Furthermore, $\ata{B}_Q$ can be constructed in double exponential time in the size of $Q$.
\end{theorem}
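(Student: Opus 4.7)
The strategy is to translate the semantic characterisation of Theorem~\ref{pro:semanticmain} into an automaton-theoretic statement. Over a fixed finite alphabet of bounded-degree labelled trees, there are only finitely many trees of any given bounded size. Consequently, $Q$ is FO-rewritable if and only if the collection $\ca{M}_Q$ of $\sche{S}$-databases $\db{D}$ of tree-width at most $\width{\sche{S}}-1$ that are $\subseteq$-minimal with $\db{D} \models Q$ gives rise, after a suitable finite-alphabet tree encoding, to only finitely many trees. I will design $\ata{B}_Q$ to accept precisely these encoded trees, whence $\mathcal{L}(\ata{B}_Q)$ is finite iff $Q$ is FO-rewritable.

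The first ingredient is a standard encoding of tree decompositions of width at most $\width{\sche{S}}-1$ over a finite alphabet: each node carries a bag of size at most $\width{\sche{S}}$ named by local indices in $\{1,\ldots,\width{\sche{S}}\}$, the subset of $\sche{S}$-facts supported by the bag, and, for every child edge, a matching of local indices indicating which bag elements are shared. Normalising the decomposition so that no two siblings carry identical labels bounds the branching by $2^{\width{\sche{S}}}$, as required. The automaton $\ata{B}_Q$ is then the intersection of two 2ATAs. First, an \emph{entailment} automaton $\ata{A}_Q^+$ accepts an encoding iff the encoded database entails $Q$, built via the standard type-based construction for guarded logics whose states record bag types, i.e.\ sets of atoms over $\sche{S} \cup \sig{\ont{O}}$ using only local bag names. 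Second, a \emph{minimality} automaton $\ata{A}_Q^-$ universally picks a fact of the encoded database, virtually deletes it by updating the label of the corresponding bag, and then runs the complement of $\ata{A}_Q^+$ on the resulting encoding. Since 2ATAs are closed under intersection and complementation without significant state blow-up, the state count of $\ata{B}_Q$ is dominated by that of $\ata{A}_Q^+$.

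The size of $\ata{A}_Q^+$ is governed by the number of bag types; the set of atoms in $\sche{S} \cup \sig{\ont{O}}$ using only local bag names has size polynomial in $|\sche{S} \cup \sig{\ont{O}}|$ and exponential in $\width{\sche{S}}$, yielding a state space doubly exponential in $\width{\sche{S}}$ and exponential in $|\sche{S} \cup \sig{\ont{O}}|$, as claimed; the transitions, which encode applications of TGDs whose guards live inside the current bag together with the propagation of bag types to children, can be written out in double-exponential time in $|Q|$. The main obstacle is proving soundness and completeness of $\ata{A}_Q^+$: one has to show that entailment of the Boolean atomic query $Q$ by $\db{D}$ and $\ont{O}$ can always be witnessed, purely by type propagation along some tree decomposition of $\db{D}$, which relies crucially on the tree-like universal models of guarded TGDs and a careful treatment of how derived atoms on a bag are justified by witnesses arising in neighbouring bags. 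Once this step is in place, combining $\ata{A}_Q^+$ with $\ata{A}_Q^-$ and invoking Theorem~\ref{pro:semanticmain} yields the stated equivalence between finiteness of $\mathcal{L}(\ata{B}_Q)$ and FO-rewritability of $Q$.
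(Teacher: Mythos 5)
There is a genuine gap at the heart of your reduction. You claim that $Q$ is FO-rewritable iff the $\subseteq$-minimal bounded-tree-width databases entailing $Q$ ``give rise, after a suitable tree encoding, to only finitely many trees'', and you design $\ata{B}_Q$ to accept the encodings of exactly these databases. The first equivalence (FO-rewritability iff finitely many minimal databases up to isomorphism) is fine, but the step from databases to trees fails: a single non-empty database has \emph{infinitely} many consistent encodings of bounded width and degree --- repeat a bag along a long path, spread the facts over redundant copies, etc. Your only normalisation, ``no two siblings carry identical labels'', does not exclude any of this (a path has no siblings; and incidentally it also does not give the degree bound $2^{\width{\sche{S}}}$, since the number of distinct labels is doubly exponential --- the correct argument merges siblings whose bag-intersection with the parent coincides, of which there are at most $2^{\width{\sche{S}}}$). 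Consequently $\mathcal{L}(\ata{B}_Q)$ as you define it is infinite as soon as $Q$ is entailed by some bounded-tree-width database, irrespective of FO-rewritability. This is precisely the problem the paper's refined characterization (Theorem~\ref{pro:semanticrefined}) solves: encodings are required to be \emph{simple} (at most one fact per node, no fact repeated) and \emph{well-colored} (no inflationary empty nodes), an additional automaton checks these properties (Lemma~\ref{lem:atarectwellc}), and Lemmas~\ref{lem:factsblacknodes} and~\ref{lem:manyblacknodes} bound the number of tree nodes by (roughly twice) the number of facts, which is what makes ``finitely many minimal databases'' equivalent to ``finitely many accepted trees''. Some such canonicity device is indispensable, and it is absent from your construction.

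The second gap is the minimality automaton. A 2ATA cannot ``virtually delete'' a fact and then run the complement of $\ata{A}_Q^+$ on the modified encoding: after Adam picks a node $v$ and a fact $R_{\ve{a}}$ in its label, the simulated complement automaton must, on every later visit, recognise whether the current node is $v$. Since names are local and reused, the symbol $R_{\ve{a}}$ may occur at several nodes and denote \emph{distinct} facts of $\dec{t}$, so storing $(R,\ve{a})$ in the state does not identify the deleted fact, and a finite-state two-way device cannot remember an individual input node. The paper resolves this by moving the deletion into the input: an extended alphabet tags the occurrences to be removed, and the existential quantification over taggings is done by projection, which forces nondeterminization and is exactly the source of the double-exponential state blowup in Lemma~\ref{lem:atamin}. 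Your claim that minimality is obtained ``without significant state blow-up'' from intersection and complementation therefore does not go through; shifting the double exponential into a type-based entailment automaton may be consistent with the stated size bounds, but it repairs neither the canonicity issue nor the deletion issue.
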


As a corollary to Theorem \ref{theorem:atafin} we obtain the following result:

\begin{corollary}\label{coro:ata-approach}
$\forew{\class{G}}{\class{AQ}_0}$ is in \threeexp, and in~\twoexp~for predicates of bounded arity.
\end{corollary}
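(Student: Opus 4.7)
The plan is to combine Theorem~\ref{theorem:atafin} with a standard decision procedure for the finiteness of the language accepted by a two-way alternating parity tree automaton over finite trees of bounded degree. Concretely, the theorem already reduces $\forew{\class{G}}{\class{AQ}_0}$ to deciding whether $L(\ata{B}_Q)$ is finite, so the task reduces to an off-the-shelf automata computation.

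First I would recall the standard way to test finiteness of a 2ATA: convert the 2ATA into an equivalent nondeterministic tree automaton (NTA) at a single-exponential cost in the number of states, and then use the classical polynomial-time test for finiteness of an NTA (detect a loop on a ``productive'' state, e.g.\ by computing reachable/co-reachable states and searching for a cycle among them). This pipeline decides finiteness of a 2ATA in time exponential in the number of its states, while the dependence on the alphabet and tree-degree is polynomial. The bounded-degree assumption in Theorem~\ref{theorem:atafin} (trees of degree at most $2^{\width{\sche{S}}}$) is what lets us invoke these results without extra complications.

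Next I would plug in the bounds from Theorem~\ref{theorem:atafin}. In the general case, $\ata{B}_Q$ has a state set of size doubly exponential in $\width{\sche{S}}$ and singly exponential in $|\sche{S}\cup\sig{\ont{O}}|$, and can be built in double exponential time in $|Q|$. Running the exponential-time finiteness test on this automaton yields an overall triply exponential running time, giving the claimed \threeexp upper bound. When $\width{\sche{S}}$ is bounded by a constant, the state set of $\ata{B}_Q$ collapses to singly exponential in $|Q|$ and the construction runs in single exponential time as well; the exponential-time finiteness test then produces a doubly exponential running time, yielding the \twoexp bound for predicates of bounded arity.

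The main obstacle is essentially bookkeeping: one has to match the correct automata-theoretic result (finiteness, not merely emptiness, of 2ATA over finite trees of bounded degree) and verify that the alphabet size and branching degree contribute only polynomially, so that the dominant factor is indeed exponential in the number of states of $\ata{B}_Q$. Once this is checked, the two stated bounds follow by straightforward arithmetic on the complexity estimates supplied by Theorem~\ref{theorem:atafin}.
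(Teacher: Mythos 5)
Your proposal matches the paper's argument: Theorem~\ref{theorem:atafin} reduces the problem to language finiteness of $\ata{B}_Q$, which is decided by converting the 2ATA into a nondeterministic (bottom-up) tree automaton at single-exponential cost in the number of states (as in~\cite{Va98}) and then applying the polynomial-time finiteness test for such automata (as in~\cite{Va92}), giving \threeexp\ in general and \twoexp\ for bounded arity. This is exactly the route taken in the paper, so the proposal is correct and essentially identical.
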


From Theorem~\ref{theorem:atafin}, to check whether a query $Q \in (\class{G},\class{AQ}_0)$ is FO-rewritable, it suffices to check that the language of $\ata{B}_Q$ is finite. The latter is done by first converting $\ata{B}_Q$ into a non-deterministic bottom-up tree automaton $\ata{B}'_Q$; see, e.g.,~\cite{Va98}. This incurs an exponential blowup, and thus, $\ata{B}'_Q$ has triple exponentially many states. We then check the finiteness of the language of $\ata{B}'_Q$ in polynomial time in the size of $\ata{B}'_Q$ by applying a standard reachability analysis; see~\cite{Va92}. For predicates of bounded arity, a similar argument as above provides a double exponential time upper bound.

In the rest of Section \ref{sec:firstapproach} we explain the proof of Theorem \ref{theorem:atafin}. The intuitive idea is to construct a 2ATA $\ata{B}_Q$ whose language corresponds to suitable encodings of databases $\db{D}$ of
bounded tree-width that ``minimally'' satisfy $Q$, i.e., $\db{D} \models Q$, but if we remove any atom from $\db{D}$, then $Q$ is no longer satisfied.

\medskip
\noindent
\textbf{A refined semantic characterization.} In order to apply an
approach based on 2ATA, it is essential to revisit the semantic characterization provided by Theorem~\ref{pro:semanticmain}. To this end, we need to introduce some auxiliary terminology.

Let $\db{D}$ be a database, and $\delta = \tup{\ca{T},\tup{X_v}_{v \in
    T}}$, where $\ca{T} = (T,E)$, a tree decomposition of $\db{D}$. An \emph{adornment} of the
pair $\tup{\db{D}, \delta}$ is a function $\eta \colon T \rightarrow
2^{\db{D}}$ such that $\eta(v) \subseteq \db{D}[X_v]$ for all $v \in T$,
and $\bigcup_{v \in T} \eta(v) = \db{D}$. Therefore, the pair
$\tup{\delta,\eta}$ can be viewed as a representation of the database
$\db{D}$ along with a tree decomposition of it. For the intended
characterization, it is important that this representation is free of
redundancies, formalized as follows. We say that $\delta$ is \emph{$\eta$-simple}
if $|\eta(v)| \leq 1$ for all $v \in T$, and non-empty $\eta$-labels are
unique, that is, $\eta(v) \neq \eta(w)$ for all distinct $v,w \in T$
with $\eta(v)$ and $\eta(w)$ non-empty. Nodes $v \in T$ with
$\eta(v)$ empty, called \emph{white} from now on, are required since we
might not have a (unique!) fact available to label them. Note, though,
that white nodes $v$ are still associated with a non-empty set of
constants from $\db{D}$ via $X_v$. All other nodes are called \emph{black}.
While $\delta$ being $\eta$-simple avoids redundancies that are due to a fact
occurring in the label of multiple black nodes, additional redundancies
may arise from the inflationary use of white nodes. We say that a node
$v \in T$ is \emph{$\eta$-well-colored} if it is black, or it has at
least two successors and all its successors are $\eta$-well-colored. We say that $\delta$ is \emph{$\eta$-well-colored} if every node in $T$ is $\eta$-well-colored. For example, $\delta$ is not $\eta$-well-colored if it has a white leaf, or if it has a white node and its single successor is also white. Informally, requiring $\delta$ to be $\eta$-well-colored makes it impossible to blow up the tree by introducing white nodes without introducing black nodes.
For $i \in \{1,2\}$, let $\db{D}_i$ be a database, $\delta_i$ a tree
decomposition of $\db{D}_i$, and $\eta_i$ an adornment of
$\tup{\db{D}_i,\delta_i}$. We say that $\tup{\db{D}_1,\delta_1,\eta_1}$ and
$\tup{\db{D}_2,\delta_2,\eta_2}$ are \emph{isomorphic} if the latter can be
obtained from the former by consistenly renaming constants in
$\db{D}_1$ and tree nodes in $\delta_1$.

We are now ready to revisit the characterization of
FO-rewritability for OMQs from $\tup{\class{G},\class{AQ}_0}$ given in
Theorem~\ref{pro:semanticmain}.

\begin{theorem}
  \label{pro:semanticrefined}
  Consider an OMQ $Q \in \tup{\class{G},\class{AQ}_0}$ with data schema
  $\sche{S}$. The following are equivalent:
  \begin{enumerate}
  \item Condition 2 from Theorem~\ref{pro:semanticmain} is satisfied.

  \item There are finitely many non-isomorphic triples
    $\tup{\db{D},\delta, \eta}$, where $\db{D}$ is an
    $\sche{S}$-database, $\delta$ a tree decomposition of $\db{D}$ of
    width at most
    $\width{\sche{S}} - 1$, and $\eta$ an adornment of $\tup{\db{D},\delta}$, such that\label{lem:semanticrefined:2}
  \begin{enumerate}

  \item $\delta$ is $\eta$-simple and $\eta$-well-colored,

  \item $\db{D} \models Q$, and

  \item for every $\alpha \in \db{D}$, it is the case that
    $\db{D} \setminus \set{\alpha} \not\models Q$.
  \end{enumerate}
  \end{enumerate}
\end{theorem}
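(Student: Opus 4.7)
The plan is to prove both directions by relating the size bound in (1) to finiteness of the collection in (2) through the operation of taking an inclusion-minimal sub-database that still entails $Q$.

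For the direction $(2) \Rightarrow (1)$, let $\db{D}$ be an $\sche{S}$-database of tree-width at most $\width{\sche{S}}-1$ with $\db{D} \models Q$. I would first choose an inclusion-minimal $\db{D}' \subseteq \db{D}$ with $\db{D}' \models Q$, which exists because $\db{D}$ is finite; note that $\db{D}'$ retains tree-width at most $\width{\sche{S}}-1$, since any tree decomposition of $\db{D}$ is also one of $\db{D}'$. Next I would construct a tree decomposition $\delta'$ of $\db{D}'$ together with an adornment $\eta'$ of $\tup{\db{D}',\delta'}$ so that $\delta'$ is both $\eta'$-simple and $\eta'$-well-colored. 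This is done in two steps: (i) duplicate bags so that each fact of $\db{D}'$ sits in exactly one node and each node carries at most one fact, which preserves both width and the covering and connectedness conditions of a tree decomposition; and (ii) iteratively remove white leaves and short-cut white nodes with a unique successor until no such configurations remain. The resulting triple $\tup{\db{D}',\delta',\eta'}$ satisfies conditions (a)--(c) of (2), so by assumption it belongs to a finite collection up to isomorphism. Hence there is a uniform bound $k$ on $|\db{D}'|$ across all choices of $\db{D}$, which is precisely condition (1).

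For $(1) \Rightarrow (2)$, fix $k$ as in condition (1) and consider any triple $\tup{\db{D},\delta,\eta}$ satisfying (a)--(c). Since $\delta$ witnesses tree-width at most $\width{\sche{S}}-1$ and $\db{D} \models Q$, condition (1) yields some $\db{D}'' \subseteq \db{D}$ with $|\db{D}''| \leq k$ and $\db{D}'' \models Q$. Minimality of $\db{D}$, i.e.\ condition (c), forces $\db{D}'' = \db{D}$, hence $|\db{D}| \leq k$. It remains to bound the size of $\tup{\delta,\eta}$. By $\eta$-simplicity the number of black nodes is at most $|\db{D}| \leq k$, and a standard edge-counting argument shows that in any rooted tree where every white node has at least two children (so every leaf is black) the number of white nodes is at most the number of black nodes minus one. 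Hence the underlying tree has at most $2k-1$ nodes, each bag is a subset of $\adom{\db{D}}$ of size at most $\width{\sche{S}}$, and each $\eta(v)$ is either empty or a singleton from $\db{D}$, leaving only finitely many such triples up to isomorphism.

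The main obstacle I anticipate is the construction in $(2) \Rightarrow (1)$ of the simple and well-colored decomposition starting from an arbitrary one: one must verify that the bag-duplication step and the clean-up of white leaves and single-child white paths preserve the covering and connectedness conditions without increasing the width, and that the clean-up terminates with a tree that is genuinely $\eta'$-well-colored. Once this technical lemma is dispatched, both implications reduce to the counting observation that any rooted tree in which every white internal node has at least two successors carries at most twice as many nodes as it has black ones.
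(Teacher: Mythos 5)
Your proof is correct and follows essentially the same route as the paper: both directions hinge on (i) converting an arbitrary bounded-width decomposition of a minimal entailing database into a simple, well-colored adorned one, and (ii) the counting facts that simplicity bounds the black nodes by the number of facts while well-coloredness makes the white nodes strictly fewer than the black ones. The only cosmetic differences are that you argue the implications directly rather than contrapositively and obtain well-coloredness by pruning white leaves and contracting unary white nodes, whereas the paper closes the set of black nodes under greatest common ancestors; both normalizations are fine.
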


\noindent
\textbf{Devising automata.} We proceed to discuss how the 2ATA announced in Theorem~\ref{theorem:atafin} is constructed.
Consider an OMQ $Q = \tup{\sche{S},\ont{O},P}$ from $\tup{\class{G},\class{AQ}_0}$. Our goal is to devise an automaton $\ata{B}_Q$ whose language is finite iff Condition 2 from Theorem~\ref{pro:semanticrefined} is satisfied.
By Theorems~\ref{pro:semanticmain}
and~\ref{pro:semanticrefined}, $Q$ is then FO-rewritable iff the language of
$\ata{B}_Q$ is finite.

The 2ATA $\ata{B}_Q$ will be the intersection of several automata
that check the properties stated in~\cref{lem:semanticrefined:2} of~\Cref{pro:semanticrefined}. But first we need to say a few words about tree encodings.
Let $\Gamma$ be a finite alphabet, and let $(\mathbb{N} \setminus \set{0})^\ast$ denote the set of all finite words of positive integers, including the empty word. A \emph{finite $\Gamma$-labeled tree} is a partial function $t \colon (\mathbb{N} \setminus \set{0})^\ast \rightarrow \Gamma$ such that the domain of $t$ is finite and prefix-closed. Moreover, if $v \cdot i$ belongs to the domain of $t$, then $v \cdot (i-1)$ also belongs to the domain of $t$. In fact, the elements in the domain of $t$ identify the nodes of the tree.
It can be shown that an $\sche{S}$-database $\db{D}$, a tree
decomposition $\delta$ of $\db{D}$ of width $w-1$, and an adornment $\eta$
of $\tup{\db{D},\delta}$, can be encoded as a $\Gamma_{\sche{S},w}$-labeled tree $t$ of degree at most $2^{w}$, where $\Gamma_{\sche{S},w}$ is
an alphabet of size double exponential in $w$ and exponential in $\sche{S}$, such that each node of $\delta$ corresponds to exactly one node of $t$ and vice versa. Although every $\db{D}$ can be encoded into a
$\Gamma_{\sche{S},w}$-labeled tree $t$, the converse is not true in
general.  However, it is possible to define certain syntactic
consistency conditions such that every \emph{consistent} $\Gamma_{\sche{S},w}$-labeled $t$ can be decoded into an $\sche{S}$-database, denoted $\dec{t}$, whose tree-width is at most $w$. We are going to abbreviate the alphabet
$\Gamma_{\sche{S},\width{\sche{S}}}$ by $\Gamma_{\sche{S}}$. 

\begin{lemma}
  \label{lem:ataconsistency}
  There is a 2ATA $\ata{C}_{\sche{S}}$ that accepts a
  $\Gamma_{\sche{S}}$-labeled tree $t$ iff $t$ is consistent. The number
  of states of $\ata{C}_{\sche{S}}$ is constant. $\ata{C}_{\sche{S}}$ can be constructed in polynomial time in the size of $\Gamma_{\sche{S}}$.
\end{lemma}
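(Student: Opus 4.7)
My plan is to first fix a concrete encoding of a triple $\tup{\db{D}, \delta, \eta}$ as a $\Gamma_{\sche{S}}$-labeled tree $t$, and then show how the tree-decomposition axioms and the adornment conditions become purely local requirements on $t$. The intended encoding reserves a fixed pool of at most $\width{\sche{S}}$ ``slots'' at every node; the symbol at node $v$ specifies (i) which slots are active at $v$, together with a concrete constant name for each active slot, (ii) which of those slots inherit their value from a specified slot in the parent, and (iii) the adornment atom $\eta(v)$, possibly empty (marking the node as white), whose arguments refer only to slots active at $v$. The bag $X_v$ is read off as the set of constants sitting in the active slots. Under this encoding, the connected-subtree requirement for tree decompositions reduces to the local condition: whenever a slot of $v$ is not marked inherited, the constant it holds does not occur in any active slot of the parent of $v$.

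Given this encoding, consistency splits into a finite list of local conditions that only involve $v$ and its parent: syntactic well-formedness of the label of $v$; compatibility of the inheritance markers of $v$ with the parent's slot assignment; freshness of non-inherited constants of $v$ with respect to the parent; the bag $X_v$ containing all arguments of $\eta(v)$, which itself must be an atom over $\sche{S} \cup \sig{\ont{O}}$; and the root carrying no inheritance markers. The automaton $\ata{C}_{\sche{S}}$ is then built as follows. A single propagation state $q_\forall$ is launched from the root; at each visited node it universally branches to all children while simultaneously forking off a bounded number of checking states $q_1,\dots,q_c$, each responsible for one of the local conditions listed above. A checking state either inspects the current label directly, or uses a single upward move (available in two-way automata) to read the parent's label and then either accepts or rejects. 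Because every branch of every run has bounded length, a trivial two-color parity condition suffices.

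The state set of $\ata{C}_{\sche{S}}$ therefore consists of $q_\forall$ together with the fixed finite collection $\{q_1,\dots,q_c\}$ of checkers, so its cardinality is independent of $\sche{S}$; this yields the constant-state claim. The transition function, in contrast, must tabulate the verdict of each checker for every alphabet symbol (and, for the parent-sensitive checkers, every pair of alphabet symbols), which can be enumerated in polynomial time in $|\Gamma_{\sche{S}}|$. The main subtlety to get right will be the design of the encoding itself: we must choose the slot-and-inheritance apparatus so that freshness between adjacent nodes is equivalent to the global connected-subtree property of $\delta$, and so that the decoding $\dec{t}$ is well-defined precisely on the consistent trees. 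Once this invariant is fixed, showing that the accepted language is exactly the set of consistent trees is a routine structural induction along the tree.
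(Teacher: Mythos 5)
There is a genuine gap, and it sits exactly at the point your construction differs from the paper's encoding. In the paper, $\Gamma_{\sche{S}}$-labels carry a set of names from a fixed pool of $2\width{\sche{S}}$ constants together with facts $R_{\ve{a}}$ over those names, and \emph{consistency is purely node-local}: it only requires that each node uses at most $\width{\sche{S}}$ names and that every recorded fact mentions only names of that node. Identification of elements across neighbouring bags is not a consistency condition at all; it is handled at decoding time by the name-equivalence relation (the same name persisting along a path denotes the same element), which is why no inheritance markers and no freshness requirements are needed. With that definition the automaton is just a top-down walker whose transition function, tabulated symbol by symbol, accepts or rejects locally --- constantly many states, polynomial time in $|\Gamma_{\sche{S}}|$, which is essentially the whole of the paper's proof.

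Your encoding instead makes consistency a \emph{binary} relation between a node's label and its parent's label (inheritance-marker compatibility, freshness of non-inherited constants), and your argument that a fixed collection $q_1,\dots,q_c$ of checkers can verify this by ``a single upward move to read the parent's label'' does not go through. After the upward move, the transition depends only on the checker state and the parent's label, not on the child's label; so to decide, say, that a non-inherited constant $c$ of the child does not occur at the parent, the checker must remember $c$ (or the relevant slot) in its state. Since the set of constants to be excluded varies over exponentially many subsets of the name pool as the child's label varies, no constant-size family of up-states (even combined by alternation) can express all the required parent-label tests; you need on the order of one state per name, i.e.\ a state set growing with $\width{\sche{S}}$. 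This contradicts the constant-state claim of the lemma as you argue it. The fix is either to accept a state bound of $O(\width{\sche{S}})$ (which would still be harmless downstream but is not what the lemma asserts), or --- as the paper does --- to choose an encoding whose consistency conditions are local to a single node, pushing all cross-node identification into the decoding.
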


Since a $\Gamma_{\sche{S}}$-labeled tree incorporates the information
about an adornment, the notions of being well-colored and simple can
be naturally defined for $\Gamma_{\sche{S}}$-labeled trees. Then:

\begin{lemma}
  \label{lem:atarectwellc}
  There is a 2ATA $\ata{R}_{\sche{S}}$ that accepts a consistent
  $\Gamma_{\sche{S}}$-labeled tree iff it is well-colored and
  simple. The number of states of $\ata{R}_{\sche{S}}$ is exponential in
  $\width{\sche{S}}$ and linear in $|\sche{S}|$. $\ata{R}_{\sche{S}}$
  can be constructed in polynomial time in the size of
  $\Gamma_{\sche{S}}$.
\end{lemma}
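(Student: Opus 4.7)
The plan is to build $\ata{R}_{\sche{S}}$ as the intersection of two 2ATAs: $\ata{W}$ that checks $\eta$-well-coloredness and $\ata{S}$ that checks $\eta$-simplicity. Intersection of 2ATAs can be constructed in time linear in the combined sizes and preserves the 2ATA format, so it suffices to bound each component separately and then take their product.

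For $\ata{W}$, first observe that on finite trees, $\delta$ is $\eta$-well-colored iff every white node has at least two children. The ``only if'' direction is immediate; ``if'' follows by induction on subtree size (a white leaf would be forbidden, and any white internal node with $\geq 2$ children has all children well-colored by induction hypothesis). This is a purely local condition: $\ata{W}$ has a single ``walk-and-inspect'' state that universally propagates to every position in the tree and, at each node, reads the label (which reveals whether the node is white) and verifies that at least two successor positions exist. Hence $\ata{W}$ needs only a constant number of states.

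The substantive work lies in $\ata{S}$. The clause $|\eta(v)| \le 1$ is already enforced by $\Gamma_{\sche{S}}$; the crucial clause is that $\eta(v) \neq \eta(w)$ for all distinct black $v,w$. Two issues arise here: different alphabet symbols may denote the same fact (because different positions in different bags may point to the same underlying constant), while two identical-looking labels may in fact involve different constants. To handle this, $\ata{S}$ spawns, at every black node $v$, a universal search across all other nodes $w$. Along the walk from $v$ to $w$ the automaton maintains a partial map $f$ from the argument positions of $\eta(v)$ (there are at most $\width{\sche{S}}$ many) into positions of the currently visited bag, updating $f$ at each step using the bag-overlap information recorded in $\Gamma_{\sche{S}}$. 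By the connected-subtree property of tree decompositions, if some argument falls out of the current bag the branch may safely accept; upon reaching a black node $w$, the automaton uses $f$ together with the label at $w$ to decide whether $\eta(w) = \eta(v)$, rejecting only in the equality case.

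The main obstacle is bounding the state space of this comparison, since each state must record the predicate of $\eta(v)$ together with a partial function from up to $\width{\sche{S}}$ positions into $\{1,\dots,\width{\sche{S}},\bot\}$; there are at most $(\width{\sche{S}}+1)^{\width{\sche{S}}} = 2^{O(\width{\sche{S}} \log \width{\sche{S}})}$ such maps, yielding $|\sche{S}| \cdot 2^{O(\width{\sche{S}} \log \width{\sche{S}})}$ states for $\ata{S}$, which is exponential in $\width{\sche{S}}$ and linear in $|\sche{S}|$ as claimed. The intersection $\ata{R}_{\sche{S}} = \ata{W} \cap \ata{S}$ inherits the same bound, and since every transition is determined by local inspection of a symbol in $\Gamma_{\sche{S}}$, the whole construction runs in time polynomial in $|\Gamma_{\sche{S}}|$.
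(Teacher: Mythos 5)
Your overall strategy is essentially the paper's: $\ata{R}_{\sche{S}}$ is built as the intersection of a constant-size automaton for well-coloredness and a walking automaton for simplicity which, from each black node $v$, launches a universal search through the tree for a second node carrying the same fact, storing the fact being tracked in the state; this is exactly where the factor exponential in $\width{\sche{S}}$ and linear in $|\sche{S}|$ comes from in the paper as well. Two points where you deviate are harmless or even simpler. First, your local reformulation of well-coloredness (every white node has at least two children) is correct on finite trees and replaces the paper's slightly more roundabout game in which Adam moves to a deepest non-well-colored node and challenges Eve to exhibit a second successor. Second, your partial map $f$ from argument positions of $\eta(v)$ into the current bag is more machinery than the paper's encoding requires: there, an element shared by adjacent bags keeps the same name from $U_{\sche{S},\width{\sche{S}}}$, so the automaton only needs to remember the symbol $R_{\ve{a}}$, verify that each $D_{a_i}$ remains in the node labels along the walk, and test whether $R_{\ve{a}} \in t(w)$ at the target; your correspondence-tracking would be needed only for an encoding with bag-local names and explicit overlap information, but it is sound and still respects the stated state bound.

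Two small slips should be repaired. (i) The condition $|\eta_t(v)| \leq 1$ is \emph{not} enforced by $\Gamma_{\sche{S}}$: labels are arbitrary subsets of $\mbb{K}_{\sche{S},\width{\sche{S}}}$, and the consistency conditions do not bound the number of fact symbols $R_{\ve{a}}$ per node, so your automaton must check this explicitly; fortunately it is a purely local, constant-state test, which is how the paper also dismisses it. (ii) Your universal search must guarantee that the node $w$ at which the comparison is made is distinct from $v$; as described, the walk could return to $v$ and compare $\eta(v)$ with itself, wrongly rejecting every tree with a black node. The paper handles this by additionally recording the direction of the last move and forbidding immediate backtracking, which forces the walk to be a simple path ending at some $w \neq v$; this multiplies the state set by the branching degree $2^{\width{\sche{S}}}$ and therefore still fits the claimed bounds.
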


Concerning property 2(b) of~\Cref{pro:semanticrefined}, we can devise a
2ATA that accepts those trees whose decoding satisfies $Q$:

\begin{lemma}
  \label{lem:atasat}
  There is a 2ATA $\ata{A}_Q$ that accepts a consistent
  $\Gamma_{\sche{S}}$-labeled tree iff $\dec{t} \models Q$.  The number
  of states of $\ata{A}_Q$ is exponential in $\width{\sche{S}}$ and
  linear in $|\sche{S} \cup \sig{\ont{O}}|$.  $\ata{A}_Q$ can be
  constructed in double exponential time in the size of $Q$.
\end{lemma}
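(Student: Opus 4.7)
The plan is to construct $\ata{A}_Q$ as a type-propagation automaton whose states track \emph{individual} entailed atoms rather than full bag types, using alternation to decompose global type checks. Recall that $\dec{t} \models Q$ with $Q = \tup{\sche{S},\ont{O},P}$ iff the $0$-ary atom $P$ lies in the chase of $\dec{t}$ under $\ont{O}$. Since $\ont{O}$ is guarded, this chase admits a tree decomposition extending $\delta$ whose bags are of width bounded by $\width{\sche{S} \cup \sig{\ont{O}}}$, and it can be saturated locally: whenever the body atoms of a guarded TGD $\tau$ are all entailed inside some bag, the head atoms of $\tau$ are entailed as well, possibly inside a fresh ``virtual'' bag containing the frontier of $\tau$ together with the existentially introduced nulls.

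Accordingly, the states of $\ata{A}_Q$ are pairs $(R,\bar{\pi})$, where $R \in \sche{S} \cup \sig{\ont{O}}$ and $\bar{\pi}$ is a tuple of positions from $\set{1,\dots,w}$, $w = \width{\sche{S}}$, of length equal to the arity of $R$, together with a constant number of auxiliary control states. The intended meaning of being in state $(R,\bar{\pi})$ at a node $v$ is ``the atom $R(\bar{c})$, where $\bar{c}$ is the tuple at positions $\bar{\pi}$ of the bag $X_v$ encoded in $t(v)$, is entailed by $(\dec{t},\ont{O})$''. Since there are at most $|\sche{S} \cup \sig{\ont{O}}| \cdot w^w$ such pairs, the state count is linear in $|\sche{S} \cup \sig{\ont{O}}|$ and exponential in $\width{\sche{S}}$.

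Transitions implement a least-fixpoint derivation check: from $(R,\bar{\pi})$ at $v$, the automaton existentially guesses one of (i) $R(\bar{c})$ is a database atom readable from the label $t(v)$; (ii) the atom has already been derived at a neighbour $v'$ of $v$ that shares the positions $\bar{\pi}$ with $v$, realised by a two-way move and a re-indexing of $\bar{\pi}$; or (iii) it is produced by a guarded TGD $\varphi \rightarrow \exists \bar{z}\,\psi$ whose head can be unified with $R(\bar{c})$, in which case the automaton guesses a matching of the body variables (guard and frontier included) to positions of $v$ or an adjacent bag, and universally spawns a call for each body atom. The initial state enters $(P,\epsilon)$ at the root, and the parity condition (with two priorities) rejects infinite derivation loops so that acceptance coincides with the least-fixpoint derivability of $P$.

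The key technical obstacle is correctly simulating the fresh nulls introduced by existentially quantified variables of rule heads, since these do not appear as constants in any bag of $t$. Following the standard guarded-logic pattern, we exploit that any subsequent rule firing on such a null must be guarded by a head atom of its creating rule; hence a derivability query about an atom over fresh nulls can be folded back to a derivability query about atoms over the rule's frontier inside the bag where the creating rule fired, handled by the constant-size control states. Finally, although the state set is only exponential, writing out the transition function, which for each state and each of the doubly exponentially many letters of $\Gamma_{\sche{S}}$ produces a Boolean combination witnessing one of (i)--(iii), takes double exponential time in $|Q|$ overall, matching the claimed construction bound.
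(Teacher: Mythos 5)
Your overall skeleton matches the paper's intended construction (the paper obtains $\ata{A}_Q$ as the counter-free version of the cost automaton $\ata{H}_Q$): states are single atoms $R(\ve{a})$ over the names/positions of a bag, giving $O(|\sche{S} \cup \sig{\ont{O}}| \cdot \width{\sche{S}}^{\width{\sche{S}}})$ states, two-way navigation moves the target atom to a node whose bag covers its arguments, and priorities $\set{0,1}$ enforce well-founded (least-fixpoint) derivations. The genuine gap is in your step (iii) together with the closing paragraph on nulls. In the paper's automaton, a ``derivation step'' is not the application of a single TGD of $\ont{O}$: the transition from state $R(\ve{a})$ offers all \emph{guarded conjunctions} $\alpha_1 \land \cdots \land \alpha_m$ of atoms over the current bag's names such that $(\set{\alpha_1,\ldots,\alpha_m},\ont{O}) \models R(\ve{a})$. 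This entailment may hide an arbitrarily deep sub-chase over nulls, and computing these transitions is precisely where the double-exponential construction time comes from (entailment of an atom from a guarded set of facts under guarded TGDs is \twoexp\ in combined complexity). Your version instead guesses one rule firing, maps its body variables to bag positions, and universally spawns an independent derivability call per body atom, deferring nulls to ``a constant number of auxiliary control states''.

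That deferral does not work as stated. First, decomposing a rule body into independent calls is unsound once two body atoms share an existential null: from the individual entailments of $S(b,z)$ for some null $z$ and of $U(b,z')$ for some null $z'$ you cannot conclude $\exists z\,(S(b,z) \land U(b,z))$, which is what a body $S(y,z), U(y,z)$ requires. Second, even the one-atom fold-back is not a bounded amount of control information: what can be derived over the frontier constants of a null-creating rule depends on the full set of atoms (the type) entailed over the guarded tuple containing those nulls, whose sub-chase the automaton cannot visit because $t$ encodes only the database, not the chase. Handling this correctly either forces states that record sets of atoms over a guarded tuple (blowing past your claimed state bound) or an offline saturation/entailment computation folded into the transition function, which is exactly the paper's solution. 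Your attribution of the double-exponential construction time solely to the size of $\Gamma_{\sche{S}}$ misses this: the essential cost is the entailment reasoning through nulls that your transition relation, as written, never performs.
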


The crucial task is to check condition 2(c)
of~\Cref{pro:semanticrefined}, which states the key minimality
criterion. Unfortunately, this involves an extra exponential blowup:

\begin{lemma}
  \label{lem:atamin}
  There is a 2ATA $\ata{M}_Q$ that accepts a consistent $\Gamma_{\sche{S}}$-labeled tree $t$ iff $\dec{t} \setminus \set{\alpha} \not\models Q$ for all $\alpha \in \dec{t}$. The state set of $\ata{M}_Q$ is of double exponential size in $\width{\sche{S}}$, and of exponential size
  in $|\sche{S} \cup \sig{\ont{O}}|$. Furthermore, $\ata{M}_Q$ can be
  constructed in double exponential time in the size of $Q$.
\end{lemma}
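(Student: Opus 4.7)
The goal is to build a 2ATA $\ata{M}_Q$ checking property 2(c) of~\Cref{pro:semanticrefined}: for every $\alpha \in \dec{t}$, $\dec{t} \setminus \set{\alpha} \not\models Q$. By $\eta$-simplicity, facts of $\dec{t}$ are in bijection with black nodes of $t$, so the requirement reduces to: for every black node $v$ of $t$, $\dec{t} \setminus \set{\eta(v)} \not\models Q$. I plan to realize the universal quantification over $v$ by universal alternation, and the non-satisfaction check for a fixed $v$ by complementing a suitable variant of $\ata{A}_Q$ from~\Cref{lem:atasat}.

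Over the enriched alphabet $\Gamma^\sharp_{\sche{S}} = \Gamma_{\sche{S}} \times \set{0,1}$, whose extra bit flags a node as ``deleted'', I first build a 2ATA $\ata{A}^\sharp_Q$ accepting a consistent $\Gamma^\sharp_{\sche{S}}$-labeled tree $t^\sharp$ iff exactly one black node is flagged and, for this node $v^*$, $\dec{t^\sharp} \setminus \set{\eta(v^*)} \models Q$. The construction starts from $\ata{A}_Q$ by (i)~intersecting with a small 2ATA enforcing ``exactly one flagged black node'', and (ii)~redirecting every transition that reads a flagged symbol to the one that would be taken if the underlying node were unadorned. Neither step affects the asymptotic state count of $\ata{A}_Q$.

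I then complement $\ata{A}^\sharp_Q$ via the standard route through an equivalent one-way non-deterministic bottom-up tree automaton, flipping the acceptance condition, and translating back to a 2ATA. This is where the extra exponential blow-up arises: the resulting $\overline{\ata{A}^\sharp_Q}$ has state set double exponential in $\width{\sche{S}}$ and exponential in $|\sche{S} \cup \sig{\ont{O}}|$, and it accepts $t^\sharp$ iff the flagged node is black and its removal prevents satisfaction of $Q$.

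Finally I assemble $\ata{M}_Q$ on the original alphabet $\Gamma_{\sche{S}}$: using universal alternation, $\ata{M}_Q$ traverses $t$ and at each black node $v$ launches a simulation of $\overline{\ata{A}^\sharp_Q}$ on the ``virtually flagged'' tree in which only $v$ is flagged. The simulation is a product construction pairing each state of $\overline{\ata{A}^\sharp_Q}$ with a bit flagging whether the current simulation position is the spawn node~$v$; this bit dictates whether the flagged or unflagged branch of $\overline{\ata{A}^\sharp_Q}$ fires. The main technical obstacle is that the two-way moves of $\overline{\ata{A}^\sharp_Q}$ may revisit $v$, and the at-mark bit must track this faithfully without an explicit address register. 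I handle it by performing the simulation on the one-way form of $\overline{\ata{A}^\sharp_Q}$ already produced during complementation: in the one-way form a run descends without returning to strict ancestors, so the at-mark bit is set at the spawn and is cleared by each descending transition, remaining correct on every branch of the run. The resulting $\ata{M}_Q$ meets the stated bounds and is constructible in double exponential time in $|Q|$, dominated by the complementation step on top of $\ata{A}_Q$.
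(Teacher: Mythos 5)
There is a genuine gap, in fact two, both at the points your construction hinges on. First, the complementation step: converting $\ata{A}^\sharp_Q$ into an equivalent one-way \emph{non-deterministic} bottom-up tree automaton and ``flipping the acceptance condition'' does not complement it---flipping acceptance only complements \emph{deterministic} automata. To repair this along your route you would have to determinize the bottom-up automaton first, which costs a further exponential on top of the exponential 2ATA-to-non-deterministic conversion, so $\overline{\ata{A}^\sharp_Q}$ would have triple- rather than double-exponentially many states in $\width{\sche{S}}$ and the claimed bounds break. The cheap correct way is to complement the \emph{alternating} automaton by dualization (swapping Adam's and Eve's formulas and shifting priorities), which is polynomial, and only then convert to a one-way non-deterministic form if you need it; this is exactly how the paper performs its single complementation.

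Second, the simulation of $\overline{\ata{A}^\sharp_Q}$ on the ``virtually flagged'' tree is described only for the part below the spawn node $v$: you set the at-mark bit at $v$ and clear it on descent, arguing that a one-way run ``descends without returning to strict ancestors.'' But an accepting run of a one-way tree automaton is a labeling of the \emph{entire} tree; launching the simulation at $v$ and only descending verifies nothing about the ancestors of $v$ and the subtrees hanging off the root-to-$v$ path, so as written the construction checks acceptance of the subtree rooted at $v$, not of $t$ with $v$ flagged. A correct middle-out simulation must, in addition, guess the run state at $v$, ascend from $v$ to the root verifying a consistent transition at each ancestor (read with its unflagged label) and spawning downward verifications of the side subtrees, and check the initial state at the root; only then does your observation that the flag matters solely at $v$ itself make the bit bookkeeping sound. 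This is doable within the stated bounds, but it is precisely the delicate step and the proposal does not supply it. Note that the paper sidesteps both issues with a different decomposition: it works over an extended alphabet with tagged facts, builds a ``deletion'' automaton $\ata{D}_Q$, handles the quantification over deleted facts \emph{existentially} by projection after one exponential conversion to a non-deterministic automaton, and then complements once at the end by dualizing the 2ATA, so that the universal quantification over $\alpha\in\dec{t}$ and the non-entailment $\dec{t}\setminus\set{\alpha}\not\models Q$ both emerge from a single polynomial complementation rather than from a per-node simulation of a complement automaton.
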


Let us briefly explain how $\ata{M}_Q$ is constructed. This will expose the source of the extra exponential blowup, which prevents us from obtaining an optimal complexity upper bound for $\forew{\class{G}}{\class{AQ}_0}$.
We first construct a 2ATA $\ata{D}_Q$ that runs on
$\Lambda_{\sche{S}}$-labeled trees, where $\Lambda_{\sche{S}}$ is an
alphabet that extends $\Gamma_{\sche{S}}$ with auxiliary symbols that
allow us to tag some facts in the input tree. In particular, $\ata{D}_Q$
accepts a tree $t$ iff $t$ is consistent, there is at least one tagged
fact, and $\dec{t}^{-} \models Q$ where $\dec{t}^{-}$ is obtained from
$\dec{t}$ by removing the tagged facts.
Having $\ata{D}_Q$ in place, we can then construct a 2ATA $\exists\ata{D}_Q$ that accepts a $\Gamma_{\sche{S}}$-labeled tree $t$ if there is a way to tag some of its facts so as to obtain a $\Lambda_{\sche{S}}$-labeled tree $t'$ with $\dec{t'}^{-} \models Q$. This is achieved by applying the projection operator on $\ata{D}_Q$. Since for 2ATAs projection involves an exponential blowup and $\ata{D}_Q$ already has exponentially many states, $\exists \ata{D}_Q$ has double exponentially many.
It should be clear now that $\ata{M}_Q$ is the complement of $\exists\ata{D}_Q$, and we recall that complementation of 2ATAs can be
done in polynomial time.

The desired automaton $\ata{B}_Q$ is obtained by intersecting the 2ATAs in~\Cref{lem:ataconsistency,lem:atarectwellc,lem:atasat,lem:atamin}. Since the intersection of 2ATA is feasible in polynomial time, $\ata{B}_Q$ can be constructed in double exponential time in the size of $Q$.

\section{Cost Automata Approach}\label{sec:secondapproach}

We proceed to study $\forew{\class{G}}{\class{AQ}_0}$
using the more sophisticated model of cost automata.  This allows us to
improve the complexity of the problem obtained in Corollary~\ref{coro:ata-approach} as follows:

\begin{theorem}\label{the:cost-automata-approach}
  $\forew{\class{G}}{\class{AQ}_0}$ is in \twoexp, and in \expo~for predicates of bounded arity.
\end{theorem}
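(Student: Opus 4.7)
The plan is to reduce $\forew{\class{G}}{\class{AQ}_0}$ to the boundedness problem for a suitable class of cost automata over finite trees, and then invoke the exponential-time upper bound of~\cite{BeCCB15}. By Theorem~\ref{pro:semanticmain}, $Q \in \tup{\class{G},\class{AQ}_0}$ is FO-rewritable iff there is a uniform bound $k \geq 0$ such that every entailing $\sche{S}$-database of tree-width at most $\width{\sche{S}}-1$ admits an entailing sub-database of at most $k$ facts. Using the tree encoding of Section~\ref{sec:firstapproach} over the alphabet $\Gamma_{\sche{S}}$, this becomes the statement that the cost function
\[
f_Q(t) \;=\; \min\set{\,|\db{D}'| : \db{D}' \subseteq \dec{t} \text{ and } \db{D}' \models Q\,}
\]
is uniformly bounded as $t$ ranges over consistent tree encodings whose decoding satisfies $Q$ (with $f_Q(t)=0$ on other inputs, so that boundedness of $f_Q$ on the full language is equivalent to FO-rewritability of $Q$).

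To realise $f_Q$, I would construct an alternating cost automaton $\ata{A}_Q$ on $\Gamma_{\sche{S}}$-labelled trees that, at each node, existentially guesses whether to \emph{select} the (at most one) fact labelling that node, incrementing a counter once per selection, and in parallel runs the guarded entailment automaton of Lemma~\ref{lem:atasat} on the sub-instance formed by the selected facts. The cost assigned to $t$ is the minimum, over all accepting runs, of the total number of increments; by construction this equals $f_Q(t)$. Consistency and related structural conditions are imported by intersection with the 2ATA of Lemma~\ref{lem:ataconsistency}. The decisive gain over Section~\ref{sec:firstapproach} is that the minimality check, which there required projection and therefore incurred an extra exponential blowup (Lemma~\ref{lem:atamin}), is now handled natively by the cost semantics. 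Consequently, $\ata{A}_Q$ has $2^{O(\width{\sche{S}})} \cdot \text{poly}(|\sche{S} \cup \sig{\ont{O}}|)$ many states.

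Finally, I would apply the boundedness theorem for the appropriate class of cost automata over finite trees from~\cite{BeCCB15}, which is solvable in exponential time in the number of states. With the state count of $\ata{A}_Q$ above, this yields a \twoexp~procedure in $|Q|$ in general, and an \expo~procedure when $\width{\sche{S}}$ is bounded by a constant, matching the claimed bounds.

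The main obstacle will be to ensure that $\ata{A}_Q$ fits syntactically into the precise model for which~\cite{BeCCB15} establishes exponential-time boundedness---typically a class of distance-parity, $B$-, or hierarchical $B$-automata over trees---and that the interaction between the counter increments and the universal branching of the entailment checker is such that the ``minimum-over-runs'' semantics captures exactly the minimum over sub-databases, rather than the minimum over arbitrary internal choices of the verifier. This usually demands a careful separation of cost-charged existential transitions from cost-silent universal ones; the tagging decisions must propagate along the universal branches of the entailment checker in a way that does not inadvertently reduce the measured cost. A secondary, routine concern is verifying that the bounded-arity case indeed collapses the exponential dependence on $\width{\sche{S}}$ throughout the construction.
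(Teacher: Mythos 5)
There is a genuine gap, and it sits exactly at the point you flag as ``the main obstacle'': the cost function you propose cannot be realised by the automata for which the exponential boundedness result of~\cite{BeCCB15} applies. In a $\dist \land \parity$-automaton the value of an input is $\inf$ over Eve's strategies of the $\sup$ over Adam's plays of the checked counter values \emph{along that single play}; costs are never aggregated over the whole run. So ``the minimum, over all accepting runs, of the total number of increments'' is not what $\val{\cdot}$ computes, and an automaton that increments once per selected fact measures only the number of selections seen along one branch of the acceptance game, not $|\db{D}'|$. Moreover, since Eve's selection choices may depend on the history of the play, her guesses need not be consistent across the universal branches of the entailment checker, so they do not even determine a single sub-database $\db{D}' \subseteq \dec{t}$; this is precisely the difficulty that forced the alphabet extension, nondeterminisation and projection (and hence the extra exponential) in Lemma~\ref{lem:atamin}, and asserting that the cost semantics handles it ``natively'' assumes the conclusion. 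As stated, the claim ``by construction this equals $f_Q(t)$'' is therefore unsupported, and under the standard semantics false.

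The paper closes this gap with an additional semantic step that your proposal is missing: Theorem~\ref{pro:semanticcost} replaces ``bounded size of a minimal entailing sub-database'' by ``bounded height of a minimal derivation tree'', i.e.\ finiteness of $\costq{Q}$. Because derivation trees have branching degree at most $k_Q$, bounded height and bounded number of leaves (hence bounded entailing subsets) are equivalent, which is exactly the content of that theorem. Height, unlike size, is a branch-wise quantity, so it is compatible with the play-based $\dist$ semantics: the automaton $\ata{H}_Q$ of Lemma~\ref{lem:costata} has facts over names as states, lets Eve unfold guarded rule applications (incrementing the counter once per unfolding) and Adam challenge one premise, and its value on a consistent tree $t$ is the minimal height of a derivation tree for $\dec{t}$ and $Q$; no selection or tagging of facts is needed at all. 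If you want to keep your ``select a sub-database and count it'' formulation, you would have to either prove an analogue of Theorem~\ref{pro:semanticcost} anyway or move to a cost-automaton model with run-aggregated counters, for which the cited boundedness theorem does not directly apply. Your final complexity accounting (exponentially many states in $\width{\sche{S}}$, double exponential automaton size, and Theorem~\ref{thm:boundednessdist} giving \twoexp{} in general and \expo{} for bounded arity) matches the paper, but it rests on the unestablished correctness of the automaton.
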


As in the previous approach, we develop a semantic characterization that relies on a minimality criterion for trees accepted by cost automata. The extra features provided by cost automata allow us to deal with such a minimality criterion in a more efficient way than standard 2ATA.
While our application of cost automata is transparent, the complexity analysis relies on an intricate result on the boundedness problem for a certain class of cost automata from~\cite{BeCCB15}.
Before we proceed further, let us provide a brief overview of the cost automata model that we are going to use. 

\medskip
\noindent
\textbf{Cost automata models.} Cost automata extend traditional automata (on words, trees, etc.) by providing counters that can be manipulated at each transition. Instead of assigning a Boolean value to each input structure (indicating whether the input is accepted or not), these automata assign a value from $\mathbb{N}_\infty = \mathbb{N} \cup \set{\infty}$ to each
input.

Here, we focus on cost automata that work on finite trees of unbounded degree, and allow for two-way movements; in fact, the automata that we need are those that extend 2ATA over finite trees with a \emph{single} counter. The operation of such an automaton $\ata{A}$ on each input $t$ will be viewed as a two-player \emph{cost game} $\ca{G}(\ata{A},t)$ between players Eve and Adam. Recall that the acceptance of an input tree for a conventional 2ATA can be formalized via a two-player game as well. However, instead of the parity acceptance condition for 2ATA, plays in the cost game between Eve and Adam will be assigned costs, and the cost automaton specifies via an \emph{objective} whether Eve's goal is to minimize or maximize that cost. In case of a minimizing (resp., maximizing) objective, a strategy $\xi$ of Eve in the cost game $\ca{G}(\ata{A},t)$ is \emph{$n$-winning} if any play of Adam consistent with $\xi$ has cost at most $n$ (resp., at least $n$). Given an input tree $t$, one then defines the \emph{value of $t$ in $\ata{A}$} as
\begin{align*}
\val{\ata{A}}(t) = \mathrm{op}\set{n \mid \text{Eve has an $n$-winning strategy in $\ca{G}(\ata{A},t)$}},
\end{align*}
where $\mathrm{op} = \inf$ (resp., $\mathrm{op} = \sup$) in case
Eve's objective is to minimize (resp., maximize). Therefore, $\val{\ata{A}}$ defines a function from the domain of input trees to $\mathbb{N}_\infty$. We call functions of that type \emph{cost functions}.
A key property of such functions is boundedness. We say that
$\val{\ata{A}}$ is {\em bounded} if there exists an $n \in \mathbb{N}$ such
that $\val{\ata{A}}(t) \leq n$ for every input tree $t$.


We employ automata with a single counter, where Eve's
objective is to minimize the cost, while satisfying the parity condition. Such an automaton is known in the literature as \emph{$\dist \land \parity$-automaton}~\cite{BeCCB15}. To navigate in the tree, it may use the directions $\set{0, \updownarrow}$, where $0$ indicates that the automaton should stay in the current node, and $\updownarrow$ means that the automaton may move to an arbitrary neighboring node, including the parent. For this type of automaton, we can decide whether its cost function is bounded~\cite{BeCCB15,CoF16}. As usual, $\size{\ata{A}}$ denotes the size $\ata{A}$. Then:


\begin{theorem}
  \label{thm:boundednessdist}
  There is a polynomial $f$ such that, for every
  $\dist \land \parity$-automaton $\ata{A}$ using priorities $\set{0,1}$
  for the parity acceptance condition, the boundedness for
  $\val{\ata{A}}$ is decidable in time $\size{\ata{A}}^{f(m)}$, where
  $m$ is the number of states of $\ata{A}$.
\end{theorem}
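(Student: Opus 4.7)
The result is essentially drawn from the literature on regular cost functions, so the plan is to follow the strategy of Benedikt, Bojańczyk, Colcombet, and collaborators~\cite{BeCCB15} rather than reprove the full machinery from scratch. The first step is to analyse the cost game $\ca{G}(\ata{A},t)$ induced by a $\dist \land \parity$-automaton $\ata{A}$ with priorities in $\set{0,1}$. Since the counter is single, the value $\val{\ata{A}}(t)$ is determined by the supremum over Adam's plays (consistent with Eve's best strategy) of the maximum counter value encountered, subject to the parity condition, which in this priority range collapses to a co-Büchi condition: Eve must force priority-$1$ states to be visited only finitely often on every play. Both the counter behaviour and the co-Büchi condition can be captured uniformly by hierarchical $B$-automata over infinite positional strategies.

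Second, I would invoke the fundamental reduction from boundedness of such cost functions to a \emph{game-theoretic} emptiness problem over a derived non-cost automaton. Concretely, for each candidate bound $n$ one can build a parity game whose positions are pairs $(q, \mu)$, where $q$ is a state of $\ata{A}$ and $\mu$ is a profile taken from a stabilisation monoid that abstracts the counter dynamics through idempotents and their stabilisations. The monoid has size singly exponential in $m$, so the game has size $\size{\ata{A}}^{f(m)}$ for a suitable polynomial $f$. Boundedness of $\val{\ata{A}}$ is then equivalent to Eve winning a parity game on this product, which she can do iff the abstract profiles she reaches avoid a certain ``unbounded'' pattern.

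Third, I would appeal to the solvability of parity games with a fixed number of priorities in polynomial time in the size of the game. Because our priority set is $\set{0,1}$, the derived game has priorities of constant index, so it can be solved in time polynomial in $\size{\ata{A}}^{f(m)}$, giving the claimed $\size{\ata{A}}^{f(m)}$ bound (possibly after redefining $f$). The final step is to verify correctness of the reduction, i.e.\ that Eve's winning strategies in the abstract game lift to uniformly $n$-winning strategies in every $\ca{G}(\ata{A},t)$, and conversely that an unbounded family of inputs gives rise to a winning strategy for Adam in the abstract game.

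The main obstacle is the correctness of the stabilisation-monoid abstraction: this is the heart of Colcombet's theory of regular cost functions and precisely the place where~\cite{BeCCB15} does non-trivial work. For our purposes, however, the theorem is used as a black box, and the only thing I would need to verify carefully in the proof proposal is that the parameters of our $\dist \land \parity$-automata (single counter, two-way moves over finite trees of bounded degree, priorities $\set{0,1}$) fit the hypotheses of the cited boundedness result; everything else then follows directly.
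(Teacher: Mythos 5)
Your treatment matches the paper's: the paper does not prove this theorem itself but imports it directly from the cost-automata literature (\cite{BeCCB15,CoF16}), which is exactly what you do when you invoke the boundedness result as a black box and only check that the single-counter, two-way, priority-$\set{0,1}$ $\dist \land \parity$ model fits its hypotheses. Your sketch of the internal machinery (stabilisation monoids, derived parity games) is extra colour beyond what the paper provides, but since you explicitly defer correctness of that machinery to the cited work, your proposal is in essence the same argument.
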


Our goal is to reduce $\forew{\class{G}}{\class{AQ}_0}$ to the
boundedness problem for $\dist \land \parity$-automata.



\medskip
\noindent
\textbf{A refined semantic characterization.}
We first need to revisit the semantic characterization provided by Theorem~\ref{pro:semanticmain}.

Consider an $\sche{S}$-database $\db{D}$, and a query $Q = \tup{\sche{S},\ont{O},P} \in \tup{\class{G},\class{AQ}_0}$. Let $k_Q = |\sche{S} \cup \sig{\ont{O}}|\cdot w^w$, where $w = \width{\sche{S} \cup \sig{\ont{O}}}$.
A \emph{derivation tree} for $\db{D}$ and $Q$ is a labeled $k_Q$-ary tree $\ca{T}$, with $\eta$ being a node labeling function that assigns facts $R(\ve{a})$, where $R \in \sche{S} \cup \sig{\ont{O}}$ and $\ve{a} \subseteq \adom{\db{D}}$, to its nodes, that satisfies the following conditions:
\begin{enumerate}
  \item For the root node $v$ of $\ca{T}$, $\eta(v) = P$.

  \item For each leaf node $v$ of $\ca{T}$, $\eta(v) \in \db{D}$.

  \item For each non-leaf node $v$ of $\ca{T}$, with $u_1,\ldots,u_k$
    being its children,
    $(\set{\eta(u_1),\ldots,\eta(u_k)},\ont{O}) \models \eta(v)$.
\end{enumerate}
Roughly, $\ca{T}$ describes how the $0$-ary predicate $P$ can be entailed from $\db{D}$ and $\ont{O}$. In fact, it is easy to show that $\db{D} \models Q$ iff there is a derivation tree for $\db{D}$ and $Q$.
The \emph{height} of $\ca{T}$, denoted $\height{\ca{T}}$, is the maximum length of a branch in $\ca{T}$, i.e., of a path from the root to a leaf node. Assuming that $\db{D} \models Q$, the \emph{cost of $\db{D}$ w.r.t.~$Q$}, denoted $\cost{\db{D}}{Q}$, is defined as
\[
\min\set{\height{\ca{T}} \mid \text{$\ca{T}$ is a derivation tree for $\db{D}$ and $Q$}},
\]
while the \emph{cost} of $Q$, denoted \costq{Q}, is defined as
\begin{align*}
    \sup \set{\cost{\db{D}}{Q} \mid\ &\text{$\db{D} \models Q$, $\db{D}$ is an $\sche{S}$-database} \\
                                                      &\text{with $\tw{\db{D}} \leq \max\set{0, \width{\sche{S}}-1}$}}.
\end{align*}
In other words, the cost of $Q$ is the {\em least upper bound} of the
height over all derivation trees for all $\sche{S}$-databases $\db{D}$
of width at most $\max\set{0, \width{\sche{S}}-1}$ such that
$\db{D} \models Q$. If there is no such a database, then the cost of $Q$
is zero since $\sup \emptyset = 0$. Actually,
$\costq{Q} = 0$ indicates that $Q$ is unsatisfiable,
which in turn means that $Q$ is trivially FO-rewritable.

Having the notion of the cost of an OMQ from $\tup{\class{G},\class{AQ}_0}$ in place, it should not be difficult to see how we can refine the semantic
characterization provided by Theorem~\ref{pro:semanticmain}.

\begin{theorem}
  \label{pro:semanticcost}
  Consider an OMQ $Q \in \tup{\class{G},\class{AQ}_0}$ with data schema $\sche{S}$. The following are equivalent:
  \begin{enumerate}
  \item Condition 2 from Theorem~\ref{pro:semanticmain} is satisfied.
  \item $\costq{Q}$ is finite.
  \end{enumerate}
\end{theorem}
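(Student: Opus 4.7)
The plan is to prove the two directions of the equivalence by elementary combinatorial arguments relying on the interplay between derivation trees and the size of witnessing subdatabases.

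For the direction ``$\costq{Q}$ finite $\Rightarrow$ Condition 2 of Theorem~\ref{pro:semanticmain}'', I would argue as follows. Suppose $\costq{Q} = h < \infty$. Given an $\sche{S}$-database $\db{D}$ of tree-width at most $\width{\sche{S}}-1$ with $\db{D} \models Q$, the definition of $\costq{Q}$ guarantees a derivation tree $\ca{T}$ for $\db{D}$ and $Q$ of height at most $h$. Since $\ca{T}$ is $k_Q$-ary, it has at most $k_Q^h$ leaves. Letting $\db{D}'$ denote the set of facts labelling these leaves, we obtain $\db{D}' \subseteq \db{D}$, $|\db{D}'| \le k_Q^h$, and $\ca{T}$ remains a valid derivation tree for $\db{D}'$ and $Q$ (since its leaf labels are exactly the elements of $\db{D}'$). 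Hence $\db{D}' \models Q$, and Condition 2 of Theorem~\ref{pro:semanticmain} holds with $k = k_Q^h$.

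For the converse, suppose Condition 2 holds with bound $k$. I would first observe that, up to renaming of constants, there are only finitely many $\sche{S}$-databases with at most $k$ facts, because each such database has at most $k \cdot \width{\sche{S}}$ constants. Let $\mathcal{F}$ denote a set of representatives of the isomorphism classes. For each $\db{D}' \in \mathcal{F}$ with $\db{D}' \models Q$ there exists some derivation tree (using the easily verified correspondence between entailment and existence of a derivation tree), so $\cost{\db{D}'}{Q}$ is a well-defined finite number. Put $H := \max\{\cost{\db{D}'}{Q} \mid \db{D}' \in \mathcal{F},\ \db{D}' \models Q\}$, which is finite as a maximum over a finite set.

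Now, for any $\sche{S}$-database $\db{D}$ of tree-width at most $\width{\sche{S}}-1$ with $\db{D} \models Q$, Condition 2 yields a subdatabase $\db{D}' \subseteq \db{D}$ with $|\db{D}'| \le k$ and $\db{D}' \models Q$. Since $\db{D}'$ is (up to isomorphism) an element of $\mathcal{F}$, there is a derivation tree $\ca{T}$ for $\db{D}'$ of height at most $H$. Because $\db{D}' \subseteq \db{D}$, the tree $\ca{T}$ is also a derivation tree for $\db{D}$ and $Q$, so $\cost{\db{D}}{Q} \le H$. Taking the supremum over all such $\db{D}$ gives $\costq{Q} \le H < \infty$. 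The only subtlety worth double-checking is the invariance of $\cost{\cdot}{Q}$ under isomorphism of databases, which is immediate by transporting derivation trees along the constant-renaming bijection; beyond this, the argument is purely a finiteness-through-pigeonhole observation, and I do not anticipate any real obstacle.
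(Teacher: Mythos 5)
Your proposal is correct and follows essentially the same route as the paper's proof: one direction extracts the leaf labels of a minimum-height derivation tree (bounded in number since the height is at most $\costq{Q}$ and the branching degree is at most $k_Q$) as the small witnessing subdatabase, and the other direction bounds $\costq{Q}$ by the maximum derivation-tree height over the finitely many (up to isomorphism) databases with at most $k$ facts that entail $Q$. The only cosmetic difference is that you use the explicit bound $k_Q^h$ on the number of leaves where the paper phrases it as a finite supremum; both rest on the same correspondence between entailment and derivation trees (\Cref{lem:derivtree}).
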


\noindent
\textbf{Devising automata.} We briefly describe how we can use cost automata 
in order to devise an algorithm for $\forew{\class{G}}{\class{AQ}_0}$ that runs in double exponential time.

Consider an OMQ $Q = \tup{\sche{S},\ont{O},P} \in \tup{\class{G},\class{AQ}_0}$. Our goal is to devise a $\dist \land \parity$-automaton $\ata{B}_Q$ such that the cost function $\val{\ata{B}_Q}$ is bounded iff $\costq{Q}$ is finite. Therefore, by Theorems~\ref{pro:semanticmain} and~\ref{pro:semanticcost}, to check whether $Q$ is FO-rewritable we simply need to check if $\val{\ata{B}_Q}$ is bounded, which, by Theorem~\ref{thm:boundednessdist}, can be done in exponential time in the size of $\ata{B}_Q$.
The input trees to our automata will be over the same alphabet $\Gamma_{\sche{S}}$ that is used to encode
tree-like $\sche{S}$-databases in~\Cref{sec:firstapproach}.
Recall that for a $\dist \land \parity$-automaton $\ata{A}$, the cost function $\val{\ata{A}}$ is bounded over a certain class $\ca{C}$ of trees if there is an $n \in \mathbb{N}$ such that $\val{\ata{A}}(t) \leq n$ for every input tree $t \in \ca{C}$. Then:
%

\begin{lemma}
  \label{lem:costata}
  There is a $\dist \land \parity$-automaton $\ata{H}_Q$ such that   $\val{\ata{H}_Q}$ is bounded over consistent $\Gamma_{\sche{S}}$-labeled trees iff $\costq{Q}$ is finite. The number of states of $\ata{H}_Q$ is exponential in $\width{\sche{S}}$, and polynomial in $|\sche{S} \cup \sig{\ont{O}}|$. Moreover, $\ata{H}_Q$ can be constructed in double exponential time in the size of $Q$.
\end{lemma}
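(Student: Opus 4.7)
The plan is to reduce finiteness of $\costq{Q}$ to boundedness of a cost automaton $\ata{H}_Q$ in which the minimizing player, Eve, plays the role of building a minimal derivation tree for $Q$ over $\dec{t}$, with the counter tracking the height of her derivation. On a consistent $\Gamma_{\sche{S}}$-labeled input $t$ encoding $\db{D} = \dec{t}$, Eve is first asked to commit to one of two modes: (A) refuting that $\dec{t} \models Q$ by playing the complement of the automaton $\ata{A}_Q$ from~\Cref{lem:atasat} with no counter moves, or (B) producing a derivation. In option (B) she repeatedly either declares the current fact to be a base fact of $\db{D}$ (verified against the local $\Gamma_{\sche{S}}$-label of the current node, with no counter move) or applies a TGD of $\ont{O}$, which increments the counter and lets Adam pick one of the TGD's body atoms to be the next derivation obligation. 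A $\{0,1\}$ parity condition, with derivation transitions getting priority $0$ and pure navigation priority $1$, makes any infinite non-derivation play losing for Eve. Together these choices yield $\val{\ata{H}_Q}(t)=\cost{\dec{t}}{Q}$ whenever $\dec{t} \models Q$ and $\val{\ata{H}_Q}(t)=0$ otherwise, so by~\Cref{pro:semanticcost} the cost function $\val{\ata{H}_Q}$ is bounded over consistent trees iff $\costq{Q}$ is finite.

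A working state of $\ata{H}_Q$ in option (B) is a pair $(R,\sigma)$, where $R \in \sche{S} \cup \sig{\ont{O}}$ and $\sigma$ is a partial function from $\{1,\ldots,\mathrm{arity}(R)\}$ to the at most $\width{\sche{S}}$ slots of the current bag; it encodes, locally, the fact Eve currently has to derive. There are $O(|\sche{S}\cup\sig{\ont{O}}|\cdot(\width{\sche{S}}+1)^{\width{\sche{S}}})$ such states, which matches the exponential-in-width, polynomial-in-signature bound announced in the lemma. A few further navigation states, together with the states of the complemented automaton for option (A) (whose complementation costs only a polynomial), lie within the same bound.

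At a working state $(R,\sigma)$ placed at a node whose bag contains all named elements of the tuple, Eve's moves are (a) declare the fact a base fact of $\db{D}$, matching it against the label of a neighbouring node reachable via $\updownarrow$, or (b) nondeterministically pick a TGD $\varphi\to\exists\ve{z}\,\psi$ of $\ont{O}$, a unification of $\psi$ with the current fact, and an assignment of the remaining body variables to bag elements; the counter then advances, Adam chooses one body atom, and control passes to the working state encoding that atom. Because the TGDs of $Q$ are guarded and the tree decomposition has width $\width{\sche{S}}-1$, each body guard lies inside some single bag, so after Adam's choice the head position in $t$ need only travel a bounded distance along $t$ to reach a bag hosting the new tuple; the encoding $\sigma$ is translated across each edge using the bag-overlap information already enforced by the consistency constraints of $\ata{C}_{\sche{S}}$.

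The main obstacle is precisely this tuple relabelling during bag-to-bag transport in option (B): one must argue that the localized view of bags of size $\width{\sche{S}}$ is rich enough to witness every TGD application used in some derivation, and that $\sigma$ can always be translated between neighbouring bags using only the slot-bijection information embedded in $\Gamma_{\sche{S}}$, with no extra states beyond the budget. Once this bookkeeping is sound, $\val{\ata{H}_Q}(t)=\cost{\dec{t}}{Q}$ (when $\dec{t}\models Q$) follows by induction on derivation height: a derivation of height $h$ gives Eve an $h$-winning strategy, and an $h$-winning strategy can be unfolded into a derivation of height at most $h$. The transition tables of both options are enumerable in double exponential time in $|Q|$ because $\Gamma_{\sche{S}}$ is already double exponential in $\width{\sche{S}}$, giving $\ata{H}_Q$ within the claimed size and construction-time bounds.
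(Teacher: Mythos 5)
Your overall architecture matches the paper's: Eve builds a derivation for the goal predicate on the fly, the single counter tracks the height of the derivation, a $\set{0,1}$ parity condition forces every play to ground out in a database fact of $\dec{t}$, and boundedness is then transferred to finiteness of $\costq{Q}$ via \Cref{pro:semanticcost}. (Your extra ``mode (A)'', letting Eve certify $\dec{t}\not\models Q$ cheaply, is not present in the paper's construction, which simply assigns such trees no finite value; that addition is harmless in spirit but note that complementing $\ata{A}_Q$ may cost you the restriction to priorities $\set{0,1}$ needed for \Cref{thm:boundednessdist}.)

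The genuine gap is in your derivation step. In mode (B) you let Eve apply a \emph{single} TGD of $\ont{O}$, with all remaining body variables assigned to elements of the current bag, i.e.\ to elements of $\dec{t}$. This is incomplete: under guarded TGDs, deriving a fact over $\adom{\dec{t}}$ may require intermediate facts over existential witnesses (chase-generated nulls) that are not elements of $\dec{t}$ and are not named in any bag --- e.g.\ with $A(x)\limpl\exists y\,R(x,y)$ and $R(x,y)\limpl B(x)$, the fact $B(a)$ is entailed from $A(a)$ but your automaton has no bag element to instantiate $y$ with, so Eve gets stuck even though $\dec{t}\models Q$. The paper avoids exactly this by making one Eve move consist of choosing a whole \emph{guarded conjunction} of facts over the name set and checking semantic entailment $(\set{\alpha_{i,1},\ldots,\alpha_{i,m_i}},\ont{O})\models R(\ve{a})$ inside the transition function; this matches the definition of derivation trees (whose non-leaf condition is entailment under $\ont{O}$, not a single rule firing, cf.\ \Cref{lem:derivtree}) and is precisely where the double-exponential construction time comes from. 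Moreover, even if you patched completeness, your counter counts single rule applications rather than the height of a derivation tree in the paper's sense, so your claimed identity $\val{\ata{H}_Q}(t)=\cost{\dec{t}}{Q}$ would fail and you would still need a separate argument that boundedness of your ``small-step'' cost coincides with finiteness of $\costq{Q}$. By contrast, the bag-to-bag relabelling of tuples that you single out as the main obstacle is not where the difficulty lies: the paper sidesteps it entirely by using the globally consistent names $U_{\sche{S}}$ of the encoding as state components and punishing Eve (case (C1)) whenever the current node does not carry all names of the pending fact.
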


The automaton $\ata{H}_Q$ is built in such a way that, on an input tree $t$, Eve has an $n$-winning strategy in $\ca{G}(\ata{H}_Q,t)$ iff there is a derivation tree for $\dec{t}$ and $Q$ of height at most $n$.
%
%
Thus, Eve tries to construct derivation trees of minimal height. The counter is used to count the height of the derivation tree.

Having this automaton in place, we can now complete the proof of Theorem~\ref{the:cost-automata-approach}. The desired $\dist \land \parity$-automaton $\ata{B}_Q$ is defined as $\ata{C}'_{\sche{S}} \cap \ata{H}_Q$, where $\ata{C}'_{\sche{S}}$ is similar to the 2ATA $\ata{C}_\sche{S}$ (in Lemma~\ref{lem:ataconsistency}) that checks for consistency of $\Gamma_{\sche{S}}$-labeled trees of bounded degree.
Notice that $\ata{C}'_{\sche{S}}$ is essentially a $\dist \land \parity$-automaton that assigns zero (resp.,~$\infty$) to input trees that are consistent (resp.,~inconsistent), and thus, $\ata{C}'_{\sche{S}} \cap \ata{H}_Q$ is well-defined.
Since the intersection of $\dist \land \parity$-automata is feasible in polynomial time, Lemma~\ref{lem:ataconsistency} and Lemma~\ref{lem:costata} imply that $\ata{B}_Q$ has exponentially many states, and it can be constructed in double exponential time. Lemma~\ref{lem:costata} implies also that $\val{\ata{B}_Q}$ is bounded iff $\costq{Q}$ is finite.
It remains to show that the boundedness of $\val{\ata{B}_Q}$ can be checked in double exponential time. By \Cref{thm:boundednessdist}, there is a polynomial $f$ such that the latter task can be carried out in time $\size{\ata{B}_Q}^{f(m)}$, where $m$ is the number of states of $\ata{B}_Q$, and the claim follows.
For predicates of bounded arity, a similar complexity analysis as above shows a single exponential time upper bound.

\section{Frontier-Guarded OMQs}\label{sec:treeification}

The goal of this section is to show the following result:

\begin{theorem}\label{the:main-result}
It holds that:
\begin{itemize}
\item $\forew{\class{FG}}{\class{Q}}$ is~\twoexp-complete, for each $\class{Q} \in \{\class{UCQ},\class{CQ},\class{AQ}_0\}$, even for predicates of bounded arity.

\item $\forew{\class{G}}{\class{Q}}$ is~\twoexp-complete, for each $\class{Q} \in \{\class{UCQ},\class{CQ}\}$, even for predicates of bounded arity.

\item $\forew{\class{G}}{\class{AQ}_0}$ is~\twoexp-complete. Moreover, for predicates of bounded arity it is \expo-complete.
\end{itemize}
\end{theorem}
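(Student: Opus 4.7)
The plan is to establish the upper bounds via a chain of reductions terminating in $\tup{\class{G},\class{AQ}_0}$, for which Theorem~\ref{the:cost-automata-approach} provides the needed bounds, and to derive the lower bounds from known hardness results for guarded logics. All reductions must preserve FO-rewritability.

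First I would reduce the query side: given an OMQ $Q = \tup{\sche{S},\ont{O},q(\ve{x})}$ with $q$ a (U)CQ, introduce a fresh $0$-ary predicate $P$ and add a rule $q_i(\ve{x}) \to P$ for each disjunct $q_i$ (with the free variables $\ve{x}$ handled via a standard marker-atom reduction that preserves FO-rewritability). The new rules are not in general frontier-guarded since $q_i$ is an arbitrary CQ. Here I would apply the \emph{treeification} technique of~\cite{BLMS11,CaGK13}: because the universal model of a frontier-guarded OMQ is tree-decomposable of width $\width{\sche{S} \cup \sig{\ont{O}}}-1$, every homomorphism from $q_i$ into the universal model factors through a tree-shaped CQ of the same width. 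Enumerating all such tree-shaped CQs yields a union of frontier-guarded rules equivalent to the original, at the cost of a single exponential blowup.

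Next I would reduce $\tup{\class{FG},\class{AQ}_0}$ to $\tup{\class{G},\class{AQ}_0}$ by the standard guardification trick: for each frontier-guarded rule $\varphi(\ve{x},\ve{y}) \to \exists \ve{z}\,\psi(\ve{x},\ve{z})$, introduce a fresh predicate $G_\ve{x}$ of arity $|\ve{x}|$, add the non-guarded auxiliary rule $\varphi(\ve{x},\ve{y}) \to G_\ve{x}(\ve{x})$ (which is itself treeified as above), and replace the original rule with the guarded $G_\ve{x}(\ve{x}) \to \exists \ve{z}\,\psi(\ve{x},\ve{z})$. Composing these reductions yields an exponential-sized $\tup{\class{G},\class{AQ}_0}$-OMQ; Theorem~\ref{the:cost-automata-approach} then gives a \twoexp\ algorithm for all cases. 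In the bounded-arity setting, the predicates $G_\ve{x}$ have arity at most $\width{\sche{S} \cup \sig{\ont{O}}}$, hence constant, so the reduction stays in bounded arity; but the \emph{size} of the reduced OMQ is exponential in the original, which is why we obtain \twoexp\ rather than \expo\ as soon as a CQ/UCQ or frontier-guarded rule is involved. The sole exception is $\tup{\class{G},\class{AQ}_0}$ itself, which requires no reduction and hence inherits the \expo\ bound directly.

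The main obstacle is establishing the matching lower bounds. For \twoexp-hardness, I would reduce from a known \twoexp-complete problem in the area of guarded logics, such as the boundedness problem for monadic Datalog under guarded fragments or a suitable variant of CQ evaluation under guarded TGDs, following the reduction patterns used in~\cite{BeCCB15}. The core difficulty is that FO-rewritability concerns \emph{expressibility}, not entailment on a particular input, so the reduction must encode alternating Turing machine computations in such a way that \emph{unboundedness} of recursion in the OMQ mirrors non-termination of the machine. For the \expo-hardness of $\tup{\class{G},\class{AQ}_0}$ with bounded arity, a similar encoding with alternating polynomial-space machines suffices. Once hardness is proved for the minimal class, the remaining cases follow from the syntactic inclusions $\class{AQ}_0 \subseteq \class{CQ} \subseteq \class{UCQ}$ and $\class{G} \subseteq \class{FG}$.
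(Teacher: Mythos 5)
Your upper-bound route is the paper's route in outline (UCQ $\to$ Boolean UCQ via marker atoms, Boolean CQs as frontier-guarded rules, then treeification down to $(\class{G},\class{AQ}_0)$ and \Cref{the:cost-automata-approach}), but the step you treat as routine is precisely where the proof lives, and your justification for it is wrong. You claim that enumerating tree-shaped CQs yields guarded rules \emph{equivalent} to the original because every homomorphism into the universal model factors through a tree-shaped CQ. This fails: the universal model is the chase over an \emph{arbitrary} input database, and only the chase-generated part is tree-like; a rule body can map entirely into the database part (a clique, a grid), so the treeified ontology is strictly weaker on non-tree-like inputs and the resulting OMQ is not equivalent. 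The paper never claims equivalence. It enlarges the data schema with a fresh predicate $C$ of arity $\width{\ont{O}}$ and proves the weaker but sufficient statement that $Q$ is FO-rewritable iff $\translation{Q}{C}$ is (\Cref{lem:foequiv}), via Rossman-style UCQ rewritings, ``variants'' of CQs enriched with $C$-atoms, and bounded-tree-width witness databases (\Cref{lem:equivacyclic,lem:treemodelfg,lem:acymodel}); this is the technical heart of the upper bound and your proposal skips it. (Two smaller points: after the marker reduction the rules $q_i \limpl P$ have empty frontier and are already frontier-guarded, so treeification is only needed for the $\class{FG}$-to-$\class{G}$ step; and feeding an exponential-size guarded OMQ into a ``\twoexp'' procedure gives \threeexp\ under naive composition --- the bound survives only because the cost-automata procedure is double exponential solely in the arity and single exponential otherwise, and the translation increases arity only polynomially, a refinement you invoke for bounded arity but then bypass when asserting ``\twoexp\ for all cases''.)

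The lower bounds are essentially missing. Designing ATM encodings in which unboundedness of recursion mirrors non-termination is exactly the hard part of such a hardness proof, and your sketch gives no checkable construction. The paper avoids this: \twoexp-hardness for all items involving CQs/UCQs or $\class{FG}$ (already at bounded arity) is inherited from FO-rewritability of $\mathcal{ELI}$-based OMQs~\cite{BiHLW16}, \expo-hardness for bounded-arity $(\class{G},\class{AQ}_0)$ from~\cite{BiLW13}, and \twoexp-hardness of $(\class{G},\class{AQ}_0)$ at unbounded arity is proved by a short reduction from OMQ containment with an FO-rewritable right-hand side (implicit in~\cite{BaRV14}), using the gadget $R(x,y),A(y) \limpl A(x)$ and $A(x),B(x),G_1 \limpl G_2$ so that non-containment forces arbitrarily long $R$-chains in minimal witnesses. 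Finally, your plan to ``prove hardness for the minimal class and propagate along inclusions'' cannot deliver the bounded-arity \twoexp\ claims: at bounded arity the minimal class $(\class{G},\class{AQ}_0)$ is only \expo-complete, so \twoexp-hardness for bounded-arity $(\class{G},\class{CQ})$ or $(\class{FG},\class{AQ}_0)$ must be established directly (or, as in the paper, imported from the description-logic results, which use only unary and binary predicates).
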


\noindent
\textbf{Lower bounds.}
The \twoexp-hardness in the first and the second items is inherited from~\cite{BiHLW16}, where it is shown that deciding FO-rewritability for OMQs based on $\mathcal{ELI}$ and CQs is \twoexp-hard.
For the \twoexp-hardness in the third item we exploit the fact that containment for OMQs from $(\class{G},\class{AQ}_0)$ is \twoexp-hard, even if the right-hand side query is FO-rewritable; this is implicit in~\cite{BaRV14}.
Finally, the \expo-hardness in the third item is inherited from~\cite{BiLW13}, where it is shown that deciding FO-rewritability for OMQs based on $\mathcal{EL}$ and atomic queries is \expo-hard.

\medskip
\noindent
\textbf{Upper bounds.} The fact that for predicates of bounded arity $\forew{\class{G}}{\class{AQ}_0}$ is in \expo~is obtained from Theorem~\ref{the:cost-automata-approach}.
It remains to show that $\forew{\class{FG}}{\class{UCQ}}$ is in~\twoexp.
We reduce $\forew{\class{FG}}{\class{UCQ}}$ in polynomial time to $\forew{\class{FG}}{\class{AQ}_0}$, and then show that the latter is in \twoexp. This reduction relies on a construction from~\cite{BiHLW16}, which allows us to reduce $\forew{\class{FG}}{\class{UCQ}}$ to $\forew{\class{FG}}{\class{UBCQ}}$ with $\class{UBCQ}$ being the class of union of Boolean CQs, and the fact that a Boolean CQ can be seen as a frontier-guarded TGD.
To show that $\forew{\class{FG}}{\class{AQ}_0}$ is in \twoexp, we reduce it to $\forew{\class{G}}{\class{AQ}_0}$, and then apply Theorem~\ref{the:cost-automata-approach}. This relies on {\em treeification}, and is inspired by a translation of guarded negation fixed-point sentences into guarded fixed-point sentences~\cite{BaCS15}.
%
Our reduction may give rise to exponentially many guarded TGDs, but the arity is increased only polynomially. Since the procedure for $\forew{\class{G}}{\class{AQ}_0}$ provided by Theorem~\ref{the:cost-automata-approach} is double exponential only in the arity of the schema the claim follows.

\section{Future Work}\label{sec:cinclusion}

The procedure based on 2ATA provides an FO-rewriting in case the input OMQ admits one, but it is not tailored towards doing this in an efficient way. Our next step is to exploit the techniques developed in this work for devising practically efficient algorithms for constructing FO-rewritings.



\section*{Acknowledgements}
Barcel\'o is funded by the Millennium Institute for Foundational Research on Data and Fondecyt grant 1170109.
Berger is funded by the FWF project W1255-N23 and a DOC fellowship of the Austrian Academy of Sciences.
Lutz is funded by the ERC grant 647289 ``CODA''.
Pieris is funded by the EPSRC programme grant EP/M025268/ ``VADA''.


\section*{\LARGE Appendix}
\appendix

\section{Proofs for~\Cref{sec:semanticcharacterization}}

\subsection{Proof of~\Cref{pro:semanticmain}}

Let us first cite an important lemma that will be used in the proof
of~\Cref{pro:semanticmain} below:
\begin{lemma}
  \label{lem:treemodel}
  Let $Q$ be an OMQ from $\tup{\class{G},\class{AQ}_0}$ with data schema
  $\sche{S}$ and consider an $\sche{S}$-database $\db{D}$. If
  $\db{D} \models Q$ then there is an $\sche{S}$-database $\db{D}^\ast$
  of tree-width at most $\max\set{0, \width{\sche{S}} - 1}$ such that
  \begin{enumerate}
    \item $\db{D}^\ast \models Q$ and
    \item there is a homomorphism from $\db{D}^\ast$ to $\db{D}$.
  \end{enumerate}
\end{lemma}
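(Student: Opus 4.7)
I would prove this by unravelling $\db{D}$ along a derivation that witnesses $\db{D} \models Q$. Since $(\db{D},\ont{O}) \models P$ and $P$ is a nullary atom, standard proof theory for guarded TGDs yields a finite \emph{derivation tree} $\ca{T}$ of the same flavour as the one introduced in~\Cref{sec:secondapproach}: the root is labelled $P$, each internal node labelled $\alpha$ has children $\alpha_1, \ldots, \alpha_k$ such that $\alpha$ follows from $\set{\alpha_1, \ldots, \alpha_k}$ by a single application of some TGD of $\ont{O}$, and each leaf is labelled by a fact of $\db{D}$.

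From $\ca{T}$ I build $\db{D}^\ast$ by a guarded unravelling: I walk through $\ca{T}$ top-down and, at each internal node, I fix the guard atom $g$ of the applied TGD and propagate unchanged the constants appearing in $g$ to the subtree; the remaining constants that get introduced along a branch are replaced by fresh copies chosen independently for each sibling subtree. The leaves, which are labelled by $\sche{S}$-facts, contribute their (renamed) facts to $\db{D}^\ast$. The renaming then defines a homomorphism from $\db{D}^\ast$ to $\db{D}$, giving item~(2). To obtain item~(1), I replay $\ca{T}$ inside $\db{D}^\ast$: because the unravelling keeps the guard atom and all of its arguments synchronised throughout each branch, and because guardedness ensures that \emph{every} body variable of the applied TGD is already exposed by the guard, each rule application is sound in $\db{D}^\ast$. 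Hence $\db{D}^\ast \models Q$. Finally, I obtain a tree decomposition of $\db{D}^\ast$ of width at most $\max\set{0,\width{\sche{S}}-1}$ by placing one bag per (copy of a) leaf-fact of $\ca{T}$, connecting bags according to the tree shape of $\ca{T}$, and populating each bag with the constants of its fact; each bag then has at most $\width{\sche{S}}$ elements. Connectedness holds because any constant shared between two bags must, by construction, have been propagated through the guards along the unique path between them in $\ca{T}$.

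\textbf{Main obstacle.} The delicate step is the design of the unravelling so that, simultaneously, (i) the derivation $\ca{T}$ lifts from $\db{D}$ to $\db{D}^\ast$, and (ii) the connectedness condition of the tree decomposition is respected, even though internal nodes of $\ca{T}$ may carry atoms over the wider signature $\sig{\ont{O}}$ while only $\sche{S}$-atoms end up as bags. Guardedness is exactly the property that reconciles these two requirements: every applied TGD has a body guard that already contains all body variables, so a single propagation-through-guards policy suffices to both preserve the derivation and confine constant sharing to connected portions of~$\ca{T}$.
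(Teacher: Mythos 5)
Your overall route---replacing the paper's guarded unravelling of $\db{D}$ followed by compactness with an unravelling along a finite derivation tree---is viable in principle, but the two places where the real work happens are not right as written. First, the copying policy is never pinned down, and its literal reading is vacuous: since every TGD in $\class{G}$ is guarded, \emph{every} constant occurring in \emph{any} child of an internal node already occurs in the guard image $g$ at that node, so ``propagate unchanged the constants appearing in $g$ to the subtree'' propagates every constant of every child, the set of ``remaining constants introduced along a branch'' is empty, no fresh copies are ever created, and $\db{D}^\ast$ degenerates to the set of leaf facts, i.e.\ a subdatabase of $\db{D}$. That degenerate statement is false: take $\sche{S}=\set{E/2,S/1,T/1}$, let $\ont{O}$ consist of $S(x)\limpl A(x)$,\ $E(x,y),A(x)\limpl B(y)$,\ $E(x,y),B(x)\limpl A(y)$,\ $T(x),B(x)\limpl P$, and let $\db{D}$ be the directed triangle $E(v,a),E(a,b),E(b,v)$ together with $S(v),T(v)$. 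Then $\db{D}\models Q$, but every subdatabase entailing $Q$ contains all three $E$-facts, and the triangle has tree-width $2>\width{\sche{S}}-1$. What is actually needed is: a constant keeps the same copy along a branch only as long as it keeps occurring in the consecutive guard images; it is shared among the siblings of one rule application through that application's guard (it must be, or the body no longer matches); and it receives a fresh copy, chosen per subtree, whenever it re-appears after having dropped out. Your ``main obstacle'' paragraph gestures at exactly this but does not resolve it.

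Second, what you exhibit is not a tree decomposition. Placing bags only at (copies of) leaf facts and ``connecting bags according to the tree shape of $\ca{T}$'' is undefined, since leaves are not adjacent in $\ca{T}$; once the connecting internal nodes receive bags, connectedness forces the bag of an internal node to contain every constant copy shared between its different child subtrees, i.e.\ (up to) all constant copies of the guard image at that node. The guard is an atom over $\sche{S}\cup\sig{\ont{O}}$, whose arity need not be bounded by $\width{\sche{S}}$, so the bound ``each bag has at most $\width{\sche{S}}$ elements'' does not follow from the observation that the leaf facts are $\sche{S}$-facts. The bound is true, but for a reason you never supply and which is the heart of the lemma: by induction along the derivation (all body variables of a guarded TGD occur in its guard, so the constants of a newly derived atom are contained in the constants of the guard image), the constants of \emph{every} derived atom lie inside the constants of a \emph{single} fact of $\db{D}$ and hence number at most $\width{\sche{S}}$; taking as bags the guard-image constant copies of each rule-application node then yields width at most $\width{\sche{S}}-1$. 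This is precisely where $\width{\sche{S}}$, rather than $\width{\sche{S}\cup\sig{\ont{O}}}$, enters. Two smaller points: a single TGD application introduces existential nulls, so your node labels are chase atoms rather than facts over $\adom{\db{D}}$---either track nulls explicitly or use entailment steps from guarded sets as in \Cref{sec:secondapproach}, whose existence itself rests on the same guardedness argument; and note that the paper avoids derivation trees altogether by unravelling $\db{D}$ directly into a (possibly infinite) tree-like instance and extracting a finite witness by compactness, which is where your approach genuinely diverges.
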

\Cref{lem:treemodel} can be proved using the notion of \emph{guarded
  unraveling} and applying the compactness theorem (an almost verbatim
result can be found in~\cite{ourpods}).

\smallskip
\noindent
\textit{Proof of~\Cref{pro:semanticmain}.} Assume first that $Q$ is
FO-rewritable. Then there is a first-order sentence $\varphi_Q$ which is
equivalent over all $\sche{S}$-databases to $Q$. Notice that $Q$ is
closed under homomorphisms, hence so is $\varphi_Q$. By Rossman's
theorem~\cite{Ro08}, we thus know that $\varphi_Q$ must be
equivalent to a (Boolean) UCQ $q_Q = \bigvee_{i = 1}^n p_i$. Now we let
$k = \max\set{|p_i| \colon i = 1,\ldots,n}$, where $|p_i|$ denotes
the number of atoms in $p_i$. We claim that $k$ is the bound we are
looking for in condition 2. Indeed, if $\db{D} \models Q$, for a database
of tree-width at most $\max\set{0,\width{\sche{S}}-1}$, then also
$\db{D} \models q_Q$ and so there is a homomorphism $h$ that maps some
$p_i$ to $\db{D}$. The image of $p_i$ under $h$ is a database of size at
most $k$ that satisfies $q_Q$, i.e., $h(p_i) \models q_Q$. Since $Q$ is
equivalent to $q_Q$, we infer $h(p_i) \models Q$, as required.

Suppose now that there is a $k \geq 0$ such that, for every
$\sche{S}$-database $\db{D}$ of tree-width at most
$\max\set{0, \width{\sche{S}}- 1}$, if $\db{D} \models Q$, then there is
a $\db{D}' \subseteq \db{D}$ with at most $k$ facts such that
$\db{D}' \models Q$. Let $\ca{S}$ be the class of all
$\sche{S}$-databases $\db{D}$ such that
\begin{enumerate*}[label={(\roman*)}]
  \item $\db{D}$ contains at most $k$ facts and
  \item $\db{D} \models Q$.
\end{enumerate*}
Consider $\ca{S}$ factorized modulo isomorphism.\footnote{Two databases
  are \emph{isomorphic}, if there is a bijective homomorphism between
  them.} Notice that $\ca{S}$ is thus finite. We claim that
$q_Q = \bigvee \ca{S}$ (here we consider the databases in
$\ca{S}$ as CQs) is a UCQ equivalent to $Q$ (and thus an FO-rewriting of
$Q$).

To see this, suppose first that
$\db{D} \models Q$ for some $\sche{S}$-database
$\db{D}$. By~\Cref{lem:treemodel}, there is an $\sche{S}$-database
$\db{D}^\ast$ of tree-width at most $\max{0, \width{\sche{S}}-1}$ such
that $\db{D}^\ast \models Q$. By assumption, there is a
$\db{D}_0 \subseteq \db{D}^\ast$ of at most $k$ facts such that
$\db{D}_0 \models Q$. It follows that some isomorphic representative of
$\db{D}_0$ is contained in $\ca{S}$. Therefore, $\db{D}_0 \models q_Q$
and, since $\db{D}_0 \subseteq \db{D}$, also $\db{D} \models q_Q$.

Suppose now that $\db{D} \models q_Q$. Then there is some
$p \in \ca{S}$ such that $\db{D} \models p$. Hence there is a
homomorphism $h$ that maps $p$ to $\db{D}$. Recall that $p$ (viewed as
an $\sche{S}$-database) also satisfies $Q$ by construction of
$\ca{S}$. Since $Q$ is closed under homomorphisms, we must also have
$\db{D} \models Q$, and the claim follows.
\hfill$\square$

\section{Proofs for~\Cref{sec:firstapproach}}

\subsection{Proof of~\Cref{pro:semanticrefined}}

Let $w = \max\set{0, \width{\sche{S}} -1}$. We are first going
to prove some auxiliary statements.

\medskip
\noindent
\textit{Notation.} For any tree $\ca{T} = \tup{T,E}$, we denote by
$\preceq_{\ca{T}}$ the natural ancestor relation induced by $\ca{T}$,
i.e., for $v, w \in T$, $v \preceq_{\ca{T}} w$ iff $v$ is an ancestor of $w$.

\begin{lemma}
  \label{lem:rectwc}
  If $\db{D}$ has tree-width $w$, then there is a tree decomposition
  $\delta$ of $\db{D}$ of width $w$ and an adornment $\eta$ for
  $\tup{\db{D},\delta}$ such that $\delta$ is $\eta$-simple.
\end{lemma}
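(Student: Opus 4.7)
The plan is to start with any width-$w$ tree decomposition $\delta_0$ of $\db{D}$, choose for every fact a single ``host'' bag that contains all its arguments, and then split each node of $\delta_0$ into a chain of copies so that each copy is labelled by at most one fact and no fact is labelled twice. The main obstacle is not conceptual: one just has to check that the resulting structure is still a tree decomposition of the same width, which reduces to verifying the connectivity condition.

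More precisely, let $\delta_0 = \langle \ca{T}_0,(X_t)_{t \in T_0}\rangle$ be a tree decomposition of $\db{D}$ of width $w$. By definition of a tree decomposition, for every fact $\alpha = R(\ve{a}) \in \db{D}$ there exists at least one $t \in T_0$ with $\ve{a} \subseteq X_t$. Pick one such node and call it $\pi(\alpha)$. For each $t \in T_0$, let $G_t = \{\alpha \in \db{D} \mid \pi(\alpha) = t\}$ and enumerate $G_t = \{\alpha^t_1,\ldots,\alpha^t_{k_t}\}$ (with $k_t = 0$ if $G_t$ is empty). The sets $G_t$ partition $\db{D}$ by construction.

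Now build $\ca{T}$ by replacing each $t \in T_0$ by a chain of nodes $t^{(1)},\ldots,t^{(\max(1,k_t))}$, where $t^{(i+1)}$ is the unique child of $t^{(i)}$; the children of $t$ in $\ca{T}_0$ become children of $t^{(\max(1,k_t))}$, and the parent edge of $t$ in $\ca{T}_0$ is attached to $t^{(1)}$. To every copy $t^{(i)}$ assign bag $X_{t^{(i)}} = X_t$, and set $\eta(t^{(i)}) = \{\alpha^t_i\}$ whenever $1 \le i \le k_t$ and $\eta(t^{(i)}) = \emptyset$ otherwise.

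It remains to check three things. First, $\delta = \langle \ca{T},(X_v)_{v \in T}\rangle$ is a tree decomposition of $\db{D}$: every fact $\alpha$ lies in the bag of $\pi(\alpha)^{(i)}$ for the chosen $i$, and the set of copies of nodes containing any given constant $a$ is obtained from the connected subtree $\{t \in T_0 \mid a \in X_t\}$ by replacing each node by a (connected) chain, hence is itself connected. Second, $\delta$ has the same width $w$ since bag sizes are unchanged. Third, $\eta$ is an adornment of $\langle \db{D},\delta\rangle$ since $\eta(v) \subseteq \db{D}[X_v]$ by the choice of $\pi$, and $\bigcup_v \eta(v) = \bigcup_t G_t = \db{D}$. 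Finally, $\delta$ is $\eta$-simple: $|\eta(v)| \le 1$ holds by construction, and any two distinct nodes with non-empty label carry distinct facts because the $G_t$'s are disjoint and, within a fixed chain, the labels $\alpha^t_1,\ldots,\alpha^t_{k_t}$ are pairwise distinct.
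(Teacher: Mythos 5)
Your construction is correct and follows essentially the same route as the paper: both obtain simplicity by splitting each bag into several copies with identical content, one fact per copy, which leaves the width and the tree-decomposition conditions untouched. Your variant of pre-selecting a unique host bag $\pi(\alpha)$ for every fact (so that the labels are disjoint by construction) is if anything slightly cleaner than the paper's version, which first assigns all facts of $\db{D}[X_v]$ to copies of $v$ and then prunes duplicate labels afterwards; note that the paper's proof additionally massages the decomposition into an $\eta$-well-colored one, but that is not part of the statement as given.
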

\begin{proof}
  Let $\delta = \tup{\ca{T},\tup{X_v}_{v \in T}}$ be a tree
  decomposition of $\db{D}$ of width $w$. Let $\delta'$ be a tree
  decomposition of $\db{D}$ defined as follows. Initially, we define
  that $\delta'$ equals $\delta$. In a second step, we add additional
  nodes to $\delta'$. For any $v \in T$, if $|\db{D}[X_v]| = n$ then we
  add to $n - 1$ copies of $v$ to $\delta'$ that become children of $v$
  in $\delta'$. Let $v \in T$ and suppose
  $\db{D}[X_v] = \set{\alpha_1,\ldots,\alpha_n}$.  Let
  $v_1,\ldots,v_{n-1}$ be the copies of $v$ in $\delta'$. We can set
  $\eta(v) = \set{\alpha_1}$, and
  $\eta(v_i) = \set{\alpha_{i+1}}$ for $i = 1,\ldots,n-1$. It
  is easy to check that $\eta$ is an adornment for
  $\tup{\db{D},\delta'}$.

  Now $\eta$ satisfies that $|\eta(v)| = 1$, for all $v \in T'$. We can
  easily modify $\delta'$ so that $\delta'$ becomes a
  $\eta$-simple. We simply do so by successively removing from
  $\delta'$ all nodes $w \in T'$ such that there is some $v \neq w$ such
  that $\eta(v) = \eta(w)$.

  Now we show how we can modify $\delta'$ in order to become
  $\eta$-well-colored. Let $B \subseteq T'$ denote the set of black
  nodes of $\ca{T}$. Let $T^\ast$ be the smallest set such that
  \begin{enumerate*}[label={(\arabic*)}]
    \item $B \subseteq T^\ast$ and
    \item if $v$ is the greatest common ancestor of some
      $T_0 \subseteq T^\ast$, then also $v \in T^\ast$.
  \end{enumerate*}
  Hence, $T^\ast$ is $B$ closed off under greatest common ancestors. Let
  $\delta^\ast = \tup{\ca{T}^\ast, \tup{Y_v}_{v \in T^\ast}}$,
  where, for $v,w \in T^\ast$, $v \preceq_{\ca{T}^\ast} w$ iff
  $v \prec_{\ca{T}'} w$. Notice that $\delta^\ast$ is a tree
  decomposition of $\db{D}$ that has width $w$, since it contains all
  black nodes of $T'$ w.r.t.~$\eta$. It is now easy to check that
  $\delta^\ast$ is $\eta$-well-colored.
\end{proof}

\begin{lemma}
  \label{lem:factsblacknodes}
  Suppose $\eta$ is an adornment for $\tup{\db{D},\delta}$ and that
  $\delta$ is $\eta$-simple. Then $\db{D}$ contains at least as many
  facts as $\delta$ contains black nodes w.r.t.~$\eta$.
\end{lemma}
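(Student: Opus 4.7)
The plan is to construct an injective map from the set of black nodes of $\delta$ (with respect to $\eta$) into the set of facts of $\db{D}$. Once such an injection is in place, the inequality on cardinalities follows immediately.

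First I would unpack the hypothesis that $\delta$ is $\eta$-simple. By definition this means that $|\eta(v)| \leq 1$ for every node $v$, and for any two distinct nodes $v,w$ with $\eta(v)$ and $\eta(w)$ non-empty we have $\eta(v) \neq \eta(w)$. Consequently, for every black node $v$ (i.e., a node with $\eta(v) \neq \emptyset$), the set $\eta(v)$ consists of exactly one fact; let me denote this unique fact by $\alpha_v$. Since $\eta(v) \subseteq \db{D}[X_v] \subseteq \db{D}$ by the definition of an adornment, we have $\alpha_v \in \db{D}$.

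Next I would set $\iota(v) := \alpha_v$ for each black node $v$, obtaining a well-defined function $\iota \colon B \to \db{D}$, where $B$ is the set of black nodes of $\delta$ with respect to $\eta$. To see that $\iota$ is injective, pick any two distinct black nodes $v \neq w$. The second clause of $\eta$-simplicity gives $\eta(v) \neq \eta(w)$, and since both sets are singletons, their unique elements must differ, i.e., $\alpha_v \neq \alpha_w$, so $\iota(v) \neq \iota(w)$.

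From the existence of an injection $\iota \colon B \to \db{D}$ we conclude $|B| \leq |\db{D}|$, which is precisely the statement of the lemma. There is no real obstacle here: the entire argument is a direct consequence of the definition of $\eta$-simplicity, together with the fact that adornment labels are always subsets of $\db{D}$.
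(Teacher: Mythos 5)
Your proof is correct, and it takes a different (and in fact more direct) route than the paper. You construct an explicit injection from the set of black nodes into $\db{D}$: every black node $v$ has, by the first clause of $\eta$-simplicity, a singleton label $\eta(v) = \{\alpha_v\}$ with $\alpha_v \in \db{D}[X_v] \subseteq \db{D}$, and the second clause (uniqueness of non-empty labels) makes $v \mapsto \alpha_v$ injective, which immediately yields the cardinality bound. The paper instead argues by induction on the number of black nodes: it picks a black node $v$ with no black descendants, removes the subtree rooted at $v$, and uses simplicity to argue that the database shrinks strictly (if the union of the remaining labels were still all of $\db{D}$, some other node would carry the same label as $v$, contradicting simplicity), so the induction hypothesis gives the bound. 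Both arguments use exactly the two clauses of $\eta$-simplicity; your injection argument avoids the structural surgery on the tree decomposition and the induction bookkeeping altogether, and is arguably the cleaner way to see the statement, while the paper's inductive peeling of a bottommost black node mirrors the style of the neighbouring lemmas (e.g.\ the counting argument in Lemma~\ref{lem:manyblacknodes}). There is no gap in your argument.
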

\begin{proof}
  By induction on the number $n$ of black nodes of
  $\delta = \tup{\ca{T},\tup{X_v}_{v \in T}}$ w.r.t.~$\eta$. If $n = 1$
  the claim is trivial. Suppose $\delta$ has $n + 1$ black nodes
  w.r.t.~$\eta$. There is a black node $v \in T$ such that $v$ has no
  descendant that is also black. Let
  $\delta' = \tup{\ca{T}',\tup{X_v}_{v \in T'}}$ be the tree
  decomposition that arises from $\delta$ by removing the subtree rooted
  at $v$. Let $\db{D}' = \bigcup_{v \in T'} \eta(v)$. Then
  $\delta'$ is a tree decomposition of $\db{D}'$ and $\delta'$ has
  $n$ black nodes w.r.t.~$\eta$.  Now if $\db{D}' = \db{D}$ this means
  that $\eta(v) = \eta(w)$ for some $w \neq v$. Hence, $\delta$ cannot
  be simple. Therefore, $\db{D}' \subset \db{D}$. By the induction
  hypothesis, $|\db{D}'| \geq n$ and we thus obtain $|\db{D}| \geq n + 1$.
\end{proof}

\begin{lemma}
  \label{lem:manyblacknodes}
  If $\delta$ is an $\eta$-well-colored tree decomposition of $\db{D}$,
  then the number of white nodes of $\delta$ w.r.t.~$\eta$ is strictly
  less than the number of black nodes of $\delta$ w.r.t.~$\eta$.
\end{lemma}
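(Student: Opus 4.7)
The plan is to give a clean double-counting argument on the edges of the tree $\ca{T}$ underlying $\delta = \tup{\ca{T}, \tup{X_v}_{v \in T}}$. Let $W$ denote the number of white nodes and $B$ the number of black nodes of $\delta$ with respect to $\eta$; the goal is to prove $W < B$.

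The first observation I would make is that, by the very definition of $\eta$-well-coloredness, every white node must have \emph{at least two children} in $\ca{T}$. Indeed, a node is $\eta$-well-colored iff either it is black, or it has at least two successors all of which are $\eta$-well-colored; since $\delta$ is $\eta$-well-colored (so every node of $T$ is $\eta$-well-colored), any white node falls into the second clause and must therefore have at least two children. In particular, no leaf of $\ca{T}$ is white.

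The second step is to count the edges of $\ca{T}$ in two different ways. On the one hand, since $\ca{T}$ is a tree on $W + B$ nodes, it has exactly $W + B - 1$ edges (assuming the tree is nonempty; the statement holds vacuously otherwise, as a well-colored decomposition of the empty database does not exist). On the other hand, I partition the edges by the colour of their parent endpoint. Edges whose parent is white contribute at least $2W$ to the total, by the first step, while edges whose parent is black contribute a nonnegative amount. Hence
\[
W + B - 1 \;\geq\; 2W,
\]
which rearranges to $B \geq W + 1$, i.e., $W < B$, as required.

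I do not expect a serious obstacle here: the only thing to be careful about is invoking the right clause of the definition of well-coloredness to get the ``at least two children'' property for every white node, and handling the trivial edge case in which $\db{D}$ (and hence $T$) is empty, for which the lemma is vacuously true since no $\eta$-well-colored decomposition can exist (the unique node would be a white leaf, contradicting well-coloredness).
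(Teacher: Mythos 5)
Your proof is correct, and it takes a genuinely different route from the paper's. The paper proves the lemma by induction on the depth of $\ca{T}$: a single $\eta$-well-colored node must be black, and at depth $n+1$ one splits $\ca{T}$ into the subtrees rooted at the children of the root; when the root is white, well-coloredness gives $k \geq 2$ such subtrees, and summing the inductive inequalities $w_{\delta_i} < b_{\delta_i}$ absorbs the extra white root. You instead double-count the edges of $\ca{T}$: since every white node has at least two children (the second clause of the definition of $\eta$-well-coloredness, which also rules out white leaves), partitioning the $W+B-1$ edges by the colour of the parent endpoint yields $W+B-1 \geq 2W$, hence $B \geq W+1$. Both arguments are elementary and rest on the same structural fact; your global edge count aggregates in one step the gain that the paper extracts inductively at each white node, so it is shorter and needs no bookkeeping, whereas the paper's top-down induction mirrors the style of the adjacent results (e.g.\ \Cref{lem:factsblacknodes} is also proved by induction on the tree) and makes the recursion explicit. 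Your handling of the degenerate case is consistent with the paper's parenthetical remark that one may restrict to non-empty databases: if $\db{D}$ were empty, every node of any decomposition would be white, so a leaf would violate well-coloredness and the hypothesis could not hold.
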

\begin{proof}
  Let $b_\delta$ ($w_\delta$, respectively) denote the number of black
  (white, respectively) nodes of
  $\delta = \tup{\ca{T},\tup{X_v}_{v \in T}}$ w.r.t.~$\eta$. We proceed
  by induction on the depth of $\ca{T}$, i.e., the maximum length of a
  branch leading from the root node to a leaf node. If $\ca{T}$ consists
  only of a single node and if $\delta$ is $\eta$-well-colored, this
  single node must be a black node, and so the claim holds
  trivially. (Recall that we can restrict ourselves to non-empty
    databases, since we assume non-empty schemas.) Assume that $\ca{T}$
  is of depth $n + 1$ and assume that $\delta$ is
  $\eta$-well-colored. Let $\ca{T}_1,\ldots,\ca{T}_k$ enumerate the
  subtrees of $\ca{T}$ rooted at the child nodes of the root of
  $\ca{T}$, and let $\delta_i$ ($i = 1,\ldots,k$) be the tree
  decomposition that arises from $\delta$ if we restrict $\ca{T}$ to
  $\ca{T}_i$. If the root of $\ca{T}$ is black, the claim is again
  trivial. Otherwise, if it is white, we see that $k \geq 2$ since
  $\delta$ is $\eta$-well-colored. For $i = 1,\ldots,k$, let
  $b_{\delta_i}$ ($w_{\delta_i}$, respectively) denote the number of
  black (white, respectively) nodes of $\ca{T}_i$ w.r.t.~$\eta$.  Using
  the induction hypothesis, we conclude that
  $w_{\delta} = w_{\delta_1} + \cdots + w_{\delta_k} + 1 < b_{\delta_1}
  + \cdots + b_{\delta_k} = b_{\delta}$.
\end{proof}

\noindent
\textit{Proof of~\Cref{pro:semanticrefined}.}  Assume that condition
2 does not hold. That is, there are infinitely many non-isomorphic
triples $\tup{\db{D},\delta, \eta}$ that satisfy conditions
(a)--(c). Let $S$ be the set of all these triples and let $S'$ be $S$
factorized modulo our notion of isomorphism, i.e., $S'$ contains a
representative for every isomorphism type of $S$. Let $\Phi$ be the set
$\set{\ca{T} \mid \tup{\db{D}, \delta = \tup{\ca{T},\tup{X_v}_{v
        \in T}}, \eta} \in S'}$
of trees factorized modulo usual tree isomorphism. Notice that $\Phi$
must be infinite as well. Hence, $\Phi$ must contain trees of arbitrary
size. Thus, by~\Cref{lem:manyblacknodes}, for every $k \geq 0$, we can
find a $\tup{\db{D}_k,\delta_k, \eta_k} \in S'$ such that
$\delta_k = \tup{\ca{T}_k, \tup{X_v}_{v \in T_k}}$ and
$\ca{T}_k$ has at least $k$ black nodes w.r.t.~$\eta_k$ and thus,
by~\Cref{lem:factsblacknodes}, $\db{D}_k$ has at least $k$ facts. Now
$\db{D}_k \models Q$ by assumption, but $\db{D}_0 \not\models Q$ for
every $\db{D}_0 \subset \db{D}_k$. Thus, for every $k$, we can find a
database $\db{D}$ of tree-width at most $w$ (namely $\db{D}_k$) such
that $\db{D} \models Q$, but for every $\db{D}_0 \subseteq \db{D}$ of at
most $k$ atoms we have $\db{D}_0 \not\models Q$. Hence, condition
1 does not hold.

Suppose now that condition 1 does not hold. That is, for every
$k \geq 0$ there is a database $\db{D}_k$ of tree-width at most $w$ such
that $\db{D} \models Q$, yet for every $\db{D}_0 \subset \db{D}$ of at
most $k$ facts we have $\db{D}_0 \not\models Q$. Let $S$ be the set of
all $\sche{S}$-databases $\db{D}$ of tree-width at most $w$ such that
$\db{D} \models Q$, yet for any $\db{D}_0 \subset \db{D}$, we have
$\db{D}_0 \not\models Q$. Let $S'$ be $S$ factorized modulo database
isomorphism. $S'$ must be, by assumption, infinite as
well. By~\Cref{lem:rectwc}, for every $\db{D} \in S'$, there is a tree
decomposition $\delta_{\db{D}}$ and an adornment $\eta_{\db{D}}$ of
$\tup{\db{D},\delta_{\db{D}}}$ such that $\delta_{\db{D}}$ is
$\eta_{\db{D}}$-well-colored and $\eta_{\db{D}}$-simple. Now for two
distinct $\db{D},\db{D}' \in S'$ it must be the case that
$\tup{\db{D},\delta_{\db{D}}, \eta_{\db{D}}}$ and
$\tup{\db{D}',\delta_{\db{D}'}, \eta_{\db{D}'}}$ are non-isomorphic, for
otherwise $\db{D}$ and $\db{D}'$ would be isomorphic as well. For
$\db{D} \in S'$ we know that, by construction of $S'$,
$\db{D} \setminus \set{\alpha} \not\models Q$ for all
$\alpha \in \db{D}$. Hence, the class
$\set{\tup{\db{D},\delta_{\db{D}}, \eta_{\db{D}'}} \mid \db{D} \in S'}$
is a class of infinitely many, pairwise non-isomorphic triples such that
properties (a)--(c) of condition 2 are satisfied. Thus, condition
2 does not hold as well.  \hfill$\square$

\subsection{Preliminaries: Tree Encodings}

One can naturally encode instances of bounded tree-width into trees over
a finite alphabet. Our goal here is to appropriately encode databases of
bounded tree-width in order to make them accessible to tree automata
techniques. Similar encoding techniques are well-known in the context of
guarded logics, see e.g.~\cite{BeBB16,ABeBB16} for similar encodings.

\medskip
\noindent
\underline{\smash{\textit{Labeled trees.}}} Let $\Gamma$ be an alphabet
and $(\mbb{N} \setminus \set{0})^\ast$ be the set of finite sequences of
positive integers, including the empty sequence
$\varepsilon$.\footnote{We specify that $0$ is included in $\mbb{N}$ as
  well.} Let us recall that a \emph{$\Gamma$-labeled tree} is a partial
function $t \colon (\mathbb{N} \setminus \set{0}) \rightarrow \Gamma$,
where $\dom{t}$ is closed under prefixes, i.e., $x \cdot i \in \dom{t}$
implies $x \in \dom{t}$, for all $x \in (\mbb{N} \setminus \set{0})^\ast$ and
$i \in \mbb{N} \setminus \set{0}$.  The elements contained in $\dom{t}$
identify the \emph{nodes} of $t$.  For
$i \in \mbb{N} \setminus \set{0}$, nodes of the form
$x \cdot i \in \dom{t}$ are the \emph{children} of $x$. A \emph{leaf
  node} is a node without children. The number of children of a node $x$
is its \emph{branching degree}. If every node of $t$ has branching
degree at most $m$, then we say that $t$ is $m$-ary. A \emph{path} of
\emph{length} $n$ in $t$ from $x$ to $y$ is a sequence of nodes
$x = x_1,\ldots,x_n = y$ such that $x_{i+1}$ is a child of $x_i$. A
\emph{branch} of $t$ is a path that start from the root node and ends in
a leaf node. The \emph{height} of the is the maximum length of all
branches. For $x \in \dom{t}$, we set $x \cdot i \cdot -1 = x$,
for all $i \in \mbb{N}$, and $x \cdot 0 = x$. Notice that
$\varepsilon \cdot -1$ is not defined.

\medskip
\noindent
\underline{\smash{\textit{Encoding.}}} Fix a schema $\sche{S}$ and let $w \geq 1$. Let
$U_{\sche{S},w}$ be a set containing $2w$ distinct constants. The
elements from $U_{\sche{S},w}$ will be called \emph{names}. Names are
used to encode constants of an $\sche{S}$-database of tree-width at most
$w - 1$. Neighboring nodes may describe overlapping pieces of the
encoded database. In particular, if one name is used in neighboring
nodes, this means that the name at hand refers to the same
element---this is why we use $2w$ elements for bags. Let
$\mbb{K}_{\sche{S},w}$ be the finite schema capturing the following
information:
\begin{itemize}
\item For all $a \in U_{\sche{S},w}$, there is a unary relation
  $D_a \in \mbb{K}_{\sche{S},w}$.
\item For each $R \in \sche{S}$ and every $n$-tuple
  $\ve{a} \in U^n_{\sche{S},w}$, there is a unary relation
  $R_{\ve{a}} \in \mbb{K}_{\sche{S},w}$.
\end{itemize}
Let $\Gamma_{\sche{S},w} = 2^{\mbb{K}_{\sche{S},w}}$ be an
alphabet and suppose that $\db{D}$ is an $\sche{S}$-database of
tree-width at most $w - 1$.  Consider a tree decomposition
$\delta = \tup{\ca{T},\tup{X_v}_{v \in T}}$ of $\db{D}$ that has width
at most $w-1$. Moreover, consider an adornment $\eta$ of
$\tup{\db{D},\delta}$. Fix a function
$f \colon \adom{\db{D}} \rightarrow U_{\sche{S},w}$ such that different
elements that occur in neighboring bags of $\delta$ are always assigned
different names from $U_{\sche{S},w}$. Using $f$, we can encode $\db{D}$
together with $\delta$ and $\eta$ into a $\Gamma_{\sche{S},w}$-labeled
tree $t_{\db{D},\delta,\eta}$ such that each node from $\ca{T}$
corresponds to exactly one node in $t_{\db{D},\delta,\eta}$ and vice
versa. For a node $v$ from $\ca{T}$, we denote the corresponding node of
$t_{\db{D},\delta,\eta}$ by $\hat{v}$ in the following and vice versa. In this light, the
symbols from $\mbb{K}_{\sche{S},w}$ have the following intended meaning:
\begin{itemize}
\item $D_a \in t(\hat{v})$ means that $a$ is used as a name for some
  element of the bag $X_v$.
\item $R_{\ve{a}} \in t(\hat{v})$ indicates that $R$ holds in $\db{D}$
  for the elements named by $\ve{a}$ in bag $X_v$ and this fact appears
  in $\eta(v)$.
\end{itemize}

\medskip
\noindent
\underline{\smash{\textit{Decoding trees.}}} Under certain assumptions,
we can decode a $\Gamma_{\sche{S},w}$-labeled tree $t$ into an
$\sche{S}$-database whose tree-width is bounded by $w - 1$. Let
$\names{v} = \set{a \mid D_a \in t(v)}$. We say that $t$ is
\emph{consistent}, if it satisfies the following properties:
\begin{enumerate}
\item For all nodes $v$ it holds that $|\names{v}| \leq w$.
\item For all $R_{\ve{a}} \in \mbb{K}_{\sche{S},w}$ and all
  $v \in \dom{t}$ it holds that $R_{\ve{a}} \in t(v)$ implies that
  $\ve{a} \subseteq \names{v}$.
\end{enumerate}
Suppose now that $t$ is consistent. We show how we can decode $t$ into a
database $\dec{t}$ whose tree-width is at most $w -1$. Let $a$ be a name
used in $t$. We say that two nodes $v, w$ of $t$ are
\emph{$a$-equivalent} if $D_a \in t(u)$ for all nodes $u$ on the unique
shortest path between $v$ and $w$. Clearly, $a$-equivalence defines an
equivalence relation and we let
$[v]_a =\set{\tup{w,a} \mid \text{$w$ is $a$-equivalent to
    $v$}}$ and $[v]_a^\ast = \set{w \mid \tup{w,a} \in [v]_a}$.
The domain of $\dec{t}$ is the set
$\set{[v]_a \mid v \in \dom{t}, a \in \names{v}}$ and, for
$R/n \in \sche{S}$, we define
\begin{align*}
\dec{t} \models R([v_1]_{a_1},\ldots,[v_n]_{a_n}) \iff\ &\text{there is some} \\
&\text{$v \in [v_1]_{a_1}^\ast \cap \cdots \cap [v_n]_{a_n}^\ast$}\\
&\text{s.t.~$R_{a_1,\ldots,a_n} \in t(v)$}.
\end{align*}

It is not hard to show that, if $t$ is consistent, $\dec{t}$ is
well-defined and is an $\sche{S}$-database of tree-width at most
$w - 1$. We refer the reader to \cite{ourpods}~\todo{Citation of our PODS} for proofs
of similar statements.

Given a consistent $t$, we let
$\delta_t = \tup{\ca{T},\tup{X_v}_{v \in T}}$ be a tree
decomposition of $\dec{t}$, where $\ca{T}$ is the same tree in structure
as $t$ and $X_v = \names{v}$, for all $v \in T$. Moreover, we define the adornment $\eta_t$ for $\tup{\dec{t},\delta_t}$ by
\begin{align*}
 \eta_t \colon v \longmapsto \set{R([v]_{a_1},\ldots,[v]_{a_k}) \mid R_{a_1,\ldots,a_k} \in t(v)}.
\end{align*}
We say that $t$ is \emph{simple} (\emph{well-colored}, respectively)
if $\delta_t$ is $\eta_t$-simple ($\eta_t$-well-colored,
respectively). Moreover, a node $v \in \dom{t}$ is \emph{black}
(\emph{white}, respectively), if it is black (white, respectively)
w.r.t.~$\eta_t$. We call $\delta_t$ ($\eta_t$, respectively) the
\emph{standard tree decomposition} (\emph{standard adornment}, respectively) of
$t$.

\medskip
\noindent
\underline{\smash{\textit{Bounding the branching degree.}}} For our
automata constructions that follow, it will be convenient to work on
$\Gamma_{\sche{S}}$-labeled trees whose branching degree can be bounded
by the constant $m_{\sche{S}} = 2^{\width{\sche{S}}}$ so that we
can work automata that run on $m_{\sche{S}}$-ary trees. The following
statement shows that we can always assume this without loss of
generality:

\begin{lemma}
\label{lem:branchingbounded}
Suppose $\db{D}$ is an $\sche{S}$-database and $\delta$ a tree
decomposition of $\db{D}$. Then there exists a tree decomposition
$\delta'$ of $\db{D}$ such that $\delta'$ has branching degree at most
$m_{\sche{S}} = 2^{\width{\sche{S}}}$.
\end{lemma}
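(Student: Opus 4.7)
The plan is to transform $\delta$ into $\delta'$ by a local surgery at every node that groups the children according to their shared interface with the parent. Without loss of generality I would assume $\delta = \tup{\ca{T},\tup{X_v}_{v \in T}}$ has width at most $\width{\sche{S}} - 1$, so that $|X_v| \leq \width{\sche{S}}$ for every node $v$; if not, one can first replace $\delta$ by such a tree decomposition of $\db{D}$, which always exists since every relation in $\sche{S}$ has arity at most $\width{\sche{S}}$. The key observation is then that, for each node $v$, every child $w$ of $v$ interacts with $v$ only through the interface $X_v \cap X_w \subseteq X_v$, so there are at most $2^{|X_v|} \leq 2^{\width{\sche{S}}} = m_{\sche{S}}$ possible interfaces.

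The surgery at $v$ proceeds as follows. Let $w_1,\ldots,w_k$ be the original children of $v$ in $\ca{T}$, and partition them into groups $C_S = \set{w_i \mid X_v \cap X_{w_i} = S}$ indexed by the interface $S$. For each non-empty $C_S = \set{w^S_1,\ldots,w^S_{n_S}}$, I would insert a fresh spine of auxiliary nodes $p^S_1,\ldots,p^S_{n_S}$, each carrying the bag $S$, with $p^S_1$ attached as a child of $v$, each $p^S_{j+1}$ as a child of $p^S_j$, and each $w^S_j$ (together with its original subtree, which stays untouched) reattached as the remaining child of $p^S_j$. After the surgery, $v$ has at most one child per interface, hence at most $m_{\sche{S}}$ children in $\delta'$, every spine node has at most two children, and the same transformation is applied recursively inside each subtree rooted at some $w^S_j$.

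What remains is to verify that the resulting $\delta'$ is a valid tree decomposition of $\db{D}$. Coverage of facts is immediate, since no fact-carrying bag is altered and every $w^S_j$-subtree is preserved. Connectedness requires a case split on an element $a \in \adom{\db{D}}$: if $a \in X_v$, then the nodes of $\delta'$ containing $a$ are $v$ together with, for every interface $S$ with $a \in S$, every spine node $p^S_j$ and every node of every $w^S_j$-subtree that originally contained $a$, and these are all linked through $v$ and the corresponding spines, because each $p^S_j$ carries exactly the bag $S$; if $a \notin X_v$, then by the original connectedness of $\delta$ the occurrences of $a$ lie entirely inside one child-subtree of $v$, and since no spine node contains $a$, they remain connected inside the relocated $w^S_j$-subtree. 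I expect this connectedness check to be the main technical obstacle, and the reason it succeeds is the choice of $S = X_v \cap X_{w^S_j}$ as the spine bag: this is exactly the set of elements that any occurrence shared between $v$ and a member of $C_S$ has to traverse, which is precisely what keeps the spine bags large enough.
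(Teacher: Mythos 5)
Your core construction is sound and proves the lemma, but by a genuinely different route than the paper. You group the children of each node $v$ by their interface $S = X_v \cap X_w$ (of which there are at most $2^{|X_v|} \leq 2^{\width{\sche{S}}}$) and hang each group off a fresh binary spine of nodes labelled $S$; the connectedness argument you give is exactly the right one, since any element shared between $v$ and a member of the group lies in $S$ and hence in every spine bag. The paper instead keeps the node set of $\delta$ completely unchanged: it picks a deepest node of excessive degree, applies pigeonhole to find two children $v_i, v_j$ with the same interface, re-parents the subtree at $v_j$ below $v_i$, and inducts on a defect measure counting the total excess degree. Both arguments hinge on the same key count of interfaces, but the paper's re-parenting buys something your surgery does not: since $\delta'$ has literally the same nodes and bags as $\delta$, any adornment of $\tup{\db{D},\delta}$ is verbatim an adornment of $\tup{\db{D},\delta'}$, which is what the paper needs when it later combines bounded branching with the simple/well-colored encodings. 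Your spine nodes are new white nodes (and the last node of each spine has a single child), so if one additionally wanted to preserve $\eta$-well-coloredness your construction would need extra care; for the lemma as stated this is irrelevant.

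One step of your write-up is wrong as justified: you claim that if $\delta$ has width larger than $\width{\sche{S}}-1$ one can ``first replace $\delta$ by such a tree decomposition of $\db{D}$, which always exists since every relation in $\sche{S}$ has arity at most $\width{\sche{S}}$.'' The tree-width of a database is not bounded by the arity of its schema (an $n \times n$ grid over a single binary relation has tree-width $n$), so no such replacement exists in general. The honest reading, which the paper's own proof also adopts without comment, is that the lemma is only applied to decompositions of width at most $\max\set{0,\width{\sche{S}}-1}$ (those arising from encodings), so you should simply take $|X_v| \leq \width{\sche{S}}$ as a hypothesis rather than attempt to derive it; with that hypothesis in place, the rest of your argument goes through.
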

\begin{proof}
  Let $\delta = \tup{\ca{T},\tup{X_v}_{v \in T}}$ be a tree
  decomposition of $\db{D}$ of width at most
  $w = \max\set{0,\width{\sche{S}}-1}$. For $v \in T$, let $d_v$
  be the branching degree of $v$ in $\ca{T}$. Moreover, let
  \begin{align*}
    d_\delta = \sum_{v \in T} \set{d_v - m_{\sche{S}} \mid d_v > m_{\sche{S}}, v \in T}.
  \end{align*}

  We are going to prove the following statement by induction on
  $d_\delta$: if $\delta$ is a tree decomposition of $\db{D}$ then there
  is a tree decomposition $\delta'$ of $\db{D}$ such that every node of
  $\delta'$ has branching degree at most
  $m_{\sche{S}} = 2^{\width{\sche{S}}}$. Moreover, $\delta'$ results
  from ``reorganizing'' nodes of $\delta$ and we can view any adornment
  $\eta$ of $\tup{\db{D},\delta}$ is also an adornment of
  $\tup{\db{D},\delta'}$.

  If $d_\delta = 0$ then the claim is trivial, since $\delta$ has no
  nodes of branching degree greater than $m_{\sche{S}}$. Suppose now
  $d_\delta = n + 1$. Let $v \in T$ be a node such that
  $d_v > m_{\sche{S}}$.  Assume $v$ is chosen such that it has, among
  all nodes of branching degree greater than $m_{\sche{S}}$, maximal
  distance to the root. Let $v_1,\ldots,v_k$ enumerate all children of
  $v$. For $i = 1,\ldots,k$, let $Y_i = X_v \cap X_{v_i}$.
  Hence, $Y_i \subseteq X_v$ and since there are at most
  $m_{\sche{S}} = 2^{\width{\sche{S}}}$ subsets of $X_v$, it must be the
  case that $Y_i = Y_j$ for some $i \neq j$. Let $\delta'$ be the tree
  decomposition that arises from $\delta$ by removing subtree rooted at
  $v_j$ from $\ca{T}$, while inserting it below $v_i$ so that $v_j$
  becomes a child node of $v_i$. Notice that $v_i$ still has branching
  degree at most $m_{\sche{S}}$ by the choice of $v$. Moreover,
  $\delta'$ is still a tree decomposition of $\db{D}$, since
  connectedness is clearly ensured. Now $d_{\delta'} = n$ and an
  application of the induction hypothesis yields the claim.
\end{proof}

By \Cref{lem:branchingbounded}, we can thus always assume that the encoding
of an $\sche{S}$-database has branching degree at most
$m_{\sche{S}} = 2^{\width{\sche{S}}}$ which we will assume for the
remainder of this section.

\subsection{Preliminaries: Two-way alternating automata (2ATA)}

For a finite set of symbols $X$, let $\mathbb{B}^+(X)$ be the set of
positive Boolean formulas that can be formed using propositional
variables from $X$, i.e., formulas using $\land, \lor$ and propositional
variables from $X$.

A \emph{two-way alternating (parity) automaton} (2ATA) on (finite)
$m$-ary trees is a tuple
$\ata{A} = \tup{S, \Gamma, \delta, s_0, \Omega, \dir}$, where
\begin{itemize}
  \item $S$ is a finite set of \emph{states},
  \item $\Gamma$ is the \emph{input alphabet},
  \item
    $\delta \colon S \times \Gamma \rightarrow
    \mathbb{B}^+(\mathsf{tran}(\ata{A}))$
    is the \emph{transition function}, where
    $\mathsf{tran}(\ata{A}) = \set{\ndia{d}s, \nbox{d}s \mid d
      \in \dir})$,
  \item $s_0$ is the \emph{initial state},
  \item $\Omega \colon S \rightarrow \mathbb{N}$ is the \emph{parity
      condition} that assigns to each $s \in S$ a \emph{priority}
    $\Omega(s)$.
  \item $\dir$ is a set of \emph{directions} and, in our case, always
    equals $\set{-1,0,1,\ldots,m}$.\footnote{Notice that we always use
      the same set of directions here, i.e., we work on trees of a fixed
      branching degree. We nevertheless make this set of directions
      explicit to avoid confusion, since our cost automata are going to
      work with amorphous automata that work on trees of arbitrary
      branching degree.}
\end{itemize}
Notice that we make explicit the set of directions the automaton may
use. Formally, a direction is just a function that maps a node to other
nodes. For $d \in \set{-1,0,1,\ldots,m}$, we set
\begin{align*}
d \colon &\varepsilon \longmapsto \set{d}, & &\text{if $d \neq -1$,}\\
&v \longmapsto \set{v \cdot d},&  &\text{for $v \neq \varepsilon$.}
\end{align*}
Notice that $-1(\varepsilon)$ is thus undefined, since the root has no
parent. The direction $0$ maps every node to itself and thus indicates
that the automaton should stay in the current node, the direction $-1$
indicates the automaton should proceed to the parent node, and a
direction $k \in \set{1,\ldots,m}$ indicates that the automaton should
move to the $k$-th child of the current node. Transitions of the form
$\ndia{d}s$ mean that a copy of the automaton must accept in state $s$
for at least one node in direction $d$, while the dual connective,
$\nbox{d}s$, means that, for every neighbor in direction $d$, if a copy
of the automaton is sent to that neigbor in state $s$, it must
accept. Notice that our automaton is two-way, since it can proceed to
the in both directions---to the parent and to the children.

\begin{remark}
We will consider our 2ATA to run on finite trees. The parity
condition nevertheless makes sense, since our automata are two-way and
two-way movements can give rise to infinite runs.
\end{remark}

Given an $\ata{A}$ as above and a $\Gamma$-labeled input tree $t$, the
notion of acceptance of $t$ is defined via a game played between two
players, Eve and Adam. The goal of Eve is to satisfy the parity
condition and prove that $t$ is accepted by $\ata{A}$, while to goal of
Adam is to disprove this. We shall make this more precise in the
following.

Let $\chi \in \mathbb{B}^+(\mathsf{tran}(\ata{A}))$ be a positive
formula. We assign $\chi$ to an \emph{owner} according to its form:
\begin{itemize}
\item If $\chi = \chi_1 \land \chi_2$ (respectively,
  $\chi = \chi_1 \lor \chi_2$) then $\chi$ is owned by Adam
  (respectively, Eve).
\item If $\chi = \nbox{d}s$ (respectively, $\ndia{d}s$) then $\chi$ is
  owned by Adam (respecitvely, Eve).
\end{itemize}
The \emph{acceptance game $\ca{G}(\ata{A},t)$ for $\ata{A}$ and $t$}
is played in the arena
$\mathbb{B}^+(\mathsf{tran}(\ata{A})) \times \dom{t}$. For each position $\tup{\chi, v}$ of the arena, we define the set of \emph{possible choices}:
\begin{itemize}
\item If $\chi = \chi_1 \land \chi_2$ or $\chi = \chi_1 \lor \chi_2$
  then the possible choices are $\set{\tup{\chi_1,v}, \tup{\chi_2,v}}$.
\item If $\chi = \nbox{d}s$ or $\chi = \ndia{d}s$ then the possible choices are
    $\set{\tup{\delta(s,t(w)), w} \mid w \in d(v)}$.
\end{itemize}
Let $\chi_0 = \delta(s_0,t(\varepsilon))$. The initial position
of the game $\ca{G}(\ata{A},t)$ is
$\tup{\chi_0, \varepsilon}$ and from any position
$\tup{\chi,v}$:
\begin{itemize}
\item The player that owns $\chi$ selects a $\tup{\chi',w}$ among the
  possible choices of $\tup{\chi, v}$, and
\item the game continues from position $\tup{\chi', w}$.
\end{itemize}

The transition from $\tup{\chi,v}$ to $\tup{\chi',w}$ is called a
\emph{move}. By \emph{play} in $\ca{G}(\ata{A},t)$ we mean a sequence of
moves
$\tup{\chi_0, \varepsilon},\tup{\chi_1,v_1},\tup{\chi_2,v_2},\ldots$
(recall that $\chi_0 = \delta(s_0,t(\varepsilon))$. A \emph{strategy}
for one of the players is a function that returns the next choice for
that player given the history of the play. Fixing a strategy for both
players thus uniquely determines a play in $\ca{G}(\ata{A},t)$. A play
$\pi$ is \emph{consistent} with a strategy $\xi$ if there is a strategy
$\xi'$ for the other player such that $\xi$ and $\xi'$ yield $\pi$. 

We say that a strategy is \emph{winning for Eve}, if every play
consistent with it \emph{satisfies} the parity acceptance condition, that is,
if every play
$\tup{\chi_0, \varepsilon},\tup{\chi_1,v_1},\tup{\chi_2,v_2},\ldots$
consistent with that strategy, the maximum priority among
$\Omega(\chi_0),\Omega(\chi_1),\Omega(\chi_2),\ldots$ that occurs
infinitely often is even. Here, we set
\begin{align*}
\Omega(\chi) =
\begin{cases}
  \Omega(s), & \text{if $\chi = \ndia{d}s$ or $\chi = \nbox{d}s$,}\\
  \min \Omega(S), & \text{otherwise.}
\end{cases}
\end{align*}
The \emph{language of $\ata{A}$}, denoted
$\ca{L}(\ata{A})$, is the set of all $\Gamma$-labeled trees $t$ such that
Eve has a winning strategy in $\ca{G}(\ata{A},t)$.

\subsection{Proof of~\Cref{lem:ataconsistency}}

The construction of this automaton is fairly standard and we only make a
few comments on it (cf.~\cite{BeBB16} for a similar
construction). Notice first of all that each of the two conditions for
consistency can be checked separately, and taking the intersection of
the respective automata yields the desired automaton. Each of the two
consistency conditions involves a top-down pass through the tree, while
checking the respective condition locally.

\subsection{Proof of~\Cref{lem:atarectwellc}}

We can devise $\ata{R}_{\sche{S}}$ as the intersection of two separate
2ATA, $\ata{R}_{1,\sche{S}}$ and $\ata{R}_{2, \sche{S}}$, where the
former checks whether the input tree $t$ is simple and the latter checks
whether $t$ is well-colored. It is well-known that building the
intersection of two 2ATA is feasible in polynomial time. Recall that
$\delta_t$ denotes the standard tree decomposition of $t$ and $\eta_t$
the standard adornment of $\tup{\dec{t},\delta_t}$.

\medskip
\noindent
\underline{\smash{\textit{The automaton $\ata{R}_{1,\sche{S}}$.}}}  In order to check whether $t$ is simple, we
have to check two conditions:
\begin{enumerate*}[label={(\roman*)}]
  \item whether $|\eta_t(v)| \leq 1$ for all $v \in \dom{t}$ and
  \item whether $\eta_t(v) \neq \eta_t(w)$ for all $v \neq w$.
\end{enumerate*}
The first condition is easy to check (respecting the stated size bounds)
and we leave this as an exercise for the reader. We describe how the
check the second one and assume that the input tree satisfies the first
condition.

We shall describe the game $\ca{G}(\ata{R}_{1,\sche{S}},t)$. Adam will
have a winning strategy in $\ca{G}(\ata{R}_{1,\sche{S}}, t)$ iff
$\eta_t(v) = \eta_t(w)$ for some nodes $v \neq w$. Adam first navigates
to an arbitrary node $v$ for which he wants to prove that there is some
other $w \neq v$ such that $\eta_t(v) = \eta_t(w)$. He then selects the
one and only atom $R_{\ve{a}} \in t(v)$ and guesses the path to the node
$w$ for which he thinks that $\eta_t(v) = \eta_t(w)$. If he finds that
node, he wins. By navigating to $w$, he must remember the atom
$R_{\ve{a}}$ in the states and also the direction he came from. He must
remember the direction due to the fact that we require $v \neq w$. Due
to that, the number of states of $\ata{R}_{1,\sche{S}}$ also depends
linearly on the branching degree $m_{\sche{S}}$ which still allows us
the respect the stated size bounds since
$m_{\sche{S}} = 2^{\width{\sche{S}}}$. Now while navigating to $w$, Adam
is not allowed to traverse the tree backwards in the direction he came
from. For storing this information, we need exponentially many states in
$\width{\sche{S}}$ and linearly many in $|\sche{S}|$ and $m_{\sche{S}}$.

\medskip
\noindent
\underline{\smash{\textit{The automaton $\ata{R}_{2,\sche{S}}$.}}}
Recall that a node $v \in \dom{t}$ is well-colored if it is either black
or it has at least two successor nodes which are both well-colored.
Having this definition in place, devising $\ata{R}_{2,\sche{S}}$ becomes
quite easy. In $\ca{G}(\ata{R}_{2, \sche{S}},t)$, Adam guesses the node
$v$ with a maximum distance from the root of which he wants to prove
that this node is not good. Since $v$ is not good, $v$ is white and it
has less than two successors that are good. Moreover, since $v$ has
maximum distance from the root, $v$ must have either no successors or it
has a single successor that is black. (Two black successors would turn
$v$ to a good node, while one black and a white successor of which both
are not good would turn the white successor to a non-good node which has
higher distance to the root.) Therefore, all Adam has to do is to
challenge Eve to show the existence of the (non-existent) second
successor. Adam will win if Eve cannot point to such a second
successor. Notice that the size of the state set of this automaton is
independent from $\sche{S}$. In fact, $\ata{R}_{2,\sche{S}}$ has
constantly many states.

\subsection{Proof of~\Cref{lem:atasat}}

The construction of this automaton appears in~\cite{ourpods}. Notice
that in~\cite{ourpods}, this automaton is devised for input trees whose
decodings are \emph{acyclic}\footnote{A definition of acyclicity will be
  given in \Cref{sec:treeification}.} rather than of tree-width at most
$\max\set{0, \width{\sche{S}}-1}$. However, the construction works also
with our encodings. Alternatively, one can view $\ata{A}_{Q}$ as a
version of the cost automaton $\ata{H}_Q$ from~\Cref{lem:costata} that
has no counters at all.

\subsection{Proof of~\Cref{lem:atamin}}

We shall informally describe the construction of $\ata{M}_Q$ and
describe its size bounds. Then we are going to prove that
$\ata{M}_Q$ indeed can be used to check whether $Q$ is FO-rewritable.

Firstly, we define an auxiliary alphabet $\Lambda_{\sche{S}}$ that is a
copy of (some parts of) the alphabet $\Gamma_{\sche{S}}$. More
specifically, for every $\rho \in \Gamma_{\sche{S}}$ such that
\begin{align*}
\rho \cap \set{R_{\ve{a}} \mid R/n \in \sche{S}, \ve{a} \in
  U_{\sche{S},\width{\sche{S}}}^n} = \set{\alpha_1,\ldots,\alpha_k},
\end{align*}
we stipulate that $\Lambda_{\sche{S}}$ contains the symbol
$\rho^\discnode = \set{\alpha^\discnode_1,\ldots,\alpha^\discnode_k}$.
That is, the alphabet $\Lambda_{\sche{S}}$ carries information on the
facts named in a $\Gamma_{\sche{S}}$-labeled tree (we call facts of the
form $\alpha^\discnode$ \emph{tagged}). Intuitively, a fact of the form
$R^\discnode_{\ve{a}}$ specifies that the minimization procedure (that
is yet to be implemented in $\ca{M}_Q$) should aim to satisfy $Q$ without
the need of $R_{\ve{a}}$.

For a $\Lambda_{\sche{S}}$-labeled tree $t$, we define
$t \upharpoonright \Gamma_{\sche{S}}$ as the $\Gamma_{\sche{S}}$-labeled
$t'$ that arises from $t$ by setting
$t'(v) = t(v) \cap \bigcup \Gamma_{\sche{S}}$ for all
$v \in \dom{t'}$. We say that $t$ is an \emph{extension} of
$t' = t \upharpoonright \Gamma_{\sche{S}}$. We call $t$
\emph{consistent} if $t \upharpoonright \Gamma_{\sche{S}}$ is consistent
and at least one node $v \in \dom{t}$ is labeled with a fact of the form
$R_{\ve{a}}^\discnode$ and, moreover, for all $v \in \dom{t}$, if
$R^\discnode_{\ve{a}} \in t(v)$ then also $R_{\ve{a}} \in t(v)$. We
define $\dec{t}^-$ to be
\begin{align*}
  \dec{t \upharpoonright \Gamma_{\sche{S}}} \setminus \set{R([v]_{a_1},\ldots,[v]_{a_k}) \mid R^\discnode_{a_1,\ldots,a_k} \in t(v)}.
\end{align*}
That is, in $\dec{t}^-$ we remove the facts that are tagged.

\begin{lemma}
  \label{lem:removeata}
  There is a 2ATA $\ata{D}_Q$ that runs on $m$-ary
  $\Lambda_{\sche{S}}$-labeled trees and accepts a
  $\Lambda_{\sche{S}}$-labeled tree if and only if:
\begin{enumerate}
\item $t$ is consistent,
\item $\dec{t}^- \models Q$, i.e., $t$ without the tagged facts satisfies
  $Q$.
\end{enumerate}
The number of states of $\ata{D}_Q$ is exponential in $\width{\sche{S}}$
and linear in $|\sche{S} \cup \sig{\ont{O}}|$. Moreover, $\ata{D}_Q$ can
be constructed in double exponential time in the size of $Q$.
\end{lemma}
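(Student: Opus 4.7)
The plan is to build $\ata{D}_Q$ as the intersection of four 2ATA on $m$-ary $\Lambda_{\sche{S}}$-labeled trees, each enforcing one aspect of the acceptance condition, relying on the fact that intersection of 2ATA is feasible in polynomial time and the state counts simply add. Three of these components are small bookkeeping automata; the fourth, which enforces $\dec{t}^- \models Q$, is where the real work lies.

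For the bookkeeping part, I would first take $\ata{C}_{\sche{S}}$ from \Cref{lem:ataconsistency} and lift it to $\Lambda_{\sche{S}}$ by letting its transitions inspect only the $\Gamma_{\sche{S}}$-part of the label; this preserves its constant-size state set and verifies consistency of $t \upharpoonright \Gamma_{\sche{S}}$. A second constant-size automaton $\ata{L}$ performs a purely local check that for every node $v$, every tagged fact $R^\discnode_{\ve{a}} \in t(v)$ is accompanied by its untagged counterpart $R_{\ve{a}} \in t(v)$. A third automaton $\ata{E}$, again with only a handful of states, does a single existential top-down traversal to witness at least one node labeled by some $R^\discnode_{\ve{a}}$.

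The fourth, and principal, component is an automaton $\ata{A}'_Q$ accepting those consistent $\Lambda_{\sche{S}}$-labeled $t$ with $\dec{t}^- \models Q$. I would obtain it by starting from $\ata{A}_Q$ of \Cref{lem:atasat} and syntactically rewriting its transition function so that every local test of the form ``$R_{\ve{a}} \in t(v)$'' that contributes to witnessing an atom of $\dec{t}$ is replaced by the conjunction ``$R_{\ve{a}} \in t(v)$ and $R^\discnode_{\ve{a}} \notin t(v)$'', i.e., tagged facts are treated as absent. Since $\dec{t}^-$ is obtained from $\dec{t \upharpoonright \Gamma_{\sche{S}}}$ by deleting exactly those facts whose tagged copies appear in $t$, the resulting automaton accepts $t$ iff $\dec{t}^- \models Q$. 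The hardest step will be verifying that $\ata{A}_Q$ only accesses encoded facts through local alphabet tests of this shape, so that the rewriting is sound; this follows from the way $\ata{A}_Q$ navigates the encoding and reads facts from the label at its current position. Crucially, the rewriting happens only at the transition-function level and introduces no new states, so $\ata{A}'_Q$ inherits both the state-count and the construction-time bounds of $\ata{A}_Q$.

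Intersecting $\ata{C}_{\sche{S}}$, $\ata{L}$, $\ata{E}$, and $\ata{A}'_Q$ yields $\ata{D}_Q$ in polynomial time. The total state count is dominated by $\ata{A}'_Q$, giving the claimed bounds---exponential in $\width{\sche{S}}$ and linear in $|\sche{S} \cup \sig{\ont{O}}|$---and the whole construction takes double exponential time in the size of $Q$, inherited from $\ata{A}_Q$.
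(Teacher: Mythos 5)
Your high-level decomposition is the same as the paper's: there, $\ata{D}_Q$ is also obtained as an intersection of an automaton checking consistency of the $\Lambda_{\sche{S}}$-labeled tree (which by the paper's definition already comprises ``at least one tagged fact'' and ``every tagged fact has its untagged copy in the same label'') with an automaton, built in the spirit of $\ata{A}_Q$ from \Cref{lem:atasat}, that ensures $\dec{t}^- \models Q$; your three bookkeeping automata and the size/time accounting are unproblematic. The genuine gap is in the correctness of your $\ata{A}'_Q$. Replacing each local test ``$R_{\ve{a}} \in t(v)$'' by ``$R_{\ve{a}} \in t(v)$ and $R^{\discnode}_{\ve{a}} \notin t(v)$'' implements a decoding in which a fact counts as present as soon as it has \emph{some} node-occurrence that is locally untagged. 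But $\dec{t}^-$ is defined by \emph{global} removal: a fact of $\dec{t \upharpoonright \Gamma_{\sche{S}}}$ is deleted as soon as its tagged copy appears at \emph{some} node. The two semantics differ because consistency does not force a fact of the decoded database to be encoded at a unique node of the tree---uniqueness is exactly what simplicity would give, and simplicity is not assumed in this lemma. Concretely, let $\sche{S} = \set{A/1}$, $\ont{O} = \set{A(x) \limpl P}$, $Q = \tup{\sche{S},\ont{O},P}$, and take a two-node tree with root $v$ and child $w$, both with $\names{\cdot} = \set{a}$, where $A_a \in t(v)$ and $A_a, A^{\discnode}_a \in t(w)$. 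This tree is consistent, $v$ and $w$ are $a$-equivalent, so $\dec{t} = \set{A([v]_a)}$ and $\dec{t}^- = \emptyset \not\models Q$; the lemma requires rejection, yet your $\ata{A}'_Q$ lets Eve witness the query at $v$, where the fact is untagged, so your $\ata{D}_Q$ accepts. Hence the ``only if'' direction fails, and your justification (``$\dec{t}^-$ is obtained by deleting exactly those facts whose tagged copies appear in $t$'') conflates node-occurrences with facts of the decoding.

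To repair this within the claimed bounds you must build the global-removal semantics into the $Q$-checking automaton: for instance, whenever Eve uses an untagged occurrence of $R_{\ve{a}}$ at a node as a witness, let Adam navigate along nodes in which all names of $\ve{a}$ persist and declare him the winner if he reaches a node carrying $R^{\discnode}_{\ve{a}}$. This requires remembering the fact $R_{\ve{a}}$ in the state, in the style of $\ata{R}_{1,\sche{S}}$, which costs exponentially many states in $\width{\sche{S}}$ and linearly many in $|\sche{S}|$, so the stated bounds survive. Alternatively, your construction is correct on simple encodings (where each fact is encoded at exactly one node), which arguably suffices for the way $\ata{D}_Q$ is used inside $\ata{B}_Q$ in the proof of \Cref{theorem:atafin}, since $\ata{B}_Q$ is intersected with the simplicity automaton; but that establishes a weaker statement than the lemma you were asked to prove.
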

\noindent
$\ata{D}_Q$ can be constructed as the
intersection of the 2ATA, where one checks consistency and the other
ensures that $\dec{t}^- \models Q$.  The former can be constructed in a
similar spirit as $\ata{A}_Q$ from~\Cref{lem:atasat} so that the
construction of $\ata{D}_Q$ respects the same size bounds. Moreover,
consistency of $t$ can be checked in a similar fashion as consistency
for $\Gamma_{\sche{S}}$-labeled trees, with an additional check that the
input tree contains at least one tagged fact.

Having $\ata{D}_Q$ from \Cref{lem:removeata} in place, we are now going
to construct $\ata{M}_Q$. $\ata{M}_Q$ will accept a consistent
$\Gamma_{\sche{S}}$-labeled input tree $t$ if and only if there is no
$\Lambda_{\sche{S}}$-labeled extension $t'$ of $t$ such that
\begin{enumerate*}[label={(\roman*)}]
  \item $t'$ is consistent and
  \item $\dec{t'}^- \models Q$.
\end{enumerate*}
Equivalently, $\ata{M}_Q$ will accept a consistent $t$ iff there are no
facts $\alpha_1,\ldots,\alpha_k \in \dec{t}$ ($k \geq 1$) such that
$\dec{t} \setminus \set{\alpha_1,\ldots,\alpha_k} \models Q$.

We can, according to~\cite{Va98}, convert $\ata{D}_Q$ into a
\emph{nondeterministic parity tree automaton} on $m_{\sche{S}}$-ary
trees $\ata{D}'_Q$ which is simply a 2ATA on $m_{\sche{S}}$-ary trees
where all transitions are of the form
$\delta(q, a) = \bigvee_{i\in I} (\ndia{1}q_{1,i} \land \cdots \land
\ndia{m_{\sche{S}}}q_{m,i})$.
This conversion causes an exponential blowup on the size of the state
set. We can view $\Lambda$ as an alphabet extending $\Gamma$. Hence, we
can perform the operation of \emph{projection} on $\ata{D}'_Q$ in such a
way that the resulting automaton, call it $\exists\ata{D}_Q$, accepts
$\Gamma_{\sche{S}}$-labeled trees only. Notice that projection is easy
to perform in the case of nondeterministic parity automata. Indeed, in
order to construct $\exists \ata{D}_Q$, the only thing we have to do is
to guess symbols from $\Lambda_{\sche{S}}$ in the transition function of
$\exists \ata{D}_Q$. This does not involve a blowup on the state set,
and $\exists\ata{D}_Q$ can be constructed in polynomial time in the size
of $\ata{D}'_Q$.
Notice that $\exists\ata{D}_Q$ accepts a consistent
$\Gamma_{\sche{S}}$-labeled tree $t$ iff there is a
$\Lambda_{\sche{S}}$-labeled extension $t'$ of $t$ such that $t'$ is
consistent and $\dec{t'}^- \models Q$. We thus obtain $\ata{M}_Q$ from
$\exists \ata{D}_Q$ by building the complement of $\exists \ata{D}_Q$.
Building the complement of a 2ATA is easy---we simply swap the
formulas owned by Adam and Eve.

\subsection{Proof of~\Cref{theorem:atafin}}

Let $Q = \tup{\sche{S},\ont{O},G}$. As said in the main body of the
paper, we can obtain $\ata{B}_Q$ by intersecting the respective 2ATA
from \Cref{lem:ataconsistency,lem:atarectwellc,lem:atasat,lem:atamin}.

It is clear $\ata{B}_Q$ has double-exponentially many states in
$\width{\sche{S}}$ and, moreover, $\ata{B}_Q$ can be constructed in
double-exponential time. Thus, $\ata{B}_Q$ accepts a
$\Gamma_{\sche{S}}$-labeled tree $t$ if and only if
\begin{itemize}
  \item $t$ is consistent,
  \item $t$ is well-colored and simple,
  \item $\dec{t} \models Q$, and
  \item $\dec{t} \setminus \set{\alpha} \not\models Q$ for all
    $\alpha \in \dec{t}$.
\end{itemize}

For a proof of
\Cref{theorem:atafin} it thus remains to be shown that the language of $\ata{B}_Q$ is infinite iff $Q$ is not
FO-rewritable.

Suppose first that $\ca{L}(\ata{B}_Q)$ is infinite. Since the the
branching degree of the input trees is bounded (recall that we run on
$m_{\sche{S}}$-ary trees), $\ata{B}_Q$ accepts trees of arbitrary
height. Hence, there are infinitely many trees
$t_{0},t_{1},\ldots,t_{k},\ldots$ and natural numbers
$h_0,h_1,\ldots,h_k,\ldots$ such that, for $i \geq 0$,
\begin{itemize}
\item $t_i$ has height $h_i$,
\item $\dec{t_i} \models Q$,
\item $t_i$ is well-colored and simple, and
\item $\dec{t_i} \models Q$, while
  $\dec{t_i} \setminus \set{\alpha} \not\models Q$ for all
  $\alpha \in \dec{t_i}$.
\end{itemize}
Moreover, we can assume that $i \neq j$ implies $h_i \neq h_j$
(otherwise we simply drop $t_{j}$). For $i \geq 0$, consider the
standard tree decomposition $\delta_{t_i}$ and the standard adornment
$\eta_{t_i}$ of $t_{i}$. It is clear that $\delta_{t_i}$ is
$\eta_{t_i}$-well-colored and $\eta_{t_i}$-simple as well. Moreover, for
$i \neq j$, the triples $\tup{\dec{t_{i}}, \delta_{t_i}, \eta_{t_i}}$ must
be non-isomorphic, since the height of $t_{i}$ and $t_{j}$ differ,
i.e., $h_i \neq h_j$. We thus obtain by~\Cref{pro:semanticrefined} that
$Q$ cannot be FO-rewritable.

Suppose now that $Q$ is not FO-rewritable. By~\Cref{pro:semanticrefined}
there is an infinite class $\ca{S}$ of pairwise non-isomorphic triples
$\tup{\db{D},\delta, \eta}$ (where $\db{D}$ is an $\sche{S}$-database,
$\delta$ a tree decomposition of $\db{D}$ of width at most
$\max\set{0,\width{\sche{S}}-1}$, and $\eta$ an adornment of
$\tup{\db{D},\delta}$) such that
\begin{itemize}
\item $\delta$ is $\eta$-well-colored and $\eta$-simple,
\item $\db{D} \models Q$, and
\item for every $\alpha \in \db{D}$ it holds that
  $\db{D}\setminus \set{\alpha} \not\models Q$.
\end{itemize}

Recall that we can encode every such triple
$\gamma = \tup{\db{D},\delta, \eta}$ as a $\Gamma_{\sche{S}}$-labeled
tree $t_\gamma$. Considering the encoding, for two non-isomorphic
triples $\gamma,\gamma' \in \ca{S}$ we must have
$t_\gamma \neq t_{\gamma'}$.  By construction we then have
$\set{t_\gamma \mid \gamma \in \ca{S}} \subseteq \ca{L}(\ata{B}_Q)$.
Hence, $\ca{L}(\ata{B}_Q)$ must be infinite since $\ca{S}$ is. This
completes the proof of \Cref{theorem:atafin}.

\subsection{Proof of \Cref{coro:ata-approach}}

As described in the main body of this paper, in order to decide
$\forew{\class{G}}{\class{AQ}_0}$, it suffices to decide whether the
language of the 2ATA $\ata{B}_Q$ from \Cref{theorem:atafin} is
finite. To this end, we first convert $\ata{B}_Q$ into a
non-deterministic parity tree automaton $\ata{B}_Q'$ according to the
procedure presented in~\cite{Va98}.  The number of states of
$\ata{B}_Q'$ is exponential in the number of states of
$\ata{B}_Q$. Since $\ata{B}_Q'$ accepts only finite trees, we can view
$\ata{B}_Q'$ as a conventional bottom-up tree automaton that works on
finite trees. We can then check whether $\ca{L}(\ata{B}_Q')$ is finite
in polynomial time in the size of $\ata{B}_Q'$~\cite{Va92}. Since
$\ata{B}_Q'$ has triple exponentially many states, the \threeexp\ upper
bound of $\forew{\class{G}}{\class{AQ}_0}$ follows. In case of bounded
arity, $\ata{B}_Q'$ has double exponentially many states, which yields
the \twoexp\ upper bound as stated.

\section{Proofs for \Cref{sec:secondapproach}}

\subsection{Proof of \Cref{pro:semanticcost}}

Let $Q = \tup{\sche{S},\ont{O},G}$ be an OMQ from
$\tup{\class{G},\class{AQ}_0}$.

\begin{lemma}
  \label{lem:derivtree}
  Let $\db{D}$ be an $\sche{S}$-database. Then $\db{D} \models Q$ iff
  there is a derivation tree for $\db{D}$ and $Q$.
\end{lemma}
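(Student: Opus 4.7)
The plan is to prove both directions of the equivalence separately, handling the simpler backward direction first and then devoting the bulk of the argument to the forward direction, where the branching bound $k_Q$ is the delicate point.

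For the backward direction ($\Leftarrow$), I would proceed by strengthening the claim: for every node $v$ of the derivation tree $\ca{T}$, one has $(\db{D}, \ont{O}) \models \eta(v)$. This is shown by induction on the height of the subtree rooted at $v$. The base case handles leaves, where $\eta(v) \in \db{D}$ by condition~2, so $\eta(v)$ is trivially preserved in every superinstance of $\db{D}$. In the inductive step, given children $u_1, \ldots, u_k$ of $v$, the inductive hypothesis ensures that $(\db{D}, \ont{O}) \models \eta(u_i)$ for each $i$. Hence in any instance $\inst{J} \supseteq \db{D}$ with $\inst{J} \models \ont{O}$ we have $\set{\eta(u_1), \ldots, \eta(u_k)} \subseteq \inst{J}$, and condition~3 together with $\inst{J} \models \ont{O}$ forces $\eta(v) \in \inst{J}$. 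Instantiating this at the root (labeled $P$ by condition~1) yields $\db{D} \models Q$.

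For the forward direction ($\Rightarrow$), assume $(\db{D}, \ont{O}) \models P$. I would use the guarded chase $\chase{\db{D}}{\ont{O}}$, in which $P$ must appear at some finite stage. The plan is to build $\ca{T}$ recursively by tracing back through the chase: the root is labeled $P$, and at each internal node labeled $\alpha$ I would pick a TGD application in the chase that derives $\alpha$ and attach, as children of $\alpha$, the set of distinct atoms obtained by instantiating that TGD's body. The recursion terminates at atoms belonging to $\db{D}$, which become the leaves. The branching bound $k_Q = |\sche{S} \cup \sig{\ont{O}}| \cdot w^w$ with $w = \width{\sche{S} \cup \sig{\ont{O}}}$ is justified by guardedness: every body variable of a guarded TGD lies inside the guard atom whose arity is at most $w$, so any instantiation of the body mentions at most $w$ distinct elements, and the number of distinct atoms over $w$ elements using relation symbols from $\sche{S} \cup \sig{\ont{O}}$ is at most $k_Q$.

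The principal obstacle is the presence of labeled nulls in the chase: a trace-back step may expose body atoms mentioning nulls introduced by prior existential quantifiers, whereas derivation tree labels must be facts over $\adom{\db{D}}$. I would exploit that condition~3 demands only \emph{semantic} entailment under $\ont{O}$, not a single-step TGD application, so any subderivation that introduces or consumes nulls can be collapsed into a single entailment step whose children are ground atoms over $\adom{\db{D}}$. Concretely, I would organize the trace-back along the tree-like structure of the guarded chase, whose bags have size at most $w$: whenever a bag carries nulls, the subtree is replaced by a single entailment witnessed by the (at most $k_Q$) atoms over $\adom{\db{D}}$ that occur in the ancestor bag, which suffices because guardedness confines the chase elaboration around each null to a bounded neighborhood of its introduction point. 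This preserves both the branching bound and the restriction of labels to $\adom{\db{D}}$, completing the construction.
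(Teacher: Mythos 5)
Your overall plan follows the paper's route (the backward induction, the chase trace-back, and the counting argument for $k_Q$ are all fine and match the paper's sketch), but the justification you give for the collapse step -- which is the entire crux of the forward direction -- rests on a false claim. Guardedness does \emph{not} confine ``the chase elaboration around each null to a bounded neighborhood of its introduction point'': already the single guarded TGD $S(x,y)\rightarrow\exists z\,S(y,z)$ produces an unbounded (indeed infinite) chain of fresh nulls below the first null, so no bounded-neighborhood or locality argument can underwrite the collapse. What actually makes the collapse sound is an information-confinement property, which is exactly the ``guarded sequence of facts'' property the paper invokes and which has to be proved (by induction along the chase), not inferred from locality: if an application creates fresh nulls and its frontier is mapped to a set $C\subseteq\adom{\db{D}}$ of at most $w$ constants, then every element occurring in the chase subtree rooted at that application is either in $C$ or a null created inside that subtree; consequently the only facts that can pass between that subtree and the rest of the chase are facts over $C$, and every $\adom{\db{D}}$-fact derived inside the subtree is entailed by $\ont{O}$ together with the at most $k_Q$ entailed facts over $C$ (and $C$ lies inside the argument tuple of a single database atom, which is what keeps these child sets guarded and of size at most $k_Q$). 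Without this statement, the assertion that the ancestor bag's $\adom{\db{D}}$-atoms ``suffice'' is unsupported.

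A second, smaller gap: after collapsing, the children you attach are \emph{entailed} facts over $\adom{\db{D}}$, not atoms of $\db{D}$ and not the body atoms of a single chase step, so it is not automatic that the recursion terminates with all leaves in $\db{D}$. The standard repair is to build the tree by induction on the chase step at which each fact is first derived, observing that every child you attach is either in $\db{D}$ or is derived strictly earlier than the fact it justifies (the outside facts over $C$ that feed the subtree are used, hence derived, before the fact they help derive); with the branching bounded by $k_Q$ this yields a finite derivation tree. Spelling out these two points would bring your argument in line with the property the paper relies on.
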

\begin{proof}[sketch]
  The direction from right to left is straightforward and left to the
  reader. For the other direction, we remark that a proof of a similar
  statement appears in~\cite{ourpods}. We only sketch the
  idea. Basically, since $\ont{O}$ consists of guarded rules only, for
  any fact $R(\ve{a})$ such that
  $\tup{\db{D},\ont{O}} \models R(\ve{a})$, one can find a
  \emph{guarded} sequence of facts $\alpha_1,\ldots,\alpha_k$ such that
  $\tup{\set{\alpha_1,\ldots,\alpha_k},\ont{O}} \models R(\ve{a})$. (We
  say that $\alpha_1,\ldots,\alpha_k$ is \emph{guarded}, if there is an
  $i = 1,\ldots,k$ such that
  $\adom{\set{\alpha_i}} \supseteq
  \adom{\set{\alpha_1,\ldots,\alpha_k}}$.)
  One then builds an appropriate derivation tree $\ca{T}$ by starting
  with the root node, labeled by $G$, and successively searching for
  such guarded sequences of facts. The facts contained in the guarded
  sequence then become labels of child nodes of the current
  node. Continuing this process recursively then gives rise to a
  derivation tree for $\db{D}$ and $Q$. (It is pretty easy to check
  that, if $\tup{\db{D},\ont{O}} \models R(\ve{a})$, the labels of leaf
  nodes of $\ca{T}$ must be contained in $\db{D}$.)

  It remains therefore to be argued why the branching degree of $\ca{T}$
  can be bounded by $k_Q = |\sche{S} \cup \sig{\ont{O}}|\cdot w^w$,
  where $w = \width{\sche{S} \cup \sig{\ont{O}}}$. This is
  simply the case, since any guarded sequence $\alpha_1,\ldots,\alpha_k$
  with more than $k_Q$ facts must contain repetitions: for a fixed guard
  $\alpha_i$ ($i = 1,\ldots,k$), there are at most $w^w$ different
  sequences of constants that use elements from
  $\adom{\set{\alpha_i}}$. Moreover, there are
  $|\sche{S} \cup \sig{\ont{O}}|$ different symbols that we can attach
  to such sequences. Therefore, $\alpha_1,\ldots,\alpha_k$ contains at
  most $k_Q$ distinct facts.
\end{proof}

\begin{remark}
  Notice that from this proof it becomes evident that we can restrict
  ourselves to derivation trees where the set of children of a
  non-leaf node is guarded in the above sense. We will assume this of
  all derivaton trees in the following.
\end{remark}

\noindent
\textit{Proof of \Cref{pro:semanticcost}.}  Throughout the proof, we let
$w = \max\set{0, \width{\sche{S}} -1}$.

Suppose first that there is a $k \geq 0$ such that, for every
$\sche{S}$-database $\db{D}$ of tree-width at most $w$, if
$\db{D} \models Q$, then there is a $\db{D}' \subseteq \db{D}$ with at
most $k$ facts such that $\db{D}' \models Q$. We show that $\costq{Q}$
is finite. Let $\ca{S}$ be the set of all $\sche{S}$-databases $\db{D}$
of at most $k$ facts such that $\db{D} \models Q$, and consider $\ca{S}$
to be factorized modulo database isomorphism. Clearly, $\ca{S}$ must be
finite. For each $\db{D} \in \ca{S}$, let $\ca{T}_{\db{D}}$ be a
derivation tree for $\db{D}$ and $Q$. Let
$n = \max\set{\height{\ca{T}_{\db{D}}} \mid \db{D} \in \ca{S}}$.
We claim that $\costq{Q} \leq n$. Indeed, suppose that $\db{D}$ is an
$\sche{S}$-database of tree-width at most $w$ such that
$\db{D} \models Q$. By assumption, we can find a
$\db{D}' \subseteq \db{D}$ of at most $k$ atoms such that
$\db{D}' \models Q$. Hence, (some isomorphic copy of) $\db{D}'$ is
contained in $\ca{S}$. Consider an arbitrary derivation tree $\ca{T}$
for $\db{D}$ and $Q$. In case $\height{\ca{T}} > n$, we know that
$\ca{T}_{\db{D}'}$ is also a derivation tree for $\db{D}$ and $Q$, and
so $\ca{T}$ is not the minimal one. Therefore, $\height{\ca{T}} \leq n$
and so $\costq{Q} \leq n$, as required.

Suppose now that $\costq{Q} = n$ for some $n \in \mathbb{N}$. For a
derivation tree $\ca{T}$, let $\ell_{\ca{T}}$ denote the number of leaf
nodes of $\ca{T}$. Let
\begin{align*}
  k = \sup\set{\ell_{\ca{T}} \mid\ &\text{$\ca{T}$ is a derivation tree for} \\ 
 &\text{$\db{D}$ and $Q$ of minimum height, where}\\
 &\text{$\db{D}$ is an $\sche{S}$-database with $\tw{\db{D}} \leq w$}}.
\end{align*}
Notice that $k$ exists (i.e., $k \in \mathbb{N}$), since we can bound
the height of the derivation trees used in the definition of $k$ by $n$,
and since the number of leaf nodes of a derivation tree of finite height
cannot be arbitrarily large (recall that the branching degree of a
derivation tree is bounded). We claim that $k$ is the bound we are
looking for in condition 1 of \Cref{pro:semanticcost}. Let $\db{D}$ be
an $\sche{S}$-database of tree-width at most $w$ such that
$\db{D} \models Q$. Consider a derivation tree $\ca{T}$ for $\db{D}$ and
$Q$ of minimum height. Let $\alpha_1,\ldots,\alpha_m$ be the leaf nodes
of $\ca{T}$. By construction, the number of leaf nodes of $\ca{T}$ is
surely bounded by $k$, i.e., $m \leq k$. Moreover, $\ca{T}$ is by
definition also a derivation tree for
$\db{D}' = \set{\alpha_1,\ldots,\alpha_m}$ and $Q$. By
\Cref{lem:derivtree} we then have $\db{D}' \models Q$. Now $\db{D}'$ is
a subset of $\db{D}$ of at most $k$ atoms. Therefore, condition 1 of the
statement of \Cref{pro:semanticcost} holds as well.  \hfill$\square$

\subsection{Preliminaries: Cost automata}

We remark that in this section we are going to work on labeled trees
that are amorphous, i.e., that have an arbitrary branching degree. This
is not necessary for a technical reason, but simplifies presentation of
the cost automaton from~\Cref{lem:costata}.

\medskip
\noindent
\underline{\smash{\textit{Objectives.}}} An \emph{objective} is a triple
$\obj = \tup{\act, f, \goal}$, where $\act$ is a finite set of
\emph{actions}, $f$ a function (the \emph{objective function}) that
assigns values from $\mathbb{N}_\infty$ to sequences of actions, and
$\goal \in \set{\min,\max}$. We shall consider a run of a (two-way)
alternating cost automaton as a two-player game with players Eve and
Adam, where $\goal$ specifies whether Eve's aim is to minimize or
maximize the objective function.

An example for an objective can be given by the well-known parity
acceptance condition which we also used for plain 2ATA. This condition
can be recast into a \emph{parity objective}
$\parity = \tup{P, \mathsf{cost}_{\parity},\goal}$, where $P$ is a
finite set of \emph{priorities} and $\mathsf{cost}_{\parity}$ is
specified as follows: if $\goal = \min$ ($\goal = \max$, respectively)
then $\mathsf{cost}_{\parity}$ maps a sequence of priorities to $0$
($\infty$, respectively) if the maximum priority that occurs infinitely
often is even, and to $\infty$ ($0$, respectively) otherwise.

\medskip
\noindent
\underline{\smash{\textit{Cost automata model.}}} Let $\Gamma$ be an
alphabet. A \emph{two-way alternating cost automaton} $\ata{A}$ on
$\Gamma$-labeled trees is a tuple
$\tup{S, \Gamma, s_0, \dir, \obj, \delta}$, where
\begin{itemize}
  \item $S$ is a finite set of \emph{states}.
  \item $s_0$ is the \emph{initial state}.
  \item $\dir$, as in the case of 2ATA, describes the possible
    directions; in our case, we always have
    $\dir = \set{0, \updownarrow}$, where
  \begin{align*}
    \updownarrow \colon &\varepsilon \longmapsto \set{i \mid i \in \set{1,\ldots,n_\varepsilon}},\\
                        &v \longmapsto \set{v \cdot i \mid i \in \set{-1,1,\ldots,n_v}},\ \text{for $v \neq \varepsilon$},
  \end{align*}
  and $n_v$ denotes the number of successors of $v$. Hence, the
  direction $\updownarrow$ denotes all possible neighbors of a node,
  including the parent (the root $\varepsilon$ has no parent and
  therefore ${\updownarrow}(\varepsilon)$ only includes all children of
  $\varepsilon$).
  \item $\obj$ is an objective;
  \item
    $\delta \colon S \times \Gamma \rightarrow
    \mathbb{B}^+(\mathsf{tran}(\ata{A}))$ the \emph{transition function}, where
  \begin{align*}
    \mathsf{tran}(\ata{A}) = \set{\ndia{d}\tup{s, c}, \nbox{d}\tup{s,c} \mid\ &s \in S, c \in \act,\\ &d \in \dir}.
  \end{align*} 
\end{itemize}
To emphasize the objective that is used, we often call an automaton in
the form of $\ata{A}$ an \emph{$\obj$-automaton}.

Notice that each transition also carries information on the action that
is to be performed when switching to a new state. We will present the
concrete actions available to our automata model below. Also note that
our cost automata work on trees of arbitrary branching
degree. 

As in the case of 2ATA, we assign owners to each formula from
$\mathbb{B}^+(\mathsf{tran}(\ca{A}))$ in the expected manner. That is,
conjunctions are owned by Adam, disjunctions are owned by Eve, atomic
formulas of the form $\nbox{d}\tup{s,c}$ are owned by Adam, while those
of the form $\ndia{d}\tup{s,c}$ are owned by Eve. 

Let $t$ be a $\Gamma$-labeled tree. As in the case of 2ATA, we define a
two-player \emph{(cost) acceptance game $\ca{G}(\ata{A},t)$ for $\ata{A}$ and
  $t$}. The arena of the game is again
$\mathbb{B}^+(\mathsf{tran}(\ca{A})) \times \dom{t}$, and the notion of
\emph{possible choices} of a position $\tup{\chi,v}$ in the game is
defined as in the case of 2ATA:
\begin{itemize}
\item If $\chi = \chi_1 \land \chi_2$ or $\chi = \chi_1 \lor \chi_2$
  then the possible choices are $\set{\tup{\chi_1,v}, \tup{\chi_2,v}}$.
\item If $\chi = \nbox{d}\tup{s,a}$ or $\chi = \ndia{d}\tup{s,a}$ then
  the possible choices are
  $\set{\tup{\delta(s,t(w)), w} \mid w \in d(v)}$.
\end{itemize}
Let $\chi_0 = \delta(s_0, t(\varepsilon))$. The
initial position of the game is $\tup{\chi_0, \varepsilon}$ and from any
position $\tup{\chi, v}$:
\begin{itemize}
\item The player that owns $\chi$ selects a $\tup{\chi',w}$ from the
    possible choices of $\tup{\chi,v}$, and
  \item the game continues from position $\tup{\chi', w}$.
\end{itemize}

The transition from $\tup{\chi, v}$ to $\tup{\chi', w}$ is a
\emph{move}. If $\chi$ is of the form $\ndia{d}\tup{s,c}$ or
$\nbox{d}\tup{s,c}$, then we say the \emph{output} of that move is
$c$. Otherwise, we simply say that that move has no output. A
\emph{play} in $\ca{G}(\ata{A},t)$ is a sequence of moves
$\tup{\chi_0, \varepsilon}, \tup{\chi_1,v_1},\tup{\chi_2,v_2},\ldots$
and a \emph{strategy} for one of the players is a function that, given
the history of the play, returns the next choice for that player. Again,
fixing a strategy for both players uniquely determines a play in
$\ca{G}(\ata{A},t)$. A play $\pi$ is \emph{consistent} with a strategy
$\xi$ if there is a strategy $\xi'$ for the other player such that $\xi$
and $\xi'$ yield $\pi$. The \emph{output} of a play
$\pi = \tup{\chi_0, v_0}, \tup{\chi_1,v_1},\tup{\chi_2,v_2},\ldots$ is
the sequence of actions $c_{i_0},c_{i_1},c_{i_2},\ldots$ such that, for
all $j\geq0$, $c_{i_j}$ is the output of $\tup{\chi_j,v_j}$.

Suppose $\obj = \tup{\act,f,\goal}$. The \emph{cost} of a play $\pi$
consistent with a winning strategy of Eve is the value of $f$ on the
output of that play. If $\goal = \min$ ($\goal = \max$, respectively)
then an \emph{$n$-winning strategy} for Eve is a strategy such that the
cost of any play (also called the \emph{cost} of that play) consistent
with that strategy is at most $n$ (at least $n$, respectively). We
define
\begin{align*}
  \val{\ata{A}}(t) = \mathrm{op}\set{n \mid \text{Eve has an $n$-winning strategy in $\ca{G}(\ata{A},t)$}},
\end{align*}
where $\mathrm{op} = \inf$ ($\mathrm{op} = \sup$, respectively) if
$\goal = \min$ ($\goal = \max$, respectively). We say that $\val{\ata{A}}$ is
the \emph{cost function defined by $\ata{A}$}.

\medskip
\noindent
\underline{\smash{\textit{Counter actions.}}} As in~\cite{BeCCB15}, we
are interested in objectives that are based on \emph{counters}. We use
the elementary actions \emph{increment \& check} $\ac{ic}$, \emph{reset}
$\ac{r}$, and \emph{no change} $\varepsilon$. Let $\gamma$ be a
counter. Its initial value is $0$ and afterwards it can take values from
$\mathbb{N}$ according to a sequence $\ve{u}$ of actions from
$\set{\ac{ic},\ac{r},\varepsilon}$. The meanings of $\ac{r}$ and
$\varepsilon$ are clear. The operation $\ac{ic}$ increments the counter
value (the \emph{increment}) and, at the same time, indicates that we
are interested in the current value of the counter (the
\emph{check}). Let $C_\gamma(\ve{u})$ denote the set of values at the
moment(s) in the sequence $\ve{u}$ when $\gamma$ is checked, i.e., when
the operation $\ac{ic}$ occurs. For example,
$C_\gamma(\ac{icicricricicic}) = \set{2,1,3}$ and
$C_\gamma(\ac{icicricic}) = \set{2}$.

The \emph{distance objective} is
$\dist = \tup{\set{\ac{ic},\ac{r},\varepsilon}, \mathrm{cost}_{\dist},
  \min}$
and we assume that $\dist$ only has a single counter, say $\gamma$. The
function $\mathrm{cost}_{\dist}$ maps a sequence $\ve{u}$ of counter
actions over $\set{\ac{ic},\ac{r},\varepsilon}$ to
$\sup C_\gamma(\ve{u})$. We will be interested in an objective that
combines $\dist$ with the parity acceptance condition
$\parity$. Formally, given two objectives
$O_1 = \tup{\act_1,f_1,\goal}, O_2 = \tup{\act_2,f_2,\goal}$, we denote
by $O_1 \land O_2$ the objective
$\tup{\act_1 \times \act_2, \max\set{f_1,f_2}, \min}$ if $\goal = \min$,
and $\tup{\act_1 \times \act_2, \min\set{f_1,f_2}, \max}$ if
$\goal = \max$. Thus, for the $\dist \land \parity$-objective, Eve's
goal is to minimize the counter value of $\gamma$ \emph{and} to satisfy
the parity condition at the same time.

\begin{remark}
  Usually, one defines the $\dist$ objective as a special instance of
  the more general \emph{$\mathrm{B}$-objective}. The
  $\mathrm{B}$-objective is minimizing for Eve and may contain multiple
  counters instead of a single one. Moreover, in this objective, the
  counter action $\ac{ic}$ is separated into $\ac{i}$ and $\ac{c}$,
  i.e., the counters may be incremented but not checked. We refer the
  reader to~\cite{BeCCB15} for more details.
\end{remark}

\subsection{Proof of \Cref{lem:costata}}

Let $Q = \tup{\sche{S},\ont{O},G}$ be an OMQ from
$\tup{\class{G},\class{AQ}_0}$. Consider a consistent
$\Gamma_{\sche{S}}$-labeled tree $t$. We are going to devise a
$\dist \land \parity$-automaton $\ata{H}_Q$ over $\Gamma_{\sche{S}}$
such that Eve has an $n$-winning strategy in $\ca{G}(\ata{H}_Q,t)$ if and
only if there is a derivation tree $\ca{T}$ for $\dec{t}$ and $Q$ of
height at most $n$.

Let
$\ata{H}_Q = \tup{S,\Gamma_{\sche{S}},s_0,\set{0,\updownarrow}, \dist
  \land \parity, \delta}$.
For the parity condition, we shall only use the priorities
$\set{0,1}$. The remaining components of $\ata{H}_Q$ are specified in
the following.

\medskip
\noindent
\underline{\smash{\textit{The state set $S$.}}} Let $U_{\sche{S}}$ be the finite set of
constants that is used for arguments in $\Gamma_{\sche{S}}$. The state
set $S$ consists of all atomic formulas $R(a_1,\ldots,a_k)$, where
$R/n \in \sche{S} \cup \sig{\ont{O}}$ and
$a_1,\ldots,a_k \in U_{\sche{S}}$. We set the initial state $q_0$ to
equal $G$. For technical reasons, we include an additional sink
state denoted $\mathsf{sink}$.

\medskip
\noindent
\underline{\smash{\textit{The transition function $\delta$.}}} We define $\delta$ as
follows. Consider a symbol $\rho \in \Gamma_{\sche{S}}$. Firstly, we set
\begin{align*}
 \delta(\mathsf{sink},\rho) &= \ndia{0}\tup{\mathsf{sink}, \varepsilon, 0}.
\end{align*}
Secondly, let $R(a_1,\ldots,a_k)$ be a state different from
$\mathsf{sink}$. We set $\ve{a} = a_1,\ldots,a_k$ and
distinguish cases:
\begin{enumerate}[label={(C\arabic*)}]
\item If $\set{a_1,\ldots,a_k} \not\subseteq \names{\rho}$ then
  \begin{align*}
  \delta(R(\ve{a}),\rho) = \ndia{0}\tup{R(\ve{a}), \ac{ic}, 1}.
  \end{align*}

  In this case, Eve will lose the game as she loops in this state
  $R(\ve{a})$ while incrementing the counter and producing an infinite
  run whose maximum priority that occurs infinitely often (i.e., the
  priority $1$) is odd.

\item Otherwise, if $\rho \models R(\ve{a})$ then
  \begin{align*}
    \delta(R(\ve{a}), \rho) &= \ndia{0}\tup{\mathsf{sink}, \varepsilon, 1}.
  \end{align*}
  In this case, Eve will win the game as she first changes to the sink
  state and then she loops in this state while not increasing the
  counter. In the sink state, she produces an infinite play whose
  maximum priority that occurs infinitely often (the priority $0$) is
  even.

\item Otherwise, let
\begin{align*}
  \tau_1 \colon \alpha_{1,1} \land \cdots \land \alpha_{1,m_1}\ \ldots \ \tau_l \colon \alpha_{l,1} \land \cdots \land \alpha_{l,m_l}
\end{align*}
enumerate all guarded conjunctions of atomic facts from $S$ such that
$\tup{\set{\alpha_{i,1}, \ldots,\alpha_{i,m_i}},\ont{O}}
\models R(\ve{a})$, for all $i = 1,\ldots,l$. We let
\begin{align*}
  \delta(R(\ve{a}),\rho) = \bigvee_{i = 1}^l \bigwedge_{j = 1}^{m_i}\ndia{0}\tup{\alpha_{i,j}, \ac{ic}, 1} \lor
  \ndia{\updownarrow}\tup{R(\ve{a}), \ac{\varepsilon}, 1}.
\end{align*}
Eve may choose between two possibilities here. Either she moves to some
neighboring node in the tree while remaining in state $R(\ve{a})$, or
she may decide to pick a guarded conjunction
$\tau_i \colon \alpha_{i,1} \land \cdots \land \alpha_{i,m_i}$. In the
latter case, Adam challenges Eve's choice by changing the state to one
of the $\alpha_{i,j}$ while incrementing the counter. 
Notice that this case corresponds to the unfolding of a (series of) rules
and thus to the built-up of a derivation tree.
\end{enumerate}
This completes the construction of $\ata{H}_Q$. We briefly comment on
the size of $\ata{H}_Q$ and the time required to construct the same. 

It is clear that the number of states of $\ata{H}_Q$ is exponential in
$\width{\sche{S}}$ and linear in $|\sche{S} \cup
\sig{\ont{O}}|$.
Moreover, the overall construction of $\ata{H}_Q$ takes double
exponential time in the size of $Q$. The determining factor for this
upper bound is the construction of $\delta(\cdot,\cdot)$; more
specifically, the case of condition (C3). Up to logical equivalence,
there are at most double exponentially many conjunctions of the form
$\tau_i \colon \alpha_{i,1} \land \cdots \land \alpha_{i, m_i}$ that
imply a given atomic fact under $\ont{O}$ and $\tau_i$ is of at most
exponential size. Moreover, checking whether an atomic fact is implied
by a database and a set of guarded rules is feasible in \twoexp\ in
combined complexity, and in \textsc{PTime} in data
complexity. Therefore, the transition function can in total be
constructed in \twoexp.

It remains to be shown that $\ata{H}_Q$ is correct, that is:

\begin{lemma}
  \label{lem:costcorrect}
  Suppose ${t}$ is a consistent $\Gamma_{\sche{S}}$-labeled
  tree. Eve has an $n$-winning strategy in $\ca{G}(\ata{H}_Q,t)$ iff
  there is a derivation tree $\ca{T}$ for $\dec{t}$ and $Q$ of height at
  most $n$.
\end{lemma}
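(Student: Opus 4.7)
The plan is to prove both directions of the equivalence by induction, exploiting a tight correspondence between plays in $\ca{G}(\ata{H}_Q, t)$ consistent with an Eve strategy and nodes of a derivation tree for $\dec{t}$ and $Q$. The key link between the two objects is that one application of the first disjunct in case (C3), corresponding to Adam challenging Eve's choice of a guarded conjunction, simultaneously increments the distance counter by one and represents one level in the derivation tree.

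For the direction $(\Leftarrow)$, suppose a derivation tree $\ca{T}$ for $\dec{t}$ and $Q$ of height at most $n$ exists. I would construct an $n$-winning strategy $\xi$ for Eve by recursion on the structure of $\ca{T}$. The invariant to maintain is that, whenever $\xi$ leads the game to a position $(R(\ve{a}), v)$ with $\ve{a} \subseteq \names{v}$, this position corresponds to a node $u$ in $\ca{T}$ labeled $R([v]_{a_1}, \ldots, [v]_{a_k})$ whose subtree has height bounded by $n$ minus the number of $\ac{ic}$ actions issued so far. If $u$ is a leaf, then $R([v]_{a_1}, \ldots, [v]_{a_k}) \in \dec{t}$; by the definition of $\dec{t}$ there is some node $v^\ast$ in the intersection of the $a_i$-equivalence classes of $v$ with $R_{\ve{a}} \in t(v^\ast)$. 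Eve navigates along this intersection using the second disjunct of case (C3) without counter increments until she reaches $v^\ast$, at which point case (C2) fires and the game transitions to the sink state, securing priority $0$ infinitely often. If $u$ is internal, its children induce a guarded conjunction; Eve first navigates to a node where the corresponding names all lie in $\names{\cdot}$ (possible by guardedness of the conjunction and the construction of $\dec{t}$), then picks the corresponding disjunct of case (C3). Adam's response selecting some conjunct $\alpha_j$ puts the game in a position matching the $j$-th subtree of $u$, and Eve continues recursively.

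For the direction $(\Rightarrow)$, given an $n$-winning strategy $\xi$ for Eve, I would extract a derivation tree by analyzing all plays consistent with $\xi$. Starting from the root labeled $G$, at any position $(R(\ve{a}), v)$ reachable under $\xi$, Eve either eventually triggers case (C2)---in which case $R([v]_{a_1}, \ldots, [v]_{a_k}) \in \dec{t}$ and the current derivation node is a leaf---or picks a guarded conjunction in case (C3), in which case the derivation tree's children are labeled with the conjuncts (now carrying the corresponding equivalence classes), and each subtree is determined by the continuation of $\xi$ under each possible Adam response. The bound on the height follows because each unfolding in the derivation tree corresponds to exactly one $\ac{ic}$ action, and the counter value along every $\xi$-play is at most $n$.

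The main obstacle will be showing that the counter-free navigation moves (second disjunct of case (C3)) do not interfere with the inductive correspondence, since they have no counterpart in the derivation tree. I plan to handle this via the parity acceptance condition: if Eve navigated indefinitely without ever triggering case (C2) or selecting a rule-unfolding disjunct, every transition along that infinite play would have priority $1$ and the game would never reach the sink state, so the maximum priority occurring infinitely often would be the odd $1$, contradicting $\xi$ being winning. Hence, navigation phases along $\xi$-plays are always finite and may be safely collapsed when extracting the derivation tree; in the converse direction, Eve may allocate such phases freely since they incur no counter cost. A secondary subtlety is verifying that Eve's navigation in the $(\Leftarrow)$ direction always stays within the intersection of the relevant equivalence classes so that case (C1) never triggers, which follows directly from the definition of $a$-equivalence in $\dec{t}$.
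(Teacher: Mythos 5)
Your proposal is correct and its core is the same as the paper's: a play-by-play correspondence in which each rule unfolding chosen via the first disjunct of (C3) emits one checked $\ac{ic}$ and represents one edge of the derivation tree, navigation via the second disjunct of (C3) is cost-free, (C2) handles leaves, and the parity condition (priority $1$ on all non-sink transitions) rules out plays that navigate forever. The left-to-right construction of Eve's strategy from a derivation tree is essentially identical to the paper's. The one place where you genuinely diverge is the extraction of a derivation tree from an $n$-winning strategy $\xi$: the paper fixes a single play $\pi$ of \emph{maximal} cost consistent with $\xi$, builds from it a partial tree (one branch plus the siblings introduced at each unfolding), and then argues by contradiction that the incomplete nodes can be completed and that the completed tree has the same height as the partial one, using maximality of $\pi$. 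You instead unravel the entire strategy over all of Adam's responses, so that the derivation tree is obtained directly and the height bound follows immediately because every branch corresponds to some $\xi$-consistent play whose counter value, hence number of unfoldings, is at most $n$. Your variant is arguably cleaner, since it avoids the completion and maximality arguments altogether. Both your sketch and the paper's gloss over the same two low-level points at comparable levels of detail: that a node of $t$ jointly naming all elements of the chosen guarded conjunction exists (which rests on guardedness of derived facts over $\adom{\dec{t}}$), and that the names in the current state persist along the navigation path so that (C1) never fires (which you correctly reduce to the definition of $a$-equivalence and, in the converse direction, to the fact that a winning strategy cannot enter a (C1) position).
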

\begin{proof}[sketch]
  Suppose first that there is such a $\ca{T}$ of height $n_0 \leq n$. We
  can assume without loss of generality that only the leaf nodes of
  $\ca{T}$ of the form $\beta(\ve{a})$ satisfy
  $\dec{t} \models \beta(\ve{a})$; otherwise, we can simply truncate
  $\ca{T}$.

  Our strategy for Eve will be chosen such that any atomic formula
  $R(a_1,\ldots,a_k)$ that appears as a state in a play consistent with
  that strategy occurs in the label of some node of $\ca{T}$. This is
  trivially satisfied for the initial state, since the root of $\ca{T}$
  is labeled with $G$. Suppose now that the game is at position
  $\tup{\chi, v}$, where $v \in \dom{t}$ is a node of the input tree $t$
  and $\chi$ is of the form $\ndia{0}\tup{\beta(\ve{a}), \ac{ic}, 1}$,
  with $\beta(\ve{a})$ a state of $\ata{H}_Q$. Eve's task is to show
  that she can either match the atom $\beta(\ve{a})$ to the input tree,
  or she proceeds according to (C3) by finding a guarded conjunction of
  facts that imply $\beta(\ve{a})$ under $\ont{O}$. Eve thus proceeds as
  follows:
\begin{itemize}
\item If $t(v) \models \beta(\ve{a})$ then Eve proceeds according to
  (C2). In fact, in this case she has no other choice to do so and, as
  explained in the definition of (C2), such a play will be winning for
  Eve.
\item Otherwise, it must be the case that there is a non-leaf node in
  $\ca{T}$ with a label $\beta(\ve{a})$. Suppose the children of that
  node are labeled with
  $\alpha_1(\ve{a}_1),\ldots,\alpha_k(\ve{a}_k)$. As said, we can
  restrict ourselves to the case where the conjunction
  $\alpha_1(\ve{a}_1) \land \cdots \land \alpha_k(\ve{a}_k)$ is guarded
  and thus, say, $\alpha_1(\ve{a}_1)$ is its guard. Eve first navigates
  to a node $w \in \dom{t}$ of the input tree whose names comprise all
  of $\ve{a}_1$ while remaining in state $\beta(\ve{a})$.  Notice that
  this is possible, since all of $\ve{a}$ are contained as names in the
  nodes on the unique path between $v$ and $w$.  When Eve arrives at
  node $w$, she chooses to challenge Adam by selecting the conjunction
  $\ndia{0}\tup{\alpha_1(\ve{a}_1), \ac{ic},1} \land \cdots \land
  \ndia{0}\tup{\alpha_k(\ve{a}_k), \ac{ic},1}$.
  Adam then selects an arbitrary conjunct
  $\ndia{0}\tup{\alpha_i(\ve{a}_i),\ac{ic},1}$ and the game continues
  from the according position (it is then again Eve's turn).
\end{itemize}
It is easy to see that the choices of Eve that are dictated by $\ca{T}$
lead to an infinite play that satisfies the parity condition. Concerning
the counter, it is only increased when either Eve chooses to ``unfold''
a rule according to (C3). Moreover, the counter is not increased when
she navigates between nodes in the input tree. Therefore, any play
consistent with Eve's strategy has cost at most $n_0 \leq n$.

Conversely, suppose now that Eve has an $n$-winning strategy $\xi$ in
$\ca{G}(\ata{H}_Q,t)$. Let $\pi$ be a play of maximum cost that is
consistent with Eve's strategy. We shall construct a derivation tree
$\ca{T}$ for $\dec{t}$ and $Q$ whose height is bounded by $n$. Let
$\pi = \tup{\chi_0,v_0},\tup{\chi_1,v_1},\ldots,\tup{\chi_k,v_k},\ldots$
be a play consistent with $\xi$. Consider the sequence
$\ve{\chi} = \chi_0,\chi_1,\ldots$ and let
$\alpha_0,\alpha_1,\ldots,\alpha_k,\ldots$ enumerate the states that
appear in atomic formulas in $\ve{\chi}$ such that $\alpha_i$ appears in
$\ve{\chi}$ before $\alpha_{i+1}$. That is, $\alpha_0 = G$ and each
$\alpha_i$ ($i \geq 1$) is either the sink state or results from a
challenge by Adam according to (C3). Notice that, by construction of
$\ata{H}_Q$, if $\alpha_i = \mathsf{sink}$ then
$\alpha_{j} = \mathsf{sink}$ for all $j \geq i$.

We inductively construct a sequence of trees $\ca{T}_0,\ca{T}_1,\ldots,$
such that $\ca{T}_{k+1}$ extends $\ca{T}_{k}$ and there is an $m \geq 0$
such that $\ca{T}_m = \ca{T}_{k}$, for all $k \geq m$. $\ca{T}_m$ will
be a tree that can be extended to derivation tree for $\dec{t}$ and
$Q$. Let $\ca{T}_0$ be the tree with only the root node that is labeled
with $G$. Assume that $\ca{T}_k$ has been constructed. $\ca{T}_{k+1}$ is
defined according to $\pi$:
\begin{itemize}
\item If $\alpha_{k + 1} = \mathsf{sink}$ or $\alpha_{k+1} = \alpha_k$
  then we set $\ca{T}_{k+1} = \ca{T}_k$.
\item If $\alpha_{k+1}$ is an atomic fact chosen by Adam according to
  (C3) in response to Eve's choice of a guarded conjunction
  $\beta_1 \land \cdots \land \beta_l$ that imply an atom $\alpha_i$
  ($i \leq k$), then $\ca{T}_{k+1}$ is obtained from $\ca{T}_k$ by
  adding children $v_1,\ldots,v_l$ with the respective labels
  $\beta_1,\ldots,\beta_l$ to a leaf node $v$ of $\ca{T}_k$ whose label
  is an atomic fact of the form $\alpha_i$. Notice that such a leaf node
  exists, since the state $\alpha_{k+1}$ can only be assumed via the
  existence of such an $\alpha_i$.
\end{itemize}
Observe that $\pi$ must assume the sink state at some point, since $\pi$
is winning for Eve and, by construction, it loops in this state
(otherwise the parity condition would not be satisfied). Thus, there is
an $m \geq 0$ such that $\ca{T}_m = \ca{T}_k$ for all $k \geq m$. It
remains to be shown that $\ca{T}_m$ can be extended to a derivation tree
$\ca{T}$ for $\dec{t}$ and $Q$. Recall that we chose $\pi$ to be a play
consistent to Eve's strategy that is of maximal cost. Clearly, the cost
of $\pi$ equals the height of $\ca{T}_m$. Roughly, $\ca{T}_m$ consists
of a finite branch starting at the root (which is labeled with $G$)
whose leaf node is labeled by an atom from $\dec{t}$. Moreover, each
branch may have children that may not be database atoms of
$\dec{t}$. We can, however, easily check that we can attach subtrees to
these ``incomplete'' nodes such that the resulting tree $\ca{T}$ becomes
a derivation tree for $\dec{t}$ and $Q$. If this was not possible, Adam
could find a play which forces Eve to lose. Moreover, the height of
$\ca{T}$ must equal the height of $\ca{T}_m$, since otherwise Adam could
find a play that has higher cost than $\pi$, which is impossible due to
our choice of $\pi$. Notice that, by construction, the cost of $\pi$
equals the height of $\ca{T}$ and $\height{\ca{T}} \leq n$.
\end{proof}

Hence, by~\Cref{lem:costcorrect}, we know that, for any consistent $\Gamma_{\sche{S}}$-labeled tree
$t$, $\val{\ata{H}_Q}(t) = n$ if and only if $n$ is the minimal $n_0$
such that there is a derivation tree of height $n_0$ for $\dec{t}$ and
$Q$. Therefore, for all $n \in \mathbb{N}$, $\val{\ata{H}_Q}(t) = n$ iff
$\cost{\dec{t}}{Q} = n$. We thus obtain that $\val{\ata{H}_Q}$ is
bounded iff $\costq{Q}$ is finite. This concludes the proof of
\Cref{lem:costata}.

\section{Proofs for \Cref{sec:treeification}}

In this section, we also consider CQs that contain \emph{equality atoms}
of the form $x = y$ in their bodies. Notice that, in non-empty CQs,
these can always be removed by appropriately identifying variables. We
allow such atoms, since the results we rely upon explicitly make use of
such atoms.

We say that a CQ $q(\ve{x})$ is \emph{answer-guarded} if it contains an
atom in its body that has all answer variables of $q(\ve{x})$ as
arguments. Notice that every (non-empty) Boolean CQ is trivially
answer-guarded.\footnote{Note that we can view the empty CQ, denoted
  $\top$, also as answer-guarded since it is equivalent to
  $\exists x\, x = x$.} Also notice that the body of any
frontier-guarded rule can be seen as an answer-guarded CQ.

\subsection{Preliminaries: Treeification}

\underline{\smash{\textit{Strictly acyclic queries.}}} Let $q(\ve{x})$
be an answer-guarded CQ over a schema $\sche{S}$. We say that $q(\ve{x})$ is
\emph{acyclic}, if there is a tree decomposition
$\delta = \tup{\ca{T},\tup{X_v}_{v \in T}}$ of $q(\ve{x})$ such that,
for all $v \in T$, there is an atom $\alpha$ of $q(\ve{x})$ such that
$X_v \subseteq \var{\alpha}$. If there is such a $\delta$ that, in
addition, has a bag containing all answer variables of $q(\ve{x})$, then
we say that $q$ is \emph{strictly acyclic}.

A \emph{guarded formula} is a first-order formula where each occurrence
of a quantifier is of either forms
\begin{align*}
  \forall \ve{y}\,(\alpha(\ve{x},\ve{y}) \limpl \psi)\quad\text{or}\quad\exists\ve{y}\,(\alpha(\ve{x},\ve{y}) \land \psi),
\end{align*}
where $\alpha$ is an atomic formula, called \emph{guard}, and all the
free variables of $\psi$ (denoted $\free{\psi}$) are contained in
$\ve{x} \cup \ve{y}$. We also permit that the guard is an equality atom
of the form $x = y$. In the following, we are interested in guarded
formulas that contain only existential quantification and conjunction,
and we restrict ourselves to those in the remainder of this paper. We
say that a guarded formula $\varphi(\ve{x})$ is \emph{strictly guarded},
if it is of the form
$\exists\ve{y}\,(\alpha(\ve{x},\ve{y}) \land \psi)$, i.e., all free
variables in a strictly guarded formula are covered by an atom as well.
It is well-known that every strictly acyclic formula is equivalent to a
strictly guarded formula and vice versa (see~\cite{FlFG02}).

\medskip
\noindent
\underline{\smash{\textit{Treeifying CQs.}}}  Given an answer-guarded CQ
$q(\ve{x})$ over $\sche{S}$ and a schema $\sche{T} \supseteq \sche{S}$,
the \emph{$\sche{T}$-treeification of $q(\ve{x})$} is the set
$\Lambda^{\sche{T}}_q$ of all strictly acyclic CQs $q'(\ve{x})$ over
$\sche{T}$ such that
\begin{enumerate*}[label={(\roman*)}]
\item $q'$ is \emph{contained} in $q$, in symbols $q' \subseteq q$, that
  is, for any $\sche{T}$-database $\db{D}$, if $\db{D} \models q'$ then
  also $\db{D} \models q$, and
\item $q'$ is minimal in the sense that removing one atom from $q'$
  turns $q'$ into a CQ that is either not strictly acyclic or that is
  not contained in $q$ anymore.
\end{enumerate*}

It can be shown that all the CQs contained in $\Lambda^{\sche{T}}_q$ can
be restricted as to contain only CQs of size at most $3|q|$, where $|q|$
denotes the number of atoms in $q$. Hence, $\Lambda^{\sche{T}}_q$ can be
seen as a UCQ that is of exponential size in the size of $q$. Notice
that $q(\ve{x})$ is in general not equivalent to its
treeification. However, $q(\ve{x})$ and $\Lambda^{\sche{T}}_q$ are
equivalent over acyclic $\sche{T}$-databases (acyclicity for databases
is defined as for CQs)~\cite{BaCS15}.

\medskip
\noindent
\underline{\smash{\textit{Treeifying OMQs from
      $\tup{\class{FG},\class{AQ}_0}$.}}} Let
$Q = \tup{\sche{S},\ont{O},G}$ from $\tup{\class{FG},\class{AQ}_0}$. The
\emph{width} of $\ont{O}$, denoted $\width{\ont{O}}$, is the maximum
number of variables that appear in any body of a rule from
$\ont{O}$. Fix a new relation symbol $C$ of arity $\width{\ont{O}}$. 

We are now going to describe a translation $\translation{Q}{C}$ in full
detail that takes $Q$ and transforms it into an OMQ $\translation{Q}{C}$
from $\tup{\class{G},\class{AQ}_0}$ with data schema
$\sche{S} \cup \set{C}$. Firstly, we set
\begin{align*}
  \translation{Q}{C} = \left(\sche{S} \cup \set{C}, \bigcup_{\tau \in \ont{O}} \eta^{\sche{S} \cup \sig{\ont{O}}}_C(\tau), G\right),
\end{align*}
where the definition of $\eta^{\sche{T}}_C(\tau)$, for $\tau \in \ont{O}$
and a schema $\sche{T}$, is as follows. Suppose $\tau$ is of the form
$\varphi(\ve{x},\ve{z}) \limpl
\exists\ve{y}\,\beta(\ve{x},\ve{y})$. Then we set
\begin{align*}
  f_C^{\sche{T}}(\tau) = \left\{q(\ve{x}) \limpl \exists \ve{y}\,\beta(\ve{x},\ve{y}) \mid q(\ve{x}) \in \Lambda^{\sche{T} \cup \set{C}}_{\exists\ve{z}\,\varphi(\ve{x},\ve{z})}\right\}.
\end{align*}
Notice though, strictly speaking, the rules
$q(\ve{x}) \limpl \exists \ve{y}\,\beta(\ve{x},\ve{y})$ may not be
guarded. However, since $q(\ve{x})$ is strictly acyclic, we may unfold
$q(\ve{x}) \limpl \exists\ve{y}\,\beta(\ve{x},\ve{y})$ into linearly
many guarded rules by using additional auxiliary predicates. The result
of this unfolding will be denoted $ \eta^{\sche{T}}_C(\tau)$.

We are going to describe this unfolding step in more detail in the
following. Let $\chi(\ve{x})$ be a strictly guarded formula
equivalent to $q(\ve{x})$. During the unfolding step, we are going to
introduce fresh auxiliary predicates of the form $T_{\eta}/k$, where
$\eta$ is a subformula of $\varphi(\ve{x})$ and $k$ is the number of
free variables of $\eta$. We shall treat this predicates modulo logical
equivalence, i.e., we set $T_{\eta_1} = T_{\eta_2}$ iff
$\eta_1 \equiv \eta_2$. We unfold the query $q(\ve{x})$ inductively
according to the construction of $\chi(\ve{x})$.

Suppose first that $\chi(\ve{x}) \equiv \beta(\ve{x})$ for some
relational atom $\beta(\ve{x})$. We translate $q(\ve{x})$ to the
rule
\begin{align*}
  \beta(\ve{x}) &\limpl T_{\chi(\ve{x})}(\ve{x}).
\end{align*}
Suppose now that
$\chi(\ve{x}) \equiv \exists\ve{y}\,(\gamma(\ve{x},\ve{y}) \land \eta)$,
where $\free{\eta} \subseteq \ve{x} \cup \ve{y}$ and $\gamma(\ve{x},\ve{y})$
is a relational atom.\footnote{The case where the guard is actually an
  equality atom $x = y$ is handled in a similar fashion just by
  identifying variables in the resulting rule.} Let
$\eta_1(\ve{x}_1),\ldots,\eta_k(\ve{x}_k)$ be strictly guarded formulas
such that $\free{\eta_i} = \set{\ve{x}_i}$ (for $i =1 ,\ldots,k$) and
whose conjunction is equivalent to $\eta$. Then we rewrite
$q(\ve{x}) \limpl \exists\ve{y}\,\beta(\ve{x},\ve{y})$ into the rule
\begin{align*}
  \gamma(\ve{x},\ve{y}), T_{\eta_1}(\ve{x}_1),\ldots,T_{\eta_k}(\ve{x}_k) &\limpl T_{\chi(\ve{x})}(\ve{x}),
\end{align*}
and, in addition, add the according translations for the formulas
$\eta_1(\ve{x}_1),\ldots,\eta_k(\ve{x}_k)$.

The \emph{unfolding}\footnote{Of course, the notion of unfolding depends
  on the choice of $\chi(\ve{x})$. However, it is easily seen that we
  arrive at an equivalent set of rules, no matter which $\chi(\ve{x})$
  equivalent to $q(\ve{x})$ is chosen.} of the rule
$q(\ve{x}) \limpl \exists \ve{y}\,\beta(\ve{x},\ve{y})$ is then the set
of rules resulting from translating $\chi(\ve{x})$ plus the rule
\begin{align*}
  T_{\chi(\ve{x})} \limpl \exists \ve{y}\,\beta(\ve{x},\ve{y}). 
\end{align*}
As mentioned above, we set
\begin{align*}
  \eta_C^{\sche{T}}(\tau) = \set{\sigma \mid \text{$\sigma$ is a rule contained in the unfolding of $\tau$}}.
\end{align*}
It is easy to see that the unfolding introduces at most linearly many
new auxiliary predicates per rule. Thus, in total, the number of rules
contained in $\translation{\ont{O}}{C}$ is exponential in the number of
rules of $\ont{O}$ and we may introduce an exponential number of new
auxiliary relation symbols.

\begin{example}
  We provide an example due to~\cite{ourpods} that exemplifies the unfolding of rules as
  described above. Suppose
  $q(x) = \exists y, z\,(R(x,y) \land S(x,x) \land R(y,z))$ which is
  equivalent to the strictly guarded formula
  $\chi(x) = S(x,x) \land \exists y\,(R(x,y) \land
  \exists z\,R(y,z))$. Suppose we want to unfold the frontier-guarded
  rule $q(x) \limpl O(x)$, where $O/1$ is a unary relation symbol. Then
  the unfolding described above yields the set of rules
  \begin{align*}
    T_{\chi(x)}(x) &\limpl O(x),\\
    S(x,x), T_{\exists y (R(x,y) \land \exists z R(y,z))}(x) &\limpl T_{\chi(x)}(x),\\
    R(x,y), T_{\exists z R(y,z)}(y) &\limpl T_{\exists y (R(x,y) \land \exists z R(x,z))}(x),\\
    R(y,z) &\limpl T_{\exists z R(y,z)}(y).
  \end{align*}
  Notice that here we somehow did not pedantically follow the exact
  translation, since we treated $\exists z\,R(y,z)$ as a strictly
  guarded formula, thereby invoking the fact that it is equivalent to
  $\exists z\,(R(y,z) \land R(y,z))$.\hfill\markfull
\end{example}

In~\cite{ourpods}, the following lemmas are shown:

\begin{lemma}
  \label{lem:equivacyclic}
  Suppose $\db{D}$ is an acyclic $(\sche{S} \cup \set{C})$-database. Then
  $\db{D} \models Q$ iff $\db{D} \models \translation{Q}{C}$.
\end{lemma}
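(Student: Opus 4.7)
The plan is to prove the two directions separately, invoking throughout the fact from~\cite{BaCS15} that, over acyclic $\sche{T}$-databases, every CQ $q(\ve{x})$ is equivalent to its treeification $\Lambda^{\sche{T}}_{q}$.

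For the direction $\db{D} \models \translation{Q}{C} \Rightarrow \db{D} \models Q$, which in fact does not need acyclicity, I would take an arbitrary $\inst{J} \supseteq \db{D}$ with $\inst{J} \models \ont{O}$ and expand it to an instance $\inst{J}^*$ over the larger schema of $\translation{Q}{C}$ by setting each auxiliary predicate $T_\eta$ introduced during the unfolding to coincide exactly with the extension of $\eta$ in $\inst{J}$. The unfolding rules defining the $T_\eta$ are then satisfied by construction. For each top-level treeified rule $T_{\chi(\ve{x})}(\ve{x}) \limpl \exists\ve{y}\,\beta(\ve{x},\ve{y})$ arising from $\tau\colon \varphi(\ve{x},\ve{z}) \limpl \exists\ve{y}\,\beta(\ve{x},\ve{y})$, the truth of $T_{\chi(\ve{x})}^{\inst{J}^*}(\ve{a})$ implies $\inst{J} \models q(\ve{a})$ for the corresponding $q \in \Lambda^{\sche{S}\cup\sig{\ont{O}}\cup\set{C}}_{\exists\ve{z}\,\varphi(\ve{x},\ve{z})}$; by the containment $q \subseteq \exists\ve{z}\,\varphi(\ve{x},\ve{z})$ built into the definition of treeification, this gives $\inst{J} \models \exists\ve{z}\,\varphi(\ve{a},\ve{z})$, so $\inst{J} \models \tau$ supplies the required head witness in $\inst{J} \subseteq \inst{J}^*$. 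Hence $\inst{J}^* \supseteq \db{D}$ is a model of all rules in $\translation{\ont{O}}{C}$, and the hypothesis $\db{D} \models \translation{Q}{C}$ forces $\inst{J}^* \models G$, whence $\inst{J} \models G$.

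For the converse $\db{D} \models Q \Rightarrow \db{D} \models \translation{Q}{C}$, acyclicity of $\db{D}$ is essential. I argue contrapositively: assuming $\inst{J}^* \supseteq \db{D}$ with $\inst{J}^* \models \translation{\ont{O}}{C}$ and $\inst{J}^* \not\models G$, I build an $\inst{J} \supseteq \db{D}$ with $\inst{J} \models \ont{O}$ and $\inst{J} \not\models G$ by chasing $\db{D}$ under $\ont{O}$ and homomorphically embedding existential witnesses into $\inst{J}^*$ at each step. The crucial invariant is that the intermediate chase instances remain acyclic: since $\db{D}$ is acyclic, and each firing of a frontier-guarded rule attaches freshly generated nulls as a tree bag to an already present frontier-guarded tuple, the tree decomposition of $\db{D}$ extends canonically to the chase. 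Thanks to this invariant, whenever $\ont{O}$ has a trigger $\tau$ on a tuple $\ve{a}$ at an intermediate step, the equivalence $\exists\ve{z}\,\varphi(\ve{x},\ve{z}) \equiv \Lambda^{\sche{S}\cup\sig{\ont{O}}\cup\set{C}}_{\exists\ve{z}\,\varphi(\ve{x},\ve{z})}$ on the current acyclic instance converts this into a trigger of a treeified rule of $\translation{\ont{O}}{C}$ on $\ve{a}$; since $\inst{J}^*$ (after extending the auxiliary predicates $T_\eta$ as in the first direction) satisfies $\translation{\ont{O}}{C}$, it already contains a head witness that can be pulled back into $\inst{J}$ along the homomorphism.

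The main obstacle is formalising this inductive construction in the harder direction: one must rigorously verify both that the frontier-guarded chase of an acyclic database preserves acyclicity (so that the \cite{BaCS15} equivalence applies at every step), and that the homomorphism $\inst{J} \to \inst{J}^*$ can be maintained all along, which is what ensures that $\inst{J} \not\models G$ is inherited from $\inst{J}^* \not\models G$. The rest of the proof consists of routine bookkeeping around the unfolding into auxiliary predicates.
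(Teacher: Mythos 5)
You should first note that the paper itself contains no proof of this lemma at all: it is imported verbatim from~\cite{ourpods}, so your attempt has to stand on its own. Your direction ``$\db{D} \models \translation{Q}{C}$ implies $\db{D} \models Q$'' is correct, and you are right that it needs no acyclicity: expanding an arbitrary model $\inst{J} \supseteq \db{D}$ of $\ont{O}$ by interpreting each $T_\eta$ as the extension of $\eta$ in $\inst{J}$ does satisfy all unfolded rules, and the containment of each treeified body in the original body closes the argument. (One small slip in the other direction: the countermodel $\inst{J}^*$ already interprets the $T_\eta$'s, being a model of the translated ontology; ``extending'' them is neither needed nor harmless, since enlarging $T_{\chi}$ could violate the rule $T_{\chi}(\ve{x}) \limpl \exists\ve{y}\,\beta(\ve{x},\ve{y})$.)

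The genuine gap is the invariant on which your harder direction hinges, namely that chasing an acyclic database with the frontier-guarded $\ont{O}$ preserves acyclicity. In this paper acyclicity requires every bag of the tree decomposition to be covered by an \emph{atom}, and TGDs have \emph{conjunctive} heads. A bag consisting of the frontier tuple together with all fresh nulls is in general covered by no atom, so ``attaching the fresh nulls as a tree bag to the frontier-guarded tuple'' does not witness acyclicity; worse, a head may itself be a cyclic CQ --- e.g., $A(x) \limpl \exists z_1,z_2,z_3\,(R(z_1,z_2) \land R(z_2,z_3) \land R(z_3,z_1))$ is frontier-guarded --- and a single firing already destroys acyclicity of the intermediate instance, after which the equivalence from~\cite{BaCS15} that you invoke at every step is unavailable: the treeified body would need either a $C$-atom or an ordinary atom covering the cyclic part, and the chase creates neither. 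So the invariant is false as you state it; it becomes true only after normalizing $\ont{O}$ to single-atom heads (which preserves frontier-guardedness and certain answers over the original schema), and then one must still argue that the lemma for the normalized ontology yields it for the given $Q$ and its translation $\translation{Q}{C}$. None of this appears in your proposal, and you yourself flag the verification as outstanding, so the pivotal step is missing rather than merely sketched. A further, more routine point: the~\cite{BaCS15} equivalence is stated over acyclic \emph{databases}, whereas your intermediate instances contain nulls and the limit is infinite, so you need the equivalence over acyclic instances (provable by a finite homomorphic-image argument). The remainder of your plan --- maintaining a homomorphism into $\inst{J}^*$, firing the unfolded $T$-rules bottom-up to obtain $T_{\chi}$ and then the head witness, and pulling it back while noting that the $0$-ary $G$ is preserved forward along homomorphisms --- is sound.
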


\begin{lemma}
  \label{lem:treemodelfg}
  If $\db{D} \models Q$ then there is an $\sche{S}$-database
  $\db{D}^{\ast}$ of tree-width at most $\max\set{0,\width{\ont{O}} - 1}$ such that
  \begin{enumerate}
    \item $\db{D}^\ast \models Q$,
    \item there is a homomorphism from $\db{D}^\ast$ to $\db{D}$.
  \end{enumerate}
  Moreover, if $q$ is a Boolean CQ of tree-width at most $w \geq 0$ and
  $\db{D} \models q$, then there is an $\sche{S}$-database
  $\db{D}^{\ast}$ of tree-width at most $w$ such that
  \begin{enumerate}
    \item $\db{D}^\ast \models q$,
    \item there is a homomorphism from $\db{D}^\ast$ to $\db{D}$.
  \end{enumerate}
\end{lemma}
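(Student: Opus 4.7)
The plan is to address the two parts separately. The second part (on a Boolean CQ $q$ of tree-width at most $w$) is essentially by inspection, while the first (on a $\tup{\class{FG},\class{AQ}_0}$-OMQ) adapts the unraveling-plus-compactness argument used for~\Cref{lem:treemodel}, but now taking pieces of size $\width{\ont{O}}$ instead of pieces guarded by a single atom.

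For the second part, I would define $\db{D}^\ast$ as the \emph{canonical database} of $q$, obtained by viewing each variable of $q$ as a fresh constant and each atom of $q$ as a fact on those constants. Then $\tw{\db{D}^\ast} = \tw{q} \leq w$ since the two Gaifman graphs coincide, the identity map certifies $\db{D}^\ast \models q$, and any homomorphism $h \colon q \to \db{D}$ witnessing $\db{D} \models q$ pushes forward to a homomorphism $\db{D}^\ast \to \db{D}$. This part involves only routine unpacking of definitions.

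For the first part, the plan is to build a (generally infinite) \emph{frontier-guarded unraveling} $\db{U}$ of $\db{D}$. The nodes of $\db{U}$ are sequences of ``pieces'' of $\db{D}$, where a piece is a sub-instance on at most $\width{\ont{O}}$ constants; consecutive pieces share only a designated overlap, and each piece is represented on fresh copies of constants elsewhere. The construction induces a natural tree decomposition of $\db{U}$ with bags of size at most $\width{\ont{O}}$, hence $\tw{\db{U}} \leq \width{\ont{O}}-1$, together with an obvious projection homomorphism $\db{U} \to \db{D}$. The key claim is $\db{U} \models Q$. Once established, compactness yields a finite $\db{D}^\ast \subseteq \db{U}$ already entailing $G$; restriction preserves both the tree-width bound and the homomorphism into $\db{D}$, as required.

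The crux, which I expect to be the main obstacle, is proving $\db{U} \models Q$. The natural approach is to define a ``$\width{\ont{O}}$-piece bisimulation'' between $\db{U}$ and $\db{D}$ and argue that this relation lifts to the respective chases under $\ont{O}$. The point is that every body of a rule in $\ont{O}$, having at most $\width{\ont{O}}$ variables and a frontier-guard, fits inside a single piece, so every chase step triggered on $\db{D}$ can be faithfully mirrored on $\db{U}$ and conversely. An induction on the chase step then shows that exactly the same $0$-ary facts --- in particular $G$ --- are derived on both sides, giving $\db{U} \models Q$, at which point the compactness step above delivers the desired finite $\db{D}^\ast$.
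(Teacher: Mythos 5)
The paper does not actually prove this lemma --- it is imported wholesale from~\cite{ourpods}, and for the guarded analogue (\Cref{lem:treemodel}) the text only gestures at ``guarded unraveling plus compactness''. Your overall route is that same route adapted to frontier-guardedness, and much of it is sound: the second part via the canonical database of $q$ (after eliminating equality atoms by identifying variables) is correct and complete, and in the first part the compactness step and the fact that a finite subset of the unraveling $\db{U}$ inherits both the tree-width bound and the projection homomorphism are fine. Note also that the ``and conversely'' half of your mirroring claim costs nothing: the projection $\db{U} \rightarrow \db{D}$ together with closure of OMQs under homomorphisms already gives that $\db{U} \models Q$ implies $\db{D} \models Q$; only the direction $\db{D} \models Q \Rightarrow \db{U} \models Q$ carries content.

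That direction is where your argument has a genuine gap, and the justification you give for it fails as stated. Chase steps are not ``triggered on $\db{D}$'': beyond the first round they fire on matches into $\chase{\db{D}}{\ont{O}}$, which mix database constants with labeled nulls, and such a match does not fit inside any piece of $\db{D}$ --- the nulls lie in no piece at all. So ``every body has at most $\width{\ont{O}}$ variables, hence fits in a piece'' only handles the base level; the real work is the inductive invariant that coordinates the nulls of $\chase{\db{D}}{\ont{O}}$ with nulls of $\chase{\db{U}}{\ont{O}}$. Concretely, one must mirror an entire derivation of $G$ into $\db{U}$, choosing for each rule application a bag whose label contains the (at most $\width{\ont{O}}$) database constants of that match, and argue --- using that the frontier is covered by a single atom and that every fact of $\chase{\db{D}}{\ont{O}}$ has its database constants confined to the arguments of a single earlier fact --- that the copies of nulls shared between different body atoms and different applications can be generated coherently over the chosen copies of constants; equivalently, one invokes invariance of frontier-guarded entailment under a guarded-negation-style bisimulation of width $\width{\ont{O}}$. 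This bookkeeping is precisely the technical content the paper outsources to~\cite{ourpods}; your proposal names the right construction but asserts its key property rather than proving it, and the one-line reason offered would not survive the presence of nulls.
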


\begin{lemma}
  \label{lem:acymodel}
  If $Q \in \tup{\class{G},\class{AQ}_0}$ is an OMQ with data schema
  $\sche{S}$ and $\db{D} \models Q$, then there is an acyclic
  $\db{D}^\ast$ such that
  \begin{enumerate}
    \item $\db{D}^\ast \models Q$ and
    \item there is a homomorphism from $\db{D}^\ast$ to $\db{D}$.
  \end{enumerate}
  Moreover, an according statement holds for acyclic Boolean CQs as well. 
\end{lemma}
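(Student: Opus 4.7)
The plan is to construct $\db{D}^*$ by unraveling a finite derivation tree for $\db{D}$ and $Q$. By Lemma~\ref{lem:derivtree}, from $\db{D} \models Q$ we obtain a derivation tree $\ca{T}$ in which, by the accompanying remark, the children of every non-leaf node form a guarded set; hence one such child, which we call the guard child, has arguments containing those of its parent's label and of all its siblings' labels. We produce renamed copies $\rho(v)$ of each label $\eta(v)$ by traversing $\ca{T}$ top-down: at each step, arguments already present in the parent's renamed label are reused, while any new arguments receive fresh constants. Because the guard child subsumes both the parent and its siblings, copying its renaming to every other member of the family preserves the entire sharing pattern within that family. The desired database is $\db{D}^* = \{\rho(v) \mid v \text{ is a leaf of } \ca{T}\}$.

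The map $h$ that undoes the renaming is a homomorphism $\db{D}^* \to \db{D}$, giving condition~(2). For condition~(1), the renamed labels themselves form a derivation tree $\ca{T}^*$ for $\db{D}^*$ and $Q$: its leaves lie in $\db{D}^*$ by construction, and for each non-leaf $v$ with children $u_1, \ldots, u_k$, the entailment $(\{\rho(u_1), \ldots, \rho(u_k)\}, \ont{O}) \models \rho(v)$ holds because the sharing pattern inside the family of $v$ is preserved, so the same rule applications that worked originally still witness the entailment. Lemma~\ref{lem:derivtree} then yields $\db{D}^* \models Q$.

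The main obstacle is showing that $\db{D}^*$ is acyclic. The tree $\ca{T}$ with bags $X_v = \adom{\{\rho(v)\}}$ is a tree decomposition of $\db{D}^*$: connectedness follows from the top-down renaming discipline, which ensures every constant occurs along a connected subtree. However, internal bags are labeled by derived atoms from $\sig{\ont{O}}$, not by atoms of $\db{D}^*$. Guardedness comes to the rescue: the arguments of every internal label are contained in those of its guard child, so internal nodes can be contracted into their guard children iteratively, producing a tree decomposition whose bags are exactly the argument sets of the leaf atoms, i.e., the atoms of $\db{D}^*$. This establishes acyclicity. For the ``moreover'' part on acyclic Boolean CQs, the statement is immediate: if $\db{D} \models q$ via a homomorphism $g$, take $\db{D}^*$ to be the canonical instance of $q$ with its variables viewed as fresh constants; then $\db{D}^*$ is acyclic by assumption on $q$, the homomorphism $g$ yields $\db{D}^* \to \db{D}$, and $\db{D}^* \models q$ via the identity.
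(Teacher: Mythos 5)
The paper itself offers no proof of this lemma (it is imported from the cited prior work), so your argument has to stand on its own. Your overall plan, unravelling a derivation tree obtained from Lemma~\ref{lem:derivtree} by a top-down renaming that reuses the parent's names and copies the guard child's renaming to the whole family, is sound, and your treatment of conditions (1) and (2) and of the Boolean CQ case is fine. The genuine gap is in the acyclicity argument: the claim that $\ca{T}$ with bags $X_v=\adom{\set{\rho(v)}}$ is a tree decomposition of $\db{D}^\ast$ is false. Connectedness fails whenever a constant is shared by two children of a node $v$ but does not occur in $\eta(v)$, and this situation is unavoidable: it already arises at the root, whose label is the $0$-ary atom $P$. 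Concretely, take $\ont{O}=\set{S(x,y,z), A(y) \limpl P}$ and $\db{D}=\set{S(a,b,c),A(b)}$; the derivation tree is the root $P$ with the two facts as leaf children, your renaming yields $\db{D}^\ast=\set{S(a',b',c'),A(b')}$, and $b'$ occurs in the bags of the two (non-adjacent) leaves but not in the root's empty bag. Note that this sibling sharing is exactly what the step ``copy the guard child's renaming to every member of the family'' is designed to create, and must create for the entailment $(\set{\rho(u_1),\ldots,\rho(u_k)},\ont{O})\models\rho(v)$ to survive, so it cannot be argued away.

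Consequently, the contraction step, which you present as repairing only the fact that internal bags are not covered by atoms of $\db{D}^\ast$, is in fact carrying the entire burden of connectedness, and connectedness of the contracted decomposition does not follow from that of the original one (which does not hold); it needs its own proof. The lemma is true and your construction can be completed, for instance by assigning to each internal node $v$ the bag $\adom{\set{\rho(u_g)}}$, where $u_g$ is the guard child of $v$'s family (equivalently, by proving connectedness directly for the tree obtained by contracting every internal node into the leaf reached by iterating guard children). Every such bag is contained in the argument set of some atom of $\db{D}^\ast$, since along a chain of guard children the argument sets only grow and the chain ends in a leaf; and connectedness can be established by tracking a fresh constant $b$ that renames an original constant $c$ introduced at the family of a node $w$: the nodes whose bags then contain $b$ are exactly $w$ together with those descendants of $w$ reachable from $w$ along nodes whose original labels contain $c$, a connected set. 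Some argument of this kind is indispensable; as written, the only non-routine part of the lemma, the acyclicity of $\db{D}^\ast$, is unproven.
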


The following lemma will be the main ingredient towards a proof of
\Cref{the:main-result}:

\begin{lemma}
  \label{lem:foequiv}
  $Q$ is FO-rewritable iff $\translation{Q}{C}$ is.
\end{lemma}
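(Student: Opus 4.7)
I will prove the two directions of the bi-implication separately, relying on the bounded-tree-width and acyclic model properties from~\Cref{lem:treemodelfg,lem:acymodel}, the equivalence of $Q$ and $\translation{Q}{C}$ on acyclic databases provided by~\Cref{lem:equivacyclic}, and the fact that $Q$ and $\translation{Q}{C}$ are both preserved under homomorphisms.

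For $\translation{Q}{C}$ FO-rewritable $\Rightarrow Q$ FO-rewritable, I will take an FO-rewriting $\varphi$ of $\translation{Q}{C}$ over $\sche{S} \cup \set{C}$ and form $\varphi^\top$ by replacing every atom of the form $C(\cdot)$ in $\varphi$ by $\top$. To show that $\varphi^\top$ is an FO-rewriting of $Q$, I will establish the key equivalence
\[
  \db{D} \models Q \quad\iff\quad \db{D}^C \models \translation{Q}{C}
\]
for every $\sche{S}$-database $\db{D}$, where $\db{D}^C$ extends $\db{D}$ by the atoms $C(\ve{a})$ for all $\ve{a} \in \adom{\db{D}}^{\width{\ont{O}}}$; then $\db{D} \models \varphi^\top$ iff $\db{D}^C \models \varphi$ iff $\db{D}^C \models \translation{Q}{C}$ iff $\db{D} \models Q$. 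The ``$\Leftarrow$'' direction is immediate, as every rule of the ontology of $\translation{Q}{C}$ is entailed by $\ont{O}$. For ``$\Rightarrow$'', I will apply~\Cref{lem:treemodelfg} to obtain a model $\db{D}^\ast$ of $Q$ of tree-width at most $\width{\ont{O}} - 1$ with a homomorphism $h \colon \db{D}^\ast \to \db{D}$, and then acyclify $\db{D}^\ast$ by adding a single $C$-fact per bag of a tree decomposition, yielding an acyclic $(\sche{S} \cup \set{C})$-database $\db{D}^{\ast C}$. \Cref{lem:equivacyclic} gives $\db{D}^{\ast C} \models \translation{Q}{C}$, and because $\db{D}^C$ contains \emph{every} $C$-atom over its active domain, $h$ extends to a homomorphism $\db{D}^{\ast C} \to \db{D}^C$; preservation under homomorphisms then yields $\db{D}^C \models \translation{Q}{C}$.

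For the converse direction, the fact that $\tup{\class{FG},\class{AQ}_0}$-OMQs are preserved under homomorphisms combined with Rossman's theorem entails that if $Q$ is FO-rewritable, then $Q \equiv q_Q = \bigvee_i p_i$ for some Boolean UCQ $q_Q$ over $\sche{S}$. I will then set $q' = \bigvee_i \Lambda^{\sche{S} \cup \set{C}}_{p_i}$, which is itself a Boolean UCQ over $\sche{S} \cup \set{C}$, and prove $\translation{Q}{C} \equiv q'$. For $\translation{Q}{C} \Rightarrow q'$: given $\db{D} \models \translation{Q}{C}$, \Cref{lem:acymodel} supplies an acyclic $\db{D}^\ast$ with $\db{D}^\ast \models \translation{Q}{C}$ that maps homomorphically to $\db{D}$; \Cref{lem:equivacyclic} then yields $\db{D}^\ast \models Q$, hence $\db{D}^\ast \models p_i$ for some $i$, and the equivalence of $p_i$ with its treeification on acyclic databases produces $p_i' \in \Lambda^{\sche{S} \cup \set{C}}_{p_i}$ with $\db{D}^\ast \models p_i'$, which transfers to $\db{D}$ along the homomorphism. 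For $q' \Rightarrow \translation{Q}{C}$: if $h \colon p_i' \to \db{D}$ witnesses $\db{D} \models p_i'$, then $h(p_i') \subseteq \db{D}$ is acyclic (being the homomorphic image of a strictly acyclic CQ), and $h(p_i') \models p_i$ by the containment $p_i' \subseteq p_i$; hence $h(p_i') \models Q$, \Cref{lem:equivacyclic} lifts this to $h(p_i') \models \translation{Q}{C}$, and monotonicity of $\translation{Q}{C}$ finally yields $\db{D} \models \translation{Q}{C}$.

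The main delicate point in both directions is the interplay between acyclicity and homomorphisms. One has to ensure that the acyclifying $C$-facts added in the first direction are indeed present in the target $\db{D}^C$---which is precisely the reason for saturating $\db{D}^C$ with \emph{all} $C$-atoms over $\adom{\db{D}}$---and that the homomorphic image of an acyclic CQ in the second direction still admits a tree decomposition whose bags are covered by atoms of the image. Both verifications are routine, and the rest of the argument is a direct assembly of the cited lemmas.
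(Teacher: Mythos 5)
Your overall strategy is sound and, in places, genuinely different from the paper's proof, but one step as written is false. In the direction showing that your UCQ $q'=\bigvee_i \Lambda^{\sche{S}\cup\set{C}}_{p_i}$ implies $\translation{Q}{C}$, you argue that the homomorphic image $h(p_i')\subseteq\db{D}$ of a strictly acyclic CQ is again acyclic, and you then apply \Cref{lem:equivacyclic} to $h(p_i')$. This is not true in general: a homomorphism may identify variables coming from far-apart bags of the tree decomposition, destroying the connectedness condition. For instance, $p'=R(x_1,x_2)\land R(x_2,x_3)\land R(x_3,x_4)$ is strictly acyclic, but the image under $x_1,x_4\mapsto a$, $x_2\mapsto b$, $x_3\mapsto c$ is the triangle $\set{R(a,b),R(b,c),R(c,a)}$, which admits no tree decomposition in which every bag is covered by an atom. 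So exactly the verification you dismiss as ``routine'' at the end fails, and \Cref{lem:equivacyclic} cannot be invoked on $h(p_i')$. The step is easily repaired without changing your construction: apply \Cref{lem:equivacyclic} to the canonical (frozen) database of $p_i'$ itself, which \emph{is} acyclic because $p_i'$ is strictly acyclic; from $p_i'\subseteq p_i$ and $p_i\Rightarrow q_Q\equiv Q$ deduce that this canonical database satisfies $Q$, hence $\translation{Q}{C}$, and then transfer to $\db{D}$ along $h$ using closure of $\translation{Q}{C}$ under homomorphisms (which you already use elsewhere), rather than mere monotonicity on the subset $h(p_i')$.

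Two further remarks. First, in the other direction your justification that the ``$\Leftarrow$'' half of $\db{D}\models Q\iff\db{D}^C\models\translation{Q}{C}$ is immediate ``as every rule of the ontology of $\translation{Q}{C}$ is entailed by $\ont{O}$'' is too quick: the unfolding introduces fresh predicates $T_\eta$ about which $\ont{O}$ says nothing, so those rules are not entailed by $\ont{O}$; you need either to expand a model of $\ont{O}$ by interpreting each $T_\eta$ as the guarded formula it encodes, or to argue via \Cref{lem:acymodel}, \Cref{lem:equivacyclic} and homomorphism-closure of $Q$. Second, modulo these repairs your route differs from the paper's in interesting ways: for $\translation{Q}{C}\Rightarrow Q$ you saturate $\db{D}$ with all $C$-facts and replace $C$-atoms by $\top$ in an arbitrary FO-rewriting, avoiding Rossman's theorem in that direction (the paper instead takes a UCQ rewriting of $\translation{Q}{C}$ with acyclic disjuncts and drops the $C$-atoms); note this relies on FO queries being evaluated over the active domain, so that $C$-atoms are uniformly true on the saturated database. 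For $Q\Rightarrow\translation{Q}{C}$ your use of treeifications of the disjuncts is a clean alternative to the paper's ``variants'' of tree-width-bounded disjuncts obtained by adding $C$-atoms over bags; both ultimately rest on \Cref{lem:treemodelfg}, \Cref{lem:acymodel} and \Cref{lem:equivacyclic}.
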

\begin{proof}
  Throughout the proof, let
  $w = \max\set{0,\width{\ont{O}}-1}$. 

  Suppose first that $Q$ is FO-rewritable and let
  $q = \bigvee_{i = 1}^n p_i$ be a UCQ equivalent to $Q$.

  \begin{claim}
    The UCQ $q$ is equivalent to a UCQ whose disjuncts are all of tree-width at
    most $w$.
  \end{claim}
  \begin{proof}
    Let $q'$ be a UCQ that contains a disjunct $p'$ iff
    \begin{enumerate*}[label={(\roman*)}]
    \item $p'$ has tree-width at most $w$,
    \item $p' \subseteq p_i$ for some $i = 1,\ldots,n$, and
    \item $p'$ is minimal with respect to these properties.
  \end{enumerate*}
  Moreover, we require from that no disjunct in $q'$ is homomorphically
  equivalent to another disjunct of $q'$. Thus, $q'$ is indeed a finite
  UCQ. We show that $q$ is equivalent to $q'$.

  Suppose first that $\db{D} \models q$. Then also $\db{D} \models Q$
  and, by \Cref{lem:treemodelfg}, there is an $\sche{S}$-database
  $\db{D}^\ast$ of tree-width at most $w$ such that
  $\db{D}^\ast \models Q$ and $\db{D}^\ast$ maps to $\db{D}$. Thus, also
  $\db{D}^\ast \models q$. Since $\db{D}^\ast$ has tree-width at most
  $w$, there is a disjunct $p$ in $q'$ and a
  $\db{D}' \subseteq \db{D}^\ast$ such that $p$ is homomorphically
  equivalent to $\db{D}'$. Thus $\db{D}' \models p$ and so
  $\db{D}^\ast \models p$. Since $\db{D}^\ast$ maps homomorphically to
  $\db{D}$, we have $\db{D} \models p$ and so $\db{D} \models q'$
  follows.

  Suppose now that $\db{D} \models q'$, i.e., $\db{D} \models p$ for
  some disjunct $p$ of $q'$. Since $p \subseteq p_i$ for some
  $i = 1,\ldots,n$, there is a homomorphism from $p_i$ to $p$. Hence, it
  follows that $\db{D} \models p_i$ and thus $\db{D} \models q$.
  \end{proof}

  Suppose now that each $p_i$ ($i = 1,\ldots,n$) has tree-width at most
  $w$.  Let $\delta_i = \tup{\ca{T}_i, \tup{X_{i,v}}_{v \in T_i}}$ be a
  tree decomposition of $p_i$ of width at most $w$. A \emph{variant} of
  $p_i$ is a CQ $p$ over $(\sche{S} \cup \set{C})$ that
  \begin{enumerate*}[label={(\roman*)}]
  \item results from $p_i$ by adding a set of atoms of the form
    $C(x_0,\ldots,x_w)$ with $\set{x_0,\ldots,x_w} \subseteq X_{i,v}$
    for some $v \in T_i$, and
  \item is acyclic.
  \end{enumerate*}
  We let $p_i'$ be the UCQ over $(\sche{S} \cup \set{C})$ that contains
  a disjunct for each variant of $p_i$. Moreover, we let
  $q' = \bigvee_{i = 1}^n p_i'$ and assume again that $q'$
  contains no two distinct disjuncts that are homomorphically
  equivalent. Obviously, $q'$ is a finite UCQ, and we claim that $q'$ is
  a UCQ-rewriting of $\translation{Q}{C}$.

  Indeed, suppose $\db{D}$ is an $(\sche{S} \cup \set{C})$-database such
  that $\db{D} \models \translation{Q}{C}$. According to
  \Cref{lem:acymodel}, there is an acyclic
  $(\sche{S} \cup \set{C})$-database $\db{D}^\ast$ such that
  $\db{D}^\ast \models \translation{Q}{C}$ and such that $\db{D}^\ast$
  maps homomorphically to $\db{D}$. By \Cref{lem:equivacyclic} we have
  $\db{D}^\ast \models Q$ as well. Let $\db{D}^\ast[\sche{S}]$ denote
  the database $\db{D}^\ast$ restricted to $\sche{S}$. Since $C$ does
  not appear in $Q$, we must have $\db{D}^\ast[\sche{S}] \models Q$ as
  well and so $\db{D}^\ast[\sche{S}] \models p_i$ for some
  $i = 1,\ldots,n$. It is now easy to check that there is a variant $p$
  of $p_i$ such that $\db{D}^\ast \models p$. Hence,
  $\db{D}^\ast \models q'$ and so $\db{D} \models q'$ as required.

  Conversely, suppose that $\db{D} \models q'$, i.e., $\db{D} \models p$
  for some CQ $p$ that is a variant of some $p_i$. By
  \Cref{lem:acymodel} there is an acyclic
  $(\sche{S} \cup \set{C})$-database $\db{D}^\ast$ such that
  $\db{D}^\ast \models p$ and $\db{D}^\ast$ homomorphically maps to
  $\db{D}$. Obviously, there is a homomorphism from $p_i$ to $p$, since
  $p$ is a variant and results from $p_i$ just by adding atoms. Hence,
  also $\db{D}^\ast \models p_i$ and thus $\db{D}^\ast \models q$
  follows. We then obtain $\db{D}^\ast \models Q$ and by
  \Cref{lem:equivacyclic} also $\db{D}^\ast \models \translation{Q}{C}$.
  Since $\translation{Q}{C}$ is closed under homomorphisms,
  $\db{D} \models \translation{Q}{C}$ follows.

  Suppose now that $\translation{Q}{C}$ is FO-rewritable and let
  $q = \bigvee_{i = 1}^n p_i$ be a UCQ equivalent to
  $\translation{Q}{C}$. We show that $Q$ is FO-rewritable as well. In
  this case, we can assume that $q$ is actually a disjunction of acyclic
  CQs, a proof of this fact can be obtained similarly to the claim above
  and is left to the reader. Let $p_i'$ be the CQ that results from
  $p_i$ by dropping all atoms of the form $C(x_0,\ldots,x_w)$. Moreover,
  let $q' = \bigvee_{i=1}^n p_i'$. We claim that $q'$ is a UCQ
  equivalent to $Q$. 

  Suppose first that $\db{D} \models Q$. By \Cref{lem:treemodelfg} there
  is an $\sche{S}$-database $\db{D}^\ast$ of tree-width at most $w$ such
  that $\db{D}^\ast \models Q$ and $\db{D}^\ast$ homomorphically maps to
  $\db{D}$. Fix a tree decomposition
  $\delta^\ast = \tup{\ca{T},\tup{X_v}_{v \in T}}$ of $\db{D}^\ast$. We
  can turn $\db{D}^\ast$ into an acyclic
  $(\sche{S} \cup \set{C})$-database by adding to $\db{D}^\ast $ all
  facts of the form $C(a_0,\ldots,a_w)$ such that
  $\set{a_0,\ldots,a_w} \subseteq X_v$ for some $v \in T$. Call the
  resulting database $\db{D}'$. Obviously, $\db{D}' \models Q$ and since
  $\db{D}'$ is acyclic, we obtain $\db{D}' \models \translation{Q}{C}$
  by \Cref{lem:equivacyclic}. Therefore, $\db{D}' \models p_i$ for some
  $i = 1,\ldots,n$. Since $p_i'$ contains no atoms of the form
  $C(x_0,\ldots,x_w)$, it follows that $\db{D}' \models p_i'$ as well
  and so $\db{D}^\ast \models p_i'$. Thus $\db{D} \models p_i'$ and so
  $\db{D} \models q'$ as required.

  Conversely, suppose now that $\db{D} \models q'$, i.e.,
  $\db{D} \models p_i'$ for some $i = 1,\ldots,n$. Notice that $p_i'$
  has tree-width at most $w$ by construction. Using
  \Cref{lem:treemodelfg}, we infer that there is an $\sche{S}$-database
  $\db{D}^\ast$ such that $\db{D}^\ast \models p_i'$ and such that there
  is a homomorphism $h$ from $\db{D}^\ast$ to $\db{D}$. Now fix a tree
  decomposition $\delta_i = \tup{\ca{T}, \tup{X_v}_{v \in T}}$ of $p_i'$
  of width at most $w$. We can see $\delta_i$ also as a tree
  decomposition of $p_i$ that witnesses that $p_i$ is acyclic. Now we
  extend $\db{D}^\ast$ to an $(\sche{S} \cup \set{C})$-database as
  follows. Suppose $C(x_0,\ldots,x_w)$ occurs in $p_i$ but has been
  deleted from $p_i'$. Then $\set{x_0,\ldots,x_w} \subseteq X_v$ for
  some $v \in T$. We can assume w.l.o.g~that
  $\dom{h} \cap \set{x_0,\ldots,x_w} \neq \emptyset$; otherwise we can
  drop that atom from $p_i$. Pick a
  $y \in \dom{h} \cap \set{x_0,\ldots,x_w}$.  Now we add to
  $\db{D}^\ast$ the atom $C(a_0,\ldots,a_w)$, where
  $a_i =  h(x_i)$ if $x_i \in \dom{h}$ and $a_i = h(y)$
  otherwise. We repeat this construction for all occurrences of an atom
  of the form $C(x_0,\ldots,x_w)$ in $p_i$. Call the resulting database
  $\db{D}'$. It is clear that $h$ is a homomorphism from $p_i$ to
  $\db{D}'$. Notice also that $\db{D}'[\sche{S}] = \db{D}^\ast$. Now
  since $\db{D}' \models p_i$, we must have
  $\db{D}' \models \translation{Q}{C}$ and so by \Cref{lem:acymodel}
  there is an acyclic $(\sche{S} \cup \set{C})$-database $\db{D}''$ such
  that $\db{D}'' \models \translation{Q}{C}$ and $\db{D}''$
  homomorphically maps to $\db{D}'$. By \Cref{lem:equivacyclic} we have
  $\db{D}'' \models Q$ as well and hence also $\db{D}' \models Q$. But
  $\db{D}'[\sche{S}] = \db{D}^\ast$, whence $\db{D}^\ast \models Q$ follows
  since $C$ does not occur in $Q$. Since $\db{D}^\ast$ homomorphically
  maps to $\db{D}$, we obtain $\db{D} \models Q$ as required.
\end{proof}

\subsection{Proof of \Cref{the:main-result}}

\underline{\smash{\textit{Lower bounds.}}} Since, according
to~\cite{BiHLW16}, FO-rewritability for the class
$\tup{\ca{ELI},\class{BCQ}}$ is already hard for \twoexp\ according to
\cite{BiHLW16}, the following hardness results follow immediatlely:
\begin{itemize}
\item \twoexp-hardness for $\forew{\class{C}}{\class{Q}}$ with
  $\class{C} \in \set{\class{G},\class{FG}}$ and
  $\class{Q} \in \set{\class{CQ},\class{UCQ}}$;
\item \twoexp-hardness for $\forew{\class{FG}}{\class{AQ}_0}$.
\end{itemize}
Moreover, in~\cite{BiLW13}, it is shown that $\forew{\ca{ELI}}{AQ_0}$
is \expo-hard. Therefore, for OMQs of bounded arity, \expo-hardness for $\forew{\class{G}}{\class{AQ}_0}$ follows.

The only missing lower bound is therefore the \twoexp\ lower bound for
$\forew{\class{G}}{\class{AQ}_0}$. 

Let $Q_1$ and $Q_2$ be Boolean OMQs with data schema $\sche{S}$. We say
that $Q_1$ is \emph{contained} in $Q_2$, if $\db{D} \models Q_1$ implies
$\db{D} \models Q_2$ for every $\sche{S}$-database $\db{D}$. We are
going to use the following result which is implicit in~\cite{BaRV14}:

\begin{theorem}
\label{thm:pablo}
The problem of deciding whether a OMQ $Q_1 = \tup{\sche{S},\ont{O},G_1}$
from $\tup{\class{G},\class{AQ}_0}$ is contained in an OMQ
$Q_2 = \tup{\sche{S},\ont{O},G_2}$ is hard for \twoexp. This is true
even for the case where $Q_2$ is FO-rewritable.
\end{theorem}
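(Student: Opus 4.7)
The plan is to prove Theorem~\ref{thm:pablo} by inspecting the \twoexp-hardness reduction for containment of OMQs from $\tup{\class{G},\class{AQ}_0}$ developed in~\cite{BaRV14}, and verifying that the right-hand side OMQ $Q_2$ produced by that reduction is FO-rewritable (or can be replaced by an FO-rewritable variant without affecting the complexity).

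At a high level, the reduction encodes acceptance of an alternating Turing machine $M$ operating with $2^{p(n)}$ space for some polynomial $p$, which is a \twoexp-complete problem via the standard alternation--space correspondence. Given $M$ and input $w$, one constructs $Q_1 = \tup{\sche{S},\ont{O},G_1}$ whose ontology $\ont{O}$ implements, using guarded TGDs, a ``proof system'' that derives $G_1$ over a database $\db{D}$ iff $\db{D}$ encodes an accepting computation tree of $M$ on $w$ (the exponential address space is realized by a logarithmic number of bit-predicates, and the tree of alternations is propagated bottom-up by guarded rules whose guards are the computation-node atoms). The right-hand side OMQ $Q_2 = \tup{\sche{S},\ont{O},G_2}$ uses the same ontology $\ont{O}$ to derive $G_2$ whenever $\db{D}$ contains some local inconsistency, such as two adjacent tape cells violating the transition relation of $M$, or an alternation node whose children are labeled incompatibly with the machine's control state. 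Then $Q_1 \subseteq Q_2$ holds iff every $\db{D}$ encoding an accepting computation already contains such an inconsistency, which happens iff $M$ rejects $w$.

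The key observation, which makes the result ``implicit'' in the cited work, is that the inconsistency checks performed by $Q_2$ are of constant size, involving only a bounded neighborhood of the database. Consequently, $Q_2$ is equivalent to the UCQ obtained by enumerating all local inconsistency patterns, and hence is FO-rewritable by Theorem~\ref{pro:semanticmain} (condition 2 holds with $k$ equal to the maximum pattern size). The main obstacle will be to confirm that the original construction in~\cite{BaRV14} already has this local character, or otherwise to slightly modify it so that all unbounded recursion---over the exponential tape coordinates and the tree of alternations---is carried out by $\ont{O}$ when deriving $G_1$, leaving $Q_2$ as a bounded-depth local check. Since $Q_1$ and $Q_2$ share the same ontology, this redistribution of work can be done without blowing up the reduction, and hence containment remains \twoexp-hard even under the restriction that $Q_2$ is FO-rewritable, yielding Theorem~\ref{thm:pablo}.
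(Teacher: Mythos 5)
Your proposal does not actually prove the statement; it defers the one thing that needs proving. The paper obtains \Cref{thm:pablo} by using the result of~\cite{BaRV14} as a black box: there it is shown that deciding containment of a \emph{guarded Datalog} program in a \emph{Boolean acyclic UCQ} is \twoexp-hard, and the work that remains is purely a translation step --- a guarded Datalog program is already an OMQ from $\tup{\class{G},\class{AQ}_0}$, a Boolean acyclic UCQ can be rewritten as an OMQ from $\tup{\class{G},\class{AQ}_0}$ by unfolding each strictly acyclic disjunct into guarded rules with fresh auxiliary predicates, the two rule sets can be merged into one shared ontology $\ont{O}$ because their non-data predicates are disjoint, and the resulting $Q_2$ is FO-rewritable simply because the original UCQ is itself an FO-rewriting of it. Your plan instead proposes to open up the construction of~\cite{BaRV14}, guesses at its internal shape (an ATM encoding with $Q_2$ detecting ``local inconsistencies''), and then explicitly concedes that ``the main obstacle will be to confirm'' that the construction has the claimed local character or else to modify it. That unconfirmed step is exactly the content of the theorem, so as it stands there is a genuine gap rather than a completed argument.

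Two further points are off target. First, your route to FO-rewritability of $Q_2$ --- constant-size inconsistency patterns, then \Cref{pro:semanticmain} --- is both unsubstantiated and unnecessary: in~\cite{BaRV14} the right-hand side is an acyclic UCQ whose disjuncts are of polynomial (not constant) size, and no locality analysis is needed, since any UCQ is trivially an FO-rewriting of the corresponding OMQ. Second, the step that genuinely requires care is one you do not address: expressing the error-checking queries as \emph{guarded} TGDs with the $0$-ary head $G_2$. A CQ body with a $0$-ary head is in general only frontier-guarded; guardedness is obtained precisely because the UCQ in~\cite{BaRV14} is acyclic, so each disjunct can be unfolded into a linear number of guarded rules using auxiliary predicates (the same unfolding used for treeification in \Cref{sec:treeification}), and one must also check that merging these rules with the ontology of $Q_1$ into a single $\ont{O}$ does not alter either query. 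Your ``bounded neighborhood'' claim does not by itself yield guardedness, and your description of~\cite{BaRV14} as already being a reduction for containment of $\tup{\class{G},\class{AQ}_0}$ OMQs misstates what is proved there --- bridging that difference is precisely the (short but necessary) argument the paper gives.
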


\begin{remark}
  In~\cite{BaRV14}, a slightly different statement is proved. The
  authors prove in fact that deciding whether a \emph{guarded Datalog}
  program is contained in a Boolean acyclic UCQ is hard for
  \twoexp. Guarded Datalog can easily be seen as a fragment of
  $\tup{\class{G},\class{AQ}_0}$. Moreover, a Boolean acyclic UCQ can
  easily be written as an OMQ from $\tup{\class{G},\class{AQ}_0}$
  (cf.~the discussion of ``unfolding'' strictly acyclic queries in the
  definition of treeifications).
\end{remark}

To prove that $\forew{\class{G}}{\class{AQ}_0}$ is hard for \twoexp, we
are going to reduce the problem mentioned in \Cref{thm:pablo} to
$\forew{\class{G}}{\class{AQ}_0}$.

Let $Q_1 = \tup{\sche{S}, \ont{O}_1,G_1}$ and
$Q_2 = \tup{\sche{S},\ont{O}_2, G_2}$ be as in the hypothesis of
\Cref{thm:pablo}. Without loss of generality, we may assume that the
predicates $Q_1$ and $Q_2$ use and that do not appear in $\sche{S}$ are
distinct. We are going to construct an OMQ $Q'$ that falls in
$\tup{\class{G},\class{AQ}_0}$ such that $Q'$ is FO-rewritable iff $Q_1$
is contained in $Q_2$.

Let $Q' = \tup{\sche{S}, \ont{O}',G_2}$, where
\begin{itemize}
  \item $\sche{S}' = \sche{S} \cup \set{R/2, A/1,B/1}$;
  \item $\ont{O}'$ is the union of $\ont{O}_1$ and $\ont{O}_2$ plus the
    rules
  \begin{align*}
    R(x,y), A(y) &\limpl A(x),\\
    A(x),B(x),G_1 &\limpl G_2.
  \end{align*}
\end{itemize}
Notice that $G_2$ is also the query component of $Q'$.

\begin{lemma}
  \label{lem:reduction}
  $Q_1$ is contained in $Q_2$ iff $Q'$ is FO-rewritable.
\end{lemma}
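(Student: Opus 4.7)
My plan is to prove both directions of the equivalence.

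For the \emph{forward} direction, assume $Q_1$ is contained in $Q_2$. I would show that $Q'$ is semantically equivalent to $Q_2$ over every $\sche{S}'$-database, from which FO-rewritability of $Q'$ follows since $Q_2$ is FO-rewritable by hypothesis. The inclusion $Q_2 \subseteq Q'$ is immediate since $\ont{O}_2 \subseteq \ont{O}'$. For the reverse, I analyse how $G_2$ is derived in the chase of $(\db{D}, \ont{O}')$: the rule concluding $G_2$ is either an $\ont{O}_2$-rule, in which case its body involves only $\sche{S} \cup \sig{\ont{O}_2}$-atoms (recall the predicates of $\ont{O}_1$ and $\ont{O}_2$ outside $\sche{S}$ are disjoint), and the entire subderivation stays internal to $\ont{O}_2$ so $\db{D} \models Q_2$; or it is the bridge rule $A(x), B(x), G_1 \limpl G_2$, which requires $G_1$ to be entailed, hence $\db{D} \models Q_1$, and containment yields $\db{D} \models Q_2$.

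For the \emph{backward} direction, I argue the contrapositive via Theorem~\ref{pro:semanticmain}. Assuming $Q_1 \not\subseteq Q_2$, fix an $\sche{S}$-database $\db{D}_0$ with $\db{D}_0 \models Q_1$ and $\db{D}_0 \not\models Q_2$. By Lemma~\ref{lem:treemodelfg} together with homomorphism-closure of $Q_2$, we may take $\db{D}_0$ of bounded tree-width. For each $k \geq 0$, let
\[
\db{D}_k = \db{D}_0 \cup \{R(c, a_1), R(a_1, a_2), \ldots, R(a_{k-1}, a_k), A(a_k), B(c)\}
\]
with $c, a_1, \ldots, a_k$ fresh. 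Combining a tree decomposition of $\db{D}_0$ with a path decomposition of the new chain shows $\tw{\db{D}_k} \leq \width{\sche{S}'} - 1$ (possibly after padding $R$ so that $\width{\sche{S}'}$ matches the arities in $\ont{O}_1$), and $\db{D}_k \models Q'$: the cartwheel rule propagates $A(a_k)$ back to $A(c)$, which combined with $B(c)$ and the derivable $G_1$ triggers the bridge rule.

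The crucial observation is that any $\db{D}' \subseteq \db{D}_k$ satisfying $Q'$ must contain all $k+2$ cartwheel atoms. Since $\sig{\ont{O}_2}$ is disjoint from $\{R, A, B\}$, we have $\db{D}_k \models Q_2$ iff $\db{D}_0 \models Q_2$, which is false; hence $\db{D}' \not\models Q_2$ and $G_2$ can only be produced by the bridge rule. This forces $B(c) \in \db{D}'$ and the entailment of $A(c)$, but the only source of $A$ in $\db{D}_k$ is $A(a_k)$, so the propagation must traverse the entire $R$-chain, placing every cartwheel atom in $\db{D}'$. Letting $k \to \infty$ contradicts condition~2 of Theorem~\ref{pro:semanticmain}, so $Q'$ is not FO-rewritable. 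The principal obstacle lies in these ``bridge-only'' and ``full-chain'' steps: both hinge on the syntactic separation between $\sig{\ont{O}_1} \setminus \sche{S}$, $\sig{\ont{O}_2} \setminus \sche{S}$, and the fresh $\{R, A, B\}$, which must be exploited carefully to rule out spurious derivation paths.
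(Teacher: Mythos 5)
Your proof follows the paper's argument essentially verbatim: for the ``containment implies rewritable'' direction you show, exactly as the paper does, that the UCQ-rewriting of $Q_2$ is also a rewriting of $Q'$ via a case analysis on how $G_2$ is derived, and for the converse you attach the same $B$--$R$-chain--$A$ gadget to a bounded-tree-width witness of non-containment and invoke Theorem~\ref{pro:semanticmain}. The differences are cosmetic (citing Lemma~\ref{lem:treemodelfg} rather than Lemma~\ref{lem:treemodel}, the unnecessary ``padding'' of $R$), and your claim that a $G_2$-derivation not ending in the bridge rule already yields $\db{D} \models Q_2$ is precisely the step the paper itself asserts without further justification.
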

\begin{proof}
  Assume first that $Q_1$ is not contained in $Q_2$. Then there is an
  $\sche{S}$-database $\db{D}$ such that $\db{D} \models Q_1$ and
  $\db{D} \not\models Q_2$. By \Cref{lem:treemodel}, there is a
  $\db{D}^\ast$ of tree-width at most $\max\set{0, \width{\sche{S}}-1}$
  such that $\db{D}^\ast \models Q_1$. Moreover, there also is a
  homomorphism from $\db{D}^\ast$ to $\db{D}$. Since $Q_2$ is closed
  under homomorphisms, we must also have $\db{D}^\ast \not\models Q_2$.
  For each $k > 0$, let $\db{D}_k$ be the $\sche{S}'$-database extending
  $\db{D}^\ast$ with the facts
  \begin{align*}
    B(a_0),R(a_0,a_1),\ldots,R(a_{k-1},a_k),A(a_k),
  \end{align*}
  where $a_0,\ldots,a_k$ do not occur in $\adom{\db{D}^\ast}$. It is
  easy to check that $\db{D}_k \models Q'$ for all $k > 0$. Moreover, no
  proper subset of $\db{D}_k$ satisfies $Q'$. By virtue of
  \Cref{pro:semanticmain}, $Q'$ is thus not FO-rewritable.

  Conversely, suppose that $Q_1$ is contained in $Q_2$. Recall that $Q_2$
  is FO-rewritable and, therefore, there is a UCQ $q$ over $\sche{S}$
  that is equivalent to $Q_2$. We claim that $q$ is a UCQ-rewriting for
  $Q'$ as well. 

  Indeed, suppose first that $\db{D} \models q$ for some
  $\sche{S}'$-database $\sche{S}$. Since $q$ uses only symbols from
  $\sche{S}$, we obtain that $\db{D}[\sche{S}] \models q$ as well. Since
  $q$ is equivalent to $Q_2$, we get $\db{D}[\sche{S}] \models Q_2$ and,
  by construction of $Q'$, so $\db{D} \models Q'$.

  Suppose now that $\db{D} \models Q'$ for some $\sche{S}'$-database
  $\db{D}$. By construction of $Q'$, we must then have
  $\db{D} \models Q_2$ or $\db{D} \models Q_1$. In the former case, we
  are done since $Q_2$ and $q$ are equivalent. In the latter case, we
  get $\db{D}[\sche{S}] \models Q_1$ whence
  $\db{D}[\sche{S}] \models Q_2$ since $Q_1$ is contained in
  $Q_2$. Therefore also $\db{D}[\sche{S}] \models q$ and thus
  $\db{D} \models q$. This proves the claim.
\end{proof}

It is clear that $Q'$ can be constructed from $Q_1$ and $Q_2$ in
polynomial time. Therefore, \twoexp-hardness for
$\forew{\class{G}}{\class{AQ}_0}$ follows by \Cref{lem:reduction}.

\medskip
\noindent
\underline{\smash{\textit{Upper bounds.}}}
We shall now prove that $\forew{\class{FG}}{\class{UCQ}}$ is in
\twoexp. Following a similar result for description logics
in~\cite{BiHLW16}, we first show that we can focus on Boolean UCQs:

\begin{lemma}
  \label{lem:ucqtoubcq}
  Let $\class{C} \in \set{\class{FG},\class{G}}$. Then
  $\forew{\class{C}}{\class{UCQ}}$ can be reduced in polynomial time to
  $\forew{\class{C}}{\class{UBCQ}}$.
\end{lemma}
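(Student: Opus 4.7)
The plan is to internalize the answer variables of the input UCQ using fresh unary marker predicates, producing a Boolean OMQ whose ontology, and hence whose class $\class{C}$, is unchanged. Given $Q = \tup{\sche{S},\ont{O},q(\ve{x})} \in \tup{\class{C},\class{UCQ}}$ with $\ve{x} = (x_1,\ldots,x_k)$ and $q = \bigvee_{i=1}^n q_i(\ve{x})$, I would pick fresh unary predicates $P_1,\ldots,P_k$ disjoint from $\sche{S} \cup \sig{\ont{O}}$, put $\sche{S}' = \sche{S} \cup \set{P_1,\ldots,P_k}$, and form the Boolean UCQ
\[
q' \; = \; \bigvee_{i=1}^n \exists \ve{x}\,\bigl(q_i(\ve{x}) \land P_1(x_1) \land \cdots \land P_k(x_k)\bigr).
\]
The reduction then outputs $Q' = \tup{\sche{S}',\ont{O},q'} \in \tup{\class{C},\class{UBCQ}}$, which is manifestly computable in polynomial time and preserves the ontology class since $\ont{O}$ is not altered.

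The crux is the following equivalence: for every $\sche{S}$-database $\db{D}$ and every tuple $\ve{a} \in \adom{\db{D}}^k$, setting $\db{D}^+_{\ve{a}} = \db{D} \cup \set{P_i(a_i) \mid 1 \leq i \leq k}$, one has $\db{D} \models Q(\ve{a})$ iff $\db{D}^+_{\ve{a}} \models Q'$. The forward direction is direct: any model $\inst{J}' \supseteq \db{D}^+_{\ve{a}}$ of $\ont{O}$ extends $\db{D}$, hence satisfies $q(\ve{a})$, and already contains the markers $P_i(a_i)$ needed to satisfy $q'$. For the converse, given any model $\inst{J} \supseteq \db{D}$ of $\ont{O}$, the instance $\inst{J}^+ = \inst{J} \cup \set{P_i(a_i) \mid 1 \leq i \leq k}$ is still a model of $\ont{O}$, because the $P_i$'s do not occur in $\ont{O}$, and extends $\db{D}^+_{\ve{a}}$; hence $\inst{J}^+ \models q'$, and since the only $P_i$-facts in $\inst{J}^+$ are the markers $P_i(a_i)$, the witnessing tuple of $q'$ is forced to be $\ve{a}$, yielding $\inst{J} \models q(\ve{a})$ (note that $q$ itself does not mention the $P_i$'s).

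From this equivalence the reduction is immediate in both directions. If $Q$ admits an FO-rewriting $\varphi(\ve{x})$ over $\sche{S}$, then $\exists \ve{x}\bigl(P_1(x_1) \land \cdots \land P_k(x_k) \land \varphi(\ve{x})\bigr)$ is readily checked to be an FO-rewriting of $Q'$ over $\sche{S}'$. Conversely, given an FO-rewriting $\psi$ of $Q'$ over $\sche{S}'$, obtain $\varphi(\ve{x})$ from $\psi$ by syntactically replacing each atom of the form $P_i(y)$ by the equality $y = x_i$; evaluating $\varphi(\ve{a})$ over $\db{D}$ then amounts to evaluating $\psi$ over $\db{D}^+_{\ve{a}}$, which by the key equivalence equals $Q(\ve{a})$. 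The only delicate point, and the main thing to verify carefully, is the freshness of the $P_i$'s: because they appear neither in $\sche{S}$ nor in $\ont{O}$, marker facts can be added and removed without interfering with the ontology, which is precisely what drives both directions of the equivalence.
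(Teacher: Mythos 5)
Your reduction coincides with the paper's: fresh unary markers $P_1,\ldots,P_k$ added to every disjunct, the ontology left untouched, and the same candidate rewriting $\exists\ve{x}\,(P_1(x_1)\land\cdots\land P_k(x_k)\land\varphi(\ve{x}))$ for the direction from a rewriting of $Q$ to one of $Q'$. Where you genuinely diverge is the converse direction: the paper first uses closure under homomorphisms together with Rossman's theorem to replace an arbitrary FO-rewriting of $Q'$ by a Boolean UCQ and then strips the marker atoms and their quantifiers, whereas you take an arbitrary FO-rewriting $\psi$ of $Q'$ and substitute each atom $P_i(y)$ by the equality $y=x_i$, exploiting that $\db{D}$ under $x_i\mapsto a_i$ interprets $P_i$ exactly as $\{a_i\}$ in $\db{D}^+_{\ve{a}}$. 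This interpretation-style argument is more elementary (no preservation theorem needed) and works for any FO-rewriting, at the price of producing a rewriting with equalities rather than a UCQ; the paper's route buys the syntactically nicer UCQ rewriting of $Q$. Two points deserve tightening, though neither is a genuine gap. First, in the converse of your key equivalence the instances $\inst{J}\supseteq\db{D}$ quantified over by the certain-answer semantics may already contain $P_i$-facts, so the claim that the only $P_i$-facts of $\inst{J}^+$ are the markers is not automatic; fix this by first deleting all $P_i$-facts from $\inst{J}$ (harmless, since the $P_i$ occur neither in $\ont{O}$ nor in $q$) or by arguing over the chase. Second, your key equivalence only covers the marked databases $\db{D}^+_{\ve{a}}$, while the forward direction must be verified over arbitrary $\sche{S}'$-databases, which may contain several or no marker facts per $P_i$ and marker-only domain elements; one needs the slightly more general claim that $\db{D}'\models Q'$ iff some tuple $\ve{a}$ carries all markers in $\db{D}'$ and the $\sche{S}$-part of $\db{D}'$ entails $Q(\ve{a})$, and it is safest to take $\varphi$ to be a UCQ (or otherwise domain-independent) rewriting so that marker-only elements cannot affect its evaluation --- the paper's own sketch is equally terse on exactly this point.
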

\begin{proof}[sketch]
  Let $Q = \tup{\sche{S},\ont{O},q(\ve{x})}$ be an OMQ from
  $\tup{\class{C},\class{UCQ}}$ with $\ve{x} = x_1,\ldots,x_n$. We let
  $\sche{S}' = \sche{S} \cup \set{A_1,\ldots,A_n}$, where
  $A_1,\ldots,A_n$ are fresh unary predicates. Let $q'(\ve{x})$ be the
  UCQ that results from $q(\ve{x})$ by adding the conjunction
  $A_1(x_1) \land \cdots \land A_n(x_n)$ to every disjunct of
  $q(\ve{x})$. Let
  $Q' = \tup{\sche{S}',\ont{O},\exists\ve{x}\,q'(\ve{x})}$. It
  is not hard to check that $Q$ is FO-rewritable iff $Q'$ is. 

  Indeed, if
  $\varphi_Q(x_1,\ldots,x_n)$ is an FO-rewriting of $Q$, then
  $\exists x_1,\ldots,x_n\,(\varphi_Q(x_1,\ldots,x_n) \land A_1(x_1)
  \land \cdots \land A_n(x_n))$ is one of $Q'$. 

  Conversely, if $Q'$ is FO-rewritable then there is a Boolean UCQ $p'$
  that is equivalent to $Q'$. Now, for any $\sche{S}'$-database
  $\db{D}$, $\db{D} \models q'$ iff there are
  $a_1,\ldots,a_n \in \adom{\db{D}}$ such that
  $A_1(a_1),\ldots,A_n(a_n) \in \db{D}$ and
  $\db{D} \models Q(a_1,\ldots,a_n)$. Let $p$ be the UCQ that results
  from $p'$ by removing all occurrences of $A_i(x_i)$ and the associated
  existential quantifier $\exists x_i$. It is easy to see that $p$ is a
  UCQ-rewriting of $Q$.
\end{proof}

Now consider an OMQ $Q = \tup{\sche{S},\ont{O},q}$ from
$\tup{\class{FG},\class{UBCQ}}$. In a first step, we transform $Q$ into
an equivalent OMQ $Q'$ that falls in
$\tup{\class{FG},\class{AQ}_0}$. This is easy: we simply choose a fresh
predicate $G$ of arity zero and add to $\ont{O}$ the rules $p \limpl G$
for every disjunct $p$ of $q$. Notice that $\ont{O}$ is still
frontier-guarded, since $q$ is Boolean. Call the resulting ontology
$\ont{O}'$, i.e., $Q' = \tup{\sche{S},\ont{O}', G}$.

Now we choose a fresh predicate $C$ of arity $\width{\ont{O}'}$. We then
construct the OMQ $\translation{Q'}{C}$ that has data schema
$\sche{S} \cup \set{C}$. This translation takes exponential time, and
the ontology of $\translation{Q'}{C}$ may be of exponential
size. However, as already mentioned in the main body of the paper, the
arity of each predicate occurring in $\translation{Q'}{C}$ is at most
$\width{\ont{O}'}$.

The OMQ $\translation{Q'}{C}$ falls in
$\tup{\class{G},\class{AQ}_0}$. We can, according to
\Cref{the:cost-automata-approach}, decide FO-rewritability for that
class in \twoexp, with a double exponential dependence only on the width
of the data schema. Since
$\width{\sche{S} \cup \set{C}} = \width{\ont{O}'}$, it follows that
FO-rewritability of $\translation{Q'}{C}$ can be decided in \twoexp,
where the second exponent of the run-time depends on $\width{\ont{O}'}$
only. Hence, we can decide whether $\translation{Q'}{C}$ if
FO-rewritable in \twoexp. Given that the construction of
$\translation{Q'}{C}$ (starting with $Q$) is, of course, also feasibly
in \twoexp, the fact that $\forew{\class{FG}}{\class{UBCQ}}$ is in
\twoexp\ follows by \Cref{lem:foequiv}. Using \Cref{lem:ucqtoubcq},
we obtain that $\forew{\class{FG}}{\class{UCQ}}$ is in \twoexp\ as well.



\begin{thebibliography}{}

\bibitem[\protect\citeauthoryear{Ajtai and Gurevich}{1994}]{AjGu94}
Mikl{\'{o}}s Ajtai and Yuri Gurevich.
\newblock Datalog vs first-order logic.
\newblock {\em J. Comput. Syst. Sci.}, 49(3):562--588, 1994.

\bibitem[\protect\citeauthoryear{Baget \bgroup \em et al.\egroup
  }{2011}]{BLMS11}
Jean-Fran\c{c}ois Baget, Michel Lecl{\`e}re, Marie-Laure Mugnier, and Eric
  Salvat.
\newblock On rules with existential variables: {W}alking the decidability line.
\newblock {\em Artif. Intell.}, 175(9-10):1620--1654, 2011.

\bibitem[\protect\citeauthoryear{B{\'{a}}r{\'{a}}ny \bgroup \em et al.\egroup
  }{2015}]{BaCS15}
Vince B{\'{a}}r{\'{a}}ny, Balder ten Cate, and Luc Segoufin.
\newblock Guarded negation.
\newblock {\em J. {ACM}}, 62(3):22:1--22:26, 2015.

\bibitem[\protect\citeauthoryear{Barcel{\'{o}} \bgroup \em et al.\egroup
  }{2014}]{BaRV14}
Pablo Barcel{\'{o}}, Miguel Romero, and Moshe~Y. Vardi.
\newblock Does query evaluation tractability help query containment?
\newblock In {\em PODS}, pages 188--199, 2014.

\bibitem[\protect\citeauthoryear{Benedikt \bgroup \em et al.\egroup
  }{2015}]{BeCCB15}
Michael Benedikt, Balder ten Cate, Thomas Colcombet, and Michael {Vanden Boom}.
\newblock The complexity of boundedness for guarded logics.
\newblock In {\em LICS}, pages 293--304, 2015.

\bibitem[\protect\citeauthoryear{Bienvenu \bgroup \em et al.\egroup
  }{2013}]{BiLW13}
Meghyn Bienvenu, Carsten Lutz, and Frank Wolter.
\newblock First-order rewritability of atomic queries in horn description
  logics.
\newblock In {\em IJCAI}, 2013.

\bibitem[\protect\citeauthoryear{Bienvenu \bgroup \em et al.\egroup
  }{2014}]{BCLW14}
Meghyn Bienvenu, Balder ten Cate, Carsten Lutz, and Frank Wolter.
\newblock Ontology-based data access: {A} study through disjunctive datalog,
  {CSP}, and {MMSNP}.
\newblock {\em {ACM} Trans. Database Syst.}, 39(4):33:1--33:44, 2014.

\bibitem[\protect\citeauthoryear{Bienvenu \bgroup \em et al.\egroup
  }{2016}]{BiHLW16}
Meghyn Bienvenu, Peter Hansen, Carsten Lutz, and Frank Wolter.
\newblock First order-rewritability and containment of conjunctive queries in
  horn description logics.
\newblock In {\em IJCAI}, pages 965--971, 2016.

\bibitem[\protect\citeauthoryear{Blumensath \bgroup \em et al.\egroup
  }{2014}]{BOW14}
Achim Blumensath, Martin Otto, and Mark Weyer.
\newblock Decidability results for the boundedness problem.
\newblock {\em Logical Methods in Computer Science}, 10(3), 2014.

\bibitem[\protect\citeauthoryear{Cal{\`{\i}} \bgroup \em et al.\egroup
  }{2012a}]{CaGL12}
Andrea Cal{\`{\i}}, Georg Gottlob, and Thomas Lukasiewicz.
\newblock A general datalog-based framework for tractable query answering over
  ontologies.
\newblock {\em J. Web Sem.}, 14:57--83, 2012.

\bibitem[\protect\citeauthoryear{Cal{\`{\i}} \bgroup \em et al.\egroup
  }{2012b}]{CaGP12}
Andrea Cal{\`{\i}}, Georg Gottlob, and Andreas Pieris.
\newblock Towards more expressive ontology languages: The query answering
  problem.
\newblock {\em Artif. Intell.}, 193:87--128, 2012.

\bibitem[\protect\citeauthoryear{Cal\`{\i} \bgroup \em et al.\egroup
  }{2013}]{CaGK13}
Andrea Cal\`{\i}, Georg Gottlob, and Michael Kifer.
\newblock Taming the infinite chase: Query answering under expressive
  relational constraints.
\newblock {\em J.~Artif.~Intell.~Res.}, 48:115--174, 2013.

\bibitem[\protect\citeauthoryear{Colcombet and Fijalkow}{2016}]{CoF16}
Thomas Colcombet and Nathana{\"{e}}l Fijalkow.
\newblock The bridge between regular cost functions and omega-regular
  languages.
\newblock In {\em ICALP}, pages 126:1--126:13, 2016.

\bibitem[\protect\citeauthoryear{Cosmadakis \bgroup \em et al.\egroup
  }{1988}]{CGKV88}
Stavros~S. Cosmadakis, Haim Gaifman, Paris~C. Kanellakis, and Moshe~Y. Vardi.
\newblock Decidable optimization problems for database logic programs
  (preliminary report).
\newblock In {\em STOC}, pages 477--490, 1988.

\bibitem[\protect\citeauthoryear{Fagin \bgroup \em et al.\egroup
  }{2005}]{FKMP05}
Ronald Fagin, Phokion~G. Kolaitis, Ren{\'{e}}e~J. Miller, and Lucian Popa.
\newblock Data exchange: semantics and query answering.
\newblock {\em Theor. Comput. Sci.}, 336(1):89--124, 2005.

\bibitem[\protect\citeauthoryear{Gaifman \bgroup \em et al.\egroup
  }{1993}]{GMSV93}
Haim Gaifman, Harry~G. Mairson, Yehoshua Sagiv, and Moshe~Y. Vardi.
\newblock Undecidable optimization problems for database logic programs.
\newblock {\em J. {ACM}}, 40(3):683--713, 1993.

\bibitem[\protect\citeauthoryear{Gottlob \bgroup \em et al.\egroup
  }{2014}]{GoOP14}
Georg Gottlob, Giorgio Orsi, and Andreas Pieris.
\newblock Query rewriting and optimization for ontological databases.
\newblock {\em {ACM} Trans. Database Syst.}, 39(3):25:1--25:46, 2014.

\bibitem[\protect\citeauthoryear{Lutz and Sabellek}{2017}]{LuSa17}
Carsten Lutz and Leif Sabellek.
\newblock Ontology-mediated querying with the description logic {EL:}
  trichotomy and linear datalog rewritability.
\newblock In {\em IJCAI}, pages 1181--1187, 2017.

\bibitem[\protect\citeauthoryear{Poggi \bgroup \em et al.\egroup
  }{2008}]{PLCG+08}
Antonella Poggi, Domenico Lembo, Diego Calvanese, Giuseppe {De Giacomo},
  Maurizio Lenzerini, and Riccardo Rosati.
\newblock Linking data to ontologies.
\newblock {\em J. Data Semantics}, 10:133--173, 2008.

\bibitem[\protect\citeauthoryear{Rossman}{2008}]{Ro08}
Benjamin Rossman.
\newblock Homomorphism preservation theorems.
\newblock {\em J. {ACM}}, 55(3):15:1--15:53, 2008.

\bibitem[\protect\citeauthoryear{Vardi}{1992}]{Va92}
Moshe~Y. Vardi.
\newblock Automata theory for database theoreticans.
\newblock In {\em Theoretical Studies in Computer Science}, pages 153--180,
  1992.

\bibitem[\protect\citeauthoryear{Vardi}{1998}]{Va98}
Moshe~Y. Vardi.
\newblock Reasoning about the past with two-way automata.
\newblock In {\em ICALP}, pages 628--641, 1998.

\end{thebibliography}

\end{document}
